\newcommand{\version}{December 20, 2010}
   \newenvironment{fboxalign}
   {\empheq[box=\fbox]{align}}
   {\endempheq}}
   \newenvironment{fboxalign}
   {\align}
   {\endalign}}
         \newcommand{\id}{\mathds{1}}}
         \let\mathds=\mathbb
         \newcommand{\id}{\mbox{1 \kern-.59em {\rm l}}}}
\let\one=\id
\let\startappendix=\appendix
\newcommand{\nocontentsline}[3]{}
\newcommand{\tocless}[3]{\bgroup\let\addcontentsline=\nocontentsline#1{#2}#3\egroup}
\newcommand{\Appendix}[1]{
  \refstepcounter{section}
  \section*{Appendix \thesection:\hspace*{1.5ex} #1}
  \addcontentsline{toc}{section}{Appendix \thesection}
}
\newcommand{\SubAppendix}[2]{\tocless\subsection{#1}{\label{#2}}}
\newtheorem{theorem}{Theorem}
\newtheorem{lemma}[theorem]{Lemma}
\newcommand{\qed}{\nobreak \ifvmode \relax \else
      \ifdim\lastskip<1.5em \hskip-\lastskip
      \hskip1.5em plus0em minus0.5em \fi \nobreak
      \vrule height0.75em width0.5em depth0.25em\fi}
\newenvironment{proof}[1][Proof]{\begin{trivlist}
\item[\hskip \labelsep {\bfseries #1}]}{\qed\end{trivlist}}
\newcommand{\be}{\begin{equation}}
\newcommand{\ee}{\end{equation}}
\newcommand{\eq}[1]{(\ref{#1})}
\def\nn{\nonumber}
\def\bea{\begin{eqnarray}}
\def\eea{\end{eqnarray}}
\def\beqa{\begin{eqnarray}} 
\def\eeqa{\end{eqnarray}} 
\def\beq{\begin{equation}} 
\def\eeq{\end{equation}}
\def\Tr{{\rm Tr}}
\def\a{\alpha}          
\def\b{\beta}           
\def\k{\kappa}
\def\l{\lambda} \def\L{\Lambda} 
\def\r{\rho}
\def\t{\tau}
\def\vp{\varphi}
\def\cA{{\cal A}}  
  \def\cF{{\cal F}}
 \def\cH{{\cal H}} 
  \def\cL{{\cal L}}
\def\cM{{\cal M}} \def\cN{{\cal N}} \def\cO{{\cal O}}
\def\cP{{\cal P}}
\newcommand{\sD}{\slashed{D}}
\newcommand{\Qt}{\tilde{Q}}
\newcommand{\lt}{\tilde{l}}
\newcommand{\kt}{\tilde{k}}
\def\qt{\tilde{q}}
\newcommand{\bG}{\bar{G}}
\def\L{{\bf l}}
\newcommand{\R}{\mathds{R}}
\newcommand{\Z}{\mathds{Z}}
\def\bit{\begin{itemize}}
\def\eit{\end{itemize}}
\def\({\left(}
\def\){\right)}
\def\diag{\mbox{diag}}
\def\pa{\partial} \def\del{\partial}
\newcommand{\tr}{\mbox{tr}}
\def\bcomment#1{}
\def\LNC{\L_{\textrm{NC}}}
\newcommand{\eqn}[1]{\begin{align}
                       #1
                       \end{align}
}
\newcommand{\Sin}[1]{\sin\!\left(\! \frac{ #1 }{2} \!\right)}
\newcommand{\Sinn}[2]{\sin^{#2}\!\left(\! \frac{ #1 }{2} \!\right)}
\newcommand{\SiN}[3]{\sin^{#3}\!\left(\! \tfrac{ #1 }{ #2 } \!\right)}
\newcommand{\Int}[1]{\int\!\!\frac{d^4 #1}{(2\pi)^4}}
\newcommand{\Intt}[1]{\int\!\!\frac{d^4 #1}{(2\pi\LNC^2)^2}}
\newcommand{\dott}[2]{#1 \cdot #2}
\newcommand{\moyal}{Groenewold-Moyal}
\newcommand{\nc}{non-com\-mu\-ta\-tive}
\newcommand{\naiv}{na\"ive}
\newcommand{\eqnref}[1]{Eqn.~(\ref{#1})}		
\newcommand{\secref}[1]{Section~\ref{#1}}		
\newcommand{\appref}[1]{Appendix~\ref{#1}}		
\newcommand{\inv}[1]{\frac{1}{#1}}				
\newcommand{\tinv}[1]{\tfrac{1}{#1}}
\newcommand{\co}[2]{[#1,#2]}						
\newcommand{\aco}[2]{[#1,#2]_+}						
\newcommand{\inttx}{\int\!\! \frac{d^4x}{(2\pi)^2}}	
\newcommand{\intg}{\int\!d^4x\sqrt{g}\,}				
\newcommand{\intG}{\int\!d^4x\sqrt{G}\,}				
\newcommand{\ba}{\bar \alpha}
\newcommand{\bt}{\bar t}
\renewcommand{\a}{\alpha}
\renewcommand{\b}{\beta}
\newcommand{\g}{\gamma}
\renewcommand{\d}{\delta}
\newcommand{\e}{\epsilon}
\newcommand{\vare}{\varepsilon}
\renewcommand{\th}{\theta}
\newcommand{\thb}{\bar\theta}
\renewcommand{\l}{\lambda}
\newcommand{\m}{\mu}
\newcommand{\n}{\nu}
\renewcommand{\r}{\rho}
\newcommand{\s}{\sigma}
\renewcommand{\t}{\tau}
\newcommand{\vph}{\varphi}
\newcommand{\G}{\Gamma}
\newcommand{\Th}{\Theta}
\renewcommand{\L}{\Lambda}
\newcommand{\Leff}{\Lambda_{\textrm{eff}}}
\renewcommand{\Xi}{\Xi}
\title{\begin{flushright}
       \small{UWThPh-2010-14}
       \end{flushright}
\vspace{3em}
Heat kernel expansion and induced  action\\[1ex]
 for the matrix model Dirac operator }
\author{Daniel N. Blaschke\footnote{daniel.blaschke@univie.ac.at}~, Harold Steinacker\footnote{harold.steinacker@univie.ac.at}~, Michael Wohlgenannt\footnote{miw@hep.itp.tuwien.ac.at}}
\date{\version}
\begin{document}

\maketitle

\begin{center}
\renewcommand{\thefootnote}{\fnsymbol{footnote}}
\textit{\footnotemark[1]\footnotemark[2]\footnotemark[3]University of Vienna, Faculty of Physics\\
Boltzmanngasse 5, A-1090 Vienna (Austria)
\\[0.3em]
\footnotemark[3]Vienna University of Technology, Institute for Theoretical Physics\\
Wiedner Hauptstrasse 8-10, A-1040 Vienna (Austria)}
\vspace{0.5cm}
\end{center}%
\begin{abstract}

We compute the quantum effective action induced by integrating out fermions in
Yang-Mills matrix models on a 4-dimensional background, expanded in powers of 
a gauge-invariant UV cutoff.
The resulting action is recast into the form of generalized matrix models, manifestly preserving the 
$SO(D)$ symmetry of the bare action. 
This provides {\nc} (NC) analogs of the Seeley-de Witt coefficients
for the emergent gravity which arises on NC branes, 
such as curvature terms. From the gauge theory point of view, this provides strong
evidence that the {\nc} $\cN=4$ SYM has a hidden $SO(10)$ symmetry even at the quantum level,
which is spontaneously broken by the space-time background. 
The geometrical view proves to be very powerful, and allows to predict 
non-trivial loop computations in the gauge theory.

\end{abstract}

\newpage
\tableofcontents

\section{Introduction}

The aim of this work is to study the quantum effective action induced by integrating out
fermions in Yang-Mills-type matrix models, using heat-kernel techniques.

The matrix models under consideration were introduced originally in the context of 
string theory, where they are viewed as non-perturbative definitions of superstrings
on flat $\R^{10}$. On the other hand, one can also consider {\nc} brane solutions 
(or generic brane configurations) in these matrix models. It is well-known that this leads to 
{\nc} gauge theory on the branes \cite{Aoki:1999vr}. Upon closer examination, 
it turns out that the $U(1)$ sector of this gauge theory can be understood in terms of geometry, 
and encodes an effective gravity theory
on the brane \cite{Steinacker:2007dq,Steinacker:2008ri,Steinacker:2010rh}. 
This gives a novel and direct gauge/gravity relation 
specific to the {\nc} setting.
However, to understand the dynamics of this emergent gravity, 
quantum effects must be taken into account. 
The reason is that quantization on non-trivial backgrounds leads to induced 
gravitational actions \cite{Sakharov:1967}. On a semi-classical level, this can be understood in terms 
of Seeley-de Witt coefficients. 

In order 
to properly understand the physical content of these models, one must study their
quantization at the level of matrix models, taking their {\nc} nature into account.
As an important step in this program, we study in this paper the quantum effective actions due to fermions,
and show that it can be recast in the form of generalized matrix models\footnote{While the quantization of 
Yang-Mills matrix models has been studied before
e.g. in \cite{Ishibashi:1996xs,Aoki:1999vr,Imai:2003jb,CastroVillarreal:2005uu,Steinacker:2003sd,Azuma:2005pm}, 
the results available so far are not very explicit, and not in the form of generalized matrix models.}. 
This is the language 
which is appropriate to understand the geometric aspects of these models. Moreover, it turns out to 
provide a powerful and predictive new tool for the description of {\nc} gauge theory, 
notably for the maximally supersymmetric $\cN=4$ gauge theory. 

We thus consider fermions coupled to a generic {\nc} background in the matrix model.
The matrices $X^a$ can be
considered as perturbations around the {\moyal} quantum plane $\R^4_\theta$. 
Hence, a scale of non-commutativity $\LNC$ is introduced, as will be explained in more detail in subsequent sections. 
We will then compute the quantum effective action induced by integrating out fermions. To this end we make a 
heat kernel expansion for $\sD^2$. For this expansion (more generally for the 
quantum effective action) to make sense, it is essential to consider
a UV cutoff $\L$ which satisfies the bound
\be
p^2 \L^2 \ll \LNC^4
\label{IR-condition}
\ee
for any momentum scale $p$ in the background. 
This condition \eq{IR-condition} guarantees that the UV/IR mixing is ``mild'',
so that the geometrical interpretation in terms of 
emergent gravity is expected to apply \cite{Steinacker:2008}.
Only in that case there is indeed a meaningful expansion of the fermionic effective action.
In contrast, the induced action appears to be pathological in the limit $\L \to \infty$. 
The physical motivation for such a 
``low'' cutoff comes from supersymmetric matrix models such as the IKKT model~\cite{Ishibashi:1996xs}
i.e. $\cN=4$ SYM, 
where such a cutoff could be provided by the SUSY breaking scale. Only such supersymmetric
models are expected to be well-behaved upon quantization.

Using this setup, we compute the heat kernel expansion of the 
Dirac operator with a generic 4-dimensional NC background, using a perturbative (Duhamel) expansion
and a covariant cutoff.
This allows to systematically obtain all terms in the induced action with 
any given operator dimension. Using the language of {\nc} gauge theory,
we compute in this way the complete induced action including all terms
of operator dimension 6 or less. Gauge invariance is recovered upon collecting 
the appropriate contributions. 
This results in an effective action for a {\nc} $U(1)$ gauge theory, incorporating
UV/IR mixing in a controlled way. 
In a second, crucial step, we show that this action can be recast in the form of an effective 
generalized matrix model. The resulting action has a manifest $SO(D)$ symmetry, which is completely hidden
in the gauge theory language. This is a remarkable and non-trivial result, which is very natural
from the geometric point of view of emergent gravity.

It is interesting to compare this with the results of \cite{Gayral:2006vd}, 
where the asymptotic expansion of a similar operator on the {\nc} torus
(note: the ``infinite-dimensional'' one, not the fuzzy torus) was studied. 
The result was found to depend crucially on number-theoretical properties of 
the non-commutativity parameter, and for a certain class of $\theta$ an asymptotic expansion 
of the heat-kernel was found to be quite similar to the commutative case.
However in these previous papers \cite{Gayral:2006vd,Vassilevich:2005vk}, 
the condition \eq{IR-condition} is not satisfied, so that their results are not related with ours. 
In contrast, the results of the present paper are robust and independent of any specific
properties of $\theta$. They are expected to apply also to 
the case of NC tori, assuming a similar cutoff.

The geometrical point of view of the matrix model 
and the $SO(D)$ symmetry turn out to be very powerful and predictive 
tools for the quantization. 
Indeed the complete ``vacuum energy'' term in the effective matrix model
can be inferred from a 2-line computation 
combined with geometrical insights. This completely predicts a series of highly 
non-trivial loop results in the gauge theory picture, in complete agreement
with our computations. The $SO(D)$ symmetry contains both space-time rotations 
as well as internal rotations, but 
--- most remarkably --- mixes the space-time and internal sector. 
More specifically, it relates the gauge fields with the scalar fields. 
It is somewhat related to extended supersymmetry, i.e. $\cN=4$ SUSY for $D=10$, 
since the internal symmetry $SO(6)$ coincides with the $SO(6)$ R-symmetry of $\cN=4$ SUSY.
However, the $SO(D)$ symmetry goes beyond SUSY, and leads to additional
restrictions on the effective action. This is manifest in the 
present computation of the heat kernel associated to the Dirac operator,
which amounts to a one-loop computation involving only fermions.
It underlines the fact that NC gauge theory is richer than commutative gauge theory.
We expect that this symmetry also extends to the non-Abelian $SU(N)$ sector, which
is less affected by UV/IR mixing~\cite{Armoni:2000xr} and should reduce to the commutative gauge theory
in a suitable limit. It is thus
natural to suspect some relation to the recent results on hidden symmetries of SUSY gauge theory 
\cite{Drummond:2009fd,Beisert:2010gn,Bern:2004bt,ArkaniHamed:2010kv}, 
although the specific relation is not clear at present.

Finally we should point out that although the induced action due to fermions is 
of course only one piece of the complete effective action, it is closely related to Connes spectral action
\cite{Chamseddine:1996zu}, which is most naturally computed using the heat kernel expansion. 
Our results thus provide the leading contributions to this spectral action for the matrix model  Dirac operator. 
The missing bosonic integral can be viewed as an integral over the backgrounds, 
thus providing a measure for the integration over geometries.

This paper is organized as follows. We first recall in \secref{sec:fermion-action} 
the basic definition of the Dirac operator and the geometrical interpretation of branes in
matrix models. The essential
steps and the novel aspects of the heat kernel expansion and the effective action
are explained in \secref{sec:strategy}.
The detailed and lengthy computation of the effective gauge theory action is 
given in Sections \ref{sec:heatkernel-details} and \ref{sec:effgaugethy}; 
a reader not interested in the details can jump to the main results
which are \eq{potential-collect}, \eq{2-field-collect-curv1} and 
\eq{2-field-collect-curv2}.
In \secref{sec:effMM} this effective gauge theory is rewritten as an effective 
matrix model, culminating in \eq{full-action-MM}
which is the main result of this paper.

\section{The fermionic action}
\label{sec:fermion-action}
Our starting point is the matrix model 
fermion action~\cite{Ishibashi:1996xs,Steinacker:2008a,Klammer:2009dj,Steinacker:2010rh} in Euclidean space 
\begin{align}
S_\Psi&= (2\pi)^2\Tr\Psi^\dagger\sD{\Psi}
= (2\pi)^2 \Tr\Psi^\dagger\g_a\co{X^a}{\Psi}
\,,
\label{Dirac-action}
\end{align}
where $X^a,\,\, a= 1,2,\ldots,D$ are Hermitian matrices, and
\be
\sD\Psi:=\g_a\co{X^a}{\Psi}
\,.
\label{Dirac-def}
\ee
Latin indices are pulled up/down with the flat embedding metric $g_{ab} = \d_{ab}$ of $\R^D$,  
and the $\g$-matrices form the usual Clifford algebra $\aco{\g_a}{\g_b}= 2\d_{ab}$.
The matrices can be interpreted as operators on a separable Hilbert space $\cH$.
This action is invariant under the following  symmetries:
\be
\begin{array}{lllll}
X^a \to U^{-1} X^a U\,,\quad  &\Psi \to U^{-1} \Psi U\,,\quad  &  U \in U(\cH)\,,\quad  & \mbox{gauge invariance,}  \\
X^a \to \L(g)^a_b X^b\,,\quad  &\Psi_\a \to \tilde \pi(g)_\a^\b \Psi_\b\,,\quad  & g \in \widetilde{SO}(D)\,,\; 
  & \mbox{rotational symmetry,}  \\
X^a \to X^a + c^a \one\,,\quad & \quad & c^a \in \R\,,\quad  & \mbox{translational symmetry,}
\end{array}
\label{transl-inv}
\ee
where the tilde indicates the corresponding spin group.
The induced effective action $\G[X]$ is defined as
\begin{align}
e^{-\G[X]}&=\int\! d\Psi d\Psi^\dagger e^{-S_\Psi}
=(\textrm{const.}) \exp\left(\inv{2}\Tr\log(\sD^2)\right)
\,,\nn\\
\sD^2\Psi&=\g_a\g_b\co{X^a}{\co{X^b}{\Psi}}
\,.
\end{align}
As a background,
we will consider matrices $X^a$ which define 4-dimensional NC spaces (branes) embedded in $\R^D$. 
The simplest example is the {\moyal} quantum plane $\R^4_\theta$, defined by $X^a = (\bar X^\mu,0)$   
where the $\bar X^\mu$  satisfy
\be
[\bar X^\mu,\bar X^\nu] = i \bar\theta^{\mu\nu}
= i\LNC^{-2} \left(\begin{array}{cccc}  0 & \a & 0 & 0 \\
                                 -\a & 0 & 0 & 0 \\
                                 0 & 0 & 0 & \pm\a^{-1}  \\
                                  0 & 0 & \mp\a^{-1} & 0 \end{array}\right)\, ;
\label{theta-standard-general-E}
\ee
we can assume this standard form of $\bar\theta^{\mu\nu}$ using a $SO(4)$
transformation if necessary. 
More generally, we assume that the matrices can be split as
\be
X^a = (X^\mu,\phi^i(X^\mu))
\label{eq:matrix-splitting}
\ee
where $X^\m$, $\m = 1,\ldots,4$ are considered as independent quantized coordinate ``functions''
satisfying generic commutation relations $[X^\mu,X^\nu] = i \theta^{\mu\nu}(X)$,
while the $\phi^i(X^\mu)$ are ``smooth'' functions of these coordinates.
As explained in \cite{Steinacker:2010rh}, there are two different interpretations of 
such a background.
First, the action \eq{Dirac-action} can be viewed as describing
fermions propagating on a (generally curved) brane $\cM^4\subset \R^D$, with effective metric
\begin{align}
G^{\mu\nu} &= \LNC^4\theta^{\mu\mu'}\theta^{\nu\nu'} g_{\mu\nu} \,,  &
g_{\mu\nu} &= \del_\mu x^a \del_\nu x^b g_{ab}
\,, \label{eq:def-Gges}
\end{align}
in the semi-classical limit where $X^a\sim x^a$ and $\theta^{\mu\nu}(X) \sim \theta^{\mu\nu}(x)$. 
Note that the fixed background metric $g_{ab} = \d_{ab}$ of $\R^D$
defines a scale, and all subsequent quantities 
are measured with respect to this scale. 
In that sense, $X^a$ has dimension length, 
and $\theta^{\mu\nu}$ encodes the non-commutativity scale
\be
\LNC^4 = \det{\th^{-1}_{\m\n}} \,,
\ee
which may depend on $x$; note also $\det{g_{\m\n}} = \det{G_{\m\n}}$.
Second, \eq{Dirac-action} can be viewed as describing fermions on
$\R^4_\theta$ coupled to {\nc} gauge fields and scalars, which arise through \eq{eq:general-X}. 
Hence the flat resp. free case
corresponds to the {\moyal} quantum plane $\R^4_\theta$.
To proceed, we need to regularize\footnote{Other regularizations might be easier to work with,
however we choose a covariant cutoff in order to ensure that both
gauge invariance as well as the $SO(D)$ symmetry are preserved. 
In general, we will be cavalier 
about precise mathematical definitions, aiming at physically meaningful and finite results.
In particular we will not worry about the ``trivial'' divergence
associated to the infinite volume of $\R^4$, which poses no problem 
in the Duhamel expansion.} 
the divergent functional determinant 
using a gauge-invariant cutoff as follows:
\begin{align}
\inv{2}\Tr\Big(\log\sD^2\Big)
\,\,\to\,\,-\inv{2}\Tr\int\limits_0^\infty\frac{d\a}{\a}
\,e^{-\a\sD^2} e^{-\inv{\a L^2}}
\,\,=:\,\, \Gamma_{L}[X] \, .
\label{TrLog-id}
\end{align}
Here $L$ is a cutoff of dimension `length', which 
essentially sets a lower limit $\a >\frac 1L$ for the $\a$ integral.
Although $X^a$ and $\sD$ have dimension `length' in the present setting, 
 this will amount to a UV cutoff
\be
\L := \LNC^2 L 
\label{eq:def-cutoff}
\ee
of dimension (length)$^{-1}$ in a NC background. This defines an effective 
(generalized) matrix model $\Gamma_{L}[X]$ in $X^a$,
which depends on the cutoff $L$ and satisfies the scaling relation
\be
\Gamma_{cL}[c X] = \Gamma_{L}[X] 
\,. \label{basic-scaling}
\ee
It corresponds to the induced action in NC gauge theory, resp. to the induced gravitational
action in emergent gravity.
Our goal is to compute this $\Gamma_{L}[X]$ explicitly, using a systematic approximation.

For completeness, we recall that 
the fermionic action \eq{Dirac-action} together with the bosonic action
\begin{align}
S_{YM} =-(2\pi)^2\Tr\left(\co{X^a}{X^b}\co{X_a}{X_b}\right)
\,
\label{YM-action}
\end{align}
defines the class of Yang-Mills matrix models. In particular, the
IKKT or IIB model \cite{Ishibashi:1996xs}  is 
obtained for $D=10$ imposing a Majorana-Weyl condition
on the fermions, and admits a maximal supersymmetry.

\section{Strategy of the heat kernel expansion}
\label{sec:strategy}

Before diving into the computations, we first explain the setup and the
essential ideas behind the complicated details.
We will take advantage of the two complementary points of view of the 
model, 1) as NC gauge theory and 2) as matrix model (for emergent gravity).
Gauge invariance is essential for both points of view. 
The second makes also the global $SO(D)$ symmetry manifest, 
which is hidden in the gauge theory point of view.
We will use the gauge theory point of view for the explicit (perturbative) computations,
and then recast the result in the matrix model language.

The form \eq{TrLog-id} of the induced action suggests to consider the associated 
heat kernel expansion 
\be
\Tr e^{-\a\sD^2} = \sum_{n}\, \a^{\frac{n-4}2}\, \Gamma^{(n)}[X]  \, .
\label{Seeley-deWitt}
\ee
The $\Gamma^{(n)}[X]$ are by construction gauge-invariant, 
and at least formally they are also invariant under
$SO(D)$. 
In the commutative case, 
the analog of \eq{Seeley-deWitt} involves a sum over positive $n$ only, 
and the Seeley-de Witt coefficients $\Gamma^{(n)}$ turn out to be
integrals of gauge-invariant densities over $\R^4$; cf.~\cite{Gilkey:1995mj,Vassilevich:2003xt}. 
This yields an effective action organized as
\be
\Gamma_{\L} \sim \L^4 \sum_{n\geq 0}\int d^4 x\sqrt{g}\, 
\cO\Big((\frac{p}{\L})^n\Big)\,,
\ee
where  $\cO\Big((\frac{p}{\L})^n\Big)$ stands
for some Lagrangian density involving $n$ powers of momentum
(due to background curvature, field strength, etc.), starting with 
the vacuum energy $\L^4\int d^4 x\sqrt{g}$.

In the NC case, things are much more subtle due to UV/IR mixing. 
In a perturbative expansion of NC gauge theory, one finds additional terms such as 
$e^{-p^2 \LNC^{-4}/\a}$ in \eq{Seeley-deWitt} originating from non-planar diagrams. 
These lead to a sum over arbitrary $n \in \Z$ in \eq{Seeley-deWitt}.
It is then not clear in general how to extract a meaningful limit 
or asymptotic series for $\L \to \infty$. 
This is the infamous UV/IR mixing problem (for a review see~\cite{Szabo:2001,Blaschke:2010kw} and references therein), 
which leads to various strange or pathological 
phenomena\footnote{For example, the heat kernel on the {\nc} torus was found 
to depend on number-theoretical properties of $\theta$ \cite{Gayral:2006vd}.}.

In the framework of emergent gravity \cite{Steinacker:2010rh}, this pathological UV/IR mixing 
is turned into a desirable feature, by restricting oneself to  
well-behaved models with sufficient supersymmetry, such as the $\cN=4$ Super-Yang Mills
model resp. the IKKT model~\cite{Ishibashi:1996xs}. That model is expected (and to some extent verified) 
to be UV finite just like its commutative counterpart \cite{Jack:2001cr},
and hence free of UV/IR mixing.
If the supersymmetry is (spontaneously or softly) broken at some scale $\L$, then a
mild form of UV/IR mixing arises from non-planar diagrams below this scale. This 
can be understood semi-classically in terms of induced gravity \cite{Steinacker:2008}, 
and this is what we want to 
compute in the present paper. We therefore consider the case of a {\em finite}
cutoff $\L$ resp. $L$, as expected in the full model due to 
SUSY breaking. We can then compute the induced action $\Gamma_L[X]$ \eq{TrLog-id}
by studying the heat kernel {\em not} in the limit $\L\to \infty$ but for finite 
$\L$, such that the external momenta $p$
(due to curvature, field strength, etc.) satisfy the condition
\be
 \epsilon_L(p):=\frac{p^2\L^2}{\LNC^4}  = p^2 L^2 \,\, \ll\,  1\, .
\label{IR-regime}
\ee
This characterizes the 
``semi-classical'' low-energy regime\footnote{Meaning that the phases in the loop integrals
are less than 1, i.e. UV/IR mixing is weak and within the semi-classical regime \cite{Steinacker:2008}.}. 
It certainly includes
the regime of interest for gravity, since both $\L$ and $\LNC$ are assumed to 
be physical high-energy scales, presumably related to the Planck scale.
Under this assumption, we can expand the UV/IR mixing terms such as 
\be
e^{-p^2\LNC^4/\a} = \sum_{m\geq 0} \frac 1{m!} (-p^2\LNC^4/\a)^m
\,\, \approx \,\, \sum_{m\geq 0} a_m \epsilon_L(p)^m
\ee 
replacing $\L^2 \geq \a^{-1}$ by  $\L^2$ 
as justified by the cutoff in \eq{TrLog-id}. 
Thus the heat-kernel expansion becomes an expansion of the form
\be
\Gamma_L \sim \L^4\sum_{n,l,k\geq 0} \int d^4 x\, \cO\left(\epsilon_L(p)^n (\frac{p^2}{\LNC^2})^l(\frac{p^2}{\L^2})^k \right)
\label{Seeley-deWitt-mod}
\ee
in powers
of three small parameters $\epsilon_L(p)$, $(p\th q) \sim \frac{p^2}{\LNC^2}$  and $\frac{p^2}{\L^2}$,
which makes sense provided
\eq{IR-regime} holds. This condition is very important, and our results no longer make sense in the limit
$\L \to \infty$ for fixed $\LNC$ as considered in  \cite{Gayral:2006vd},
because then $\epsilon_L(p)$ diverges.
A related observation on the appropriate definition of the heat kernel expansion
on fuzzy spaces was made in \cite{Sasakura:2004dq}.

A remark on the symmetries is in order.
By construction, the expansion of $\Gamma_L$ preserves gauge invariance 
at each order in $L^n$ resp. $\L^n$ as well as  the $SO(D)$ symmetry,  
at least formally. Although the background $\R^4_\th$ of course breaks both symmetries, 
these
symmetries are still realized in a non-linear way on the fluctuations, and should therefore
be respected in the quantum effective action. 
While this is quite evident for the gauge symmetry, the argument is not strictly valid for $SO(D)$,
since rotating the $X^\mu$ into the scalar fields leads to unbounded $\d\phi^i$
which might not be admissible. Nevertheless, it turns out that the effective 
actions obtained in this way do indeed respect this $SO(D)$ symmetry, as one 
would hope.

Finally, one may wonder about the relation with the commutative case.
The commutative limit should be approximated by imposing 
\be
\L \ll \LNC
\ee
(hence $\LNC L \ll 1$)
in addition to \eq{IR-regime}; in particular, we cannot send $\LNC \to \infty$ for fixed $L$.
Finally, non-Abelian gauge fields and scalar fields are naturally obtained within
the same model by replacing the ``single-brane'' background
by $n$ coinciding branes. However, then the following analysis needs to be refined, 
which should be worked out elsewhere.

\subsection{Perturbative expansion in NC gauge theory.}
\label{sec:perturb-expansion}

Our aim is to compute the leading (divergent) terms in the momentum expansion 
of $\Gamma_{L}[X]$ \eq{Seeley-deWitt-mod}.
For this purpose, we take the point of view of NC gauge theory on $\R^4_\th$,
and consider small fluctuations around $\R^4_\th$.
We accordingly split the matrices into background plus fluctuations,
\be
X^a = \begin{pmatrix}
 \bar X^\mu \\ 0
\end{pmatrix} \, + \, 
\begin{pmatrix}
 -\bar\th^{\mu\nu} A_\nu  \\   \phi^i
\end{pmatrix} \,,
\label{eq:general-X}
\ee
and treat $A_\mu = A_\mu(\bar X)$ and $\phi^i = \phi^i(\bar X)$ as gauge fields resp. scalar
fields on $\R^4_\th$.
In order to recover the standard dimensional assignment of quantum field theory, 
we define\footnote{Here and in most of the following, the scales
$\LNC$ and $\L$ will be defined by the Moyal-Weyl background.
Otherwise we will write $\L(x)$ etc.}
\be
\varphi^i:= \LNC^2 \phi^i 
\ee
which has dimension $\dim\varphi = L^{-1}$.
The corresponding field-theoretic normalization of the Dirac operator is 
given by $\LNC^{2} \sD$.
We can now perform a perturbative expansion of
\begin{align}
\inv{2}\Tr\left(\log\sD^2-\log\sD_0^2\right) &\to-\inv{2}\Tr\int\limits_0^\infty\frac{d\a}{\a}
\left(e^{-\a\sD^2}-e^{-\a\sD_0^2}\right) e^{-\inv{\a L^2}}   
 = \sum_{k>0} \cO(V^k) 
\nn\\
&= \L^4\sum_{n \geq 0} \int d^4 x\, \cO\left(\frac{(p,A,\varphi)^n}{(\L,\LNC)^n}\right)\,,
\label{engineering-expand}
\end{align}
where
\be
\sD_0\Psi:=\g_a\co{\bar X^a}{\Psi},  
\qquad \sD^2\Psi=  \sD_0^2\Psi + V .
\ee
Each term $\cO(V^k)$ contributes one or two fields, given by the
Duhamel expansion as explained in \appref{app:regularization}.  
However, each of these terms contains arbitrarily high powers of 
momenta due to UV/IR mixing, and their appropriate organization is not obvious a priori.
Moreover, gauge invariance is not respected by this expansion.
The most physical organization is according to their 
engineering dimension i.e. according to powers of fields and momenta,
as indicated in \eq{engineering-expand}.
This makes sense due to the IR condition \eq{IR-regime},
involving three small parameters 
$\e_L(p)$, $(p\th q) \sim \frac{p^2}{\LNC^2}$ and $\frac{p^2}{\L^2}$.
It leads to an effective action of the type
\be
\Gamma_L = \sum_{n,k,l \geq 0} \L^{2n-2k}\LNC^{-4n-2l} \int d^4 x\, \Gamma^{(n,k,l)}[A_\mu,\varphi^i] 
\ee
on $\R^4_\th$, where $\Gamma^{(n,k,l)}[A_\mu,\varphi^i]$
has engineering dimension $r = 4+2k+2l+2n$. 
Only finitely many terms in the Duhamel expansion contribute to each given $r$.
However, recovering gauge invariance requires a non-trivial 
re-organization of this expansion, since e.g. a commutator
of fields such as $[A_\mu,A_\nu]$ can be rewritten in momentum space
as an expansion in powers of $(p\theta q) \sim \frac{p^2}{\LNC^2}$.
This can be done iteratively, and one obtains
\be
\Gamma_L = \sum_{m > 0; n} \L^{n}\LNC^{-m} \int d^4 x\, \Gamma^{(n,m)}[A_\mu,\varphi^i] \,,
\label{eff-gauge-action-general}
\ee
where each $\Gamma^{(n,m)}[A_\mu,\varphi^i]$ is manifestly gauge-invariant,
interpreted as an effective higher-order NC gauge theory.
Note that the $\cO(V^k)$ contribution in 
the perturbative expansion \eq{engineering-expand} contains at least $k$ 
powers of fields $A$ resp. $\varphi$, so that any given term in \eq{eff-gauge-action-general}
is completely determined at some finite order of $k$.
These steps are carried out in \secref{sec:effgaugethy}, where 
the first terms in this expansion are found to be 
\eq{2-field-collect-b}, \eq{2-field-collect-curv1}, and \eq{2-field-collect-curv2}.

This result would be perfectly satisfactory from the point of view of NC gauge theory,
since each of the $\Gamma^{(n,m)}$ respects gauge invariance as well as the global
$SO(D-4)$ symmetry.
However, the effective action appears to violate $SO(4)$ invariance
due to terms such as $\theta^{\mu\nu} F_{\mu\nu}$, not to mention the original 
$SO(D)$ symmetry of the matrix model. 
These will be recovered in the next, non-trivial step, suggested by the gravity point of view.

\subsection{Re-assembling the effective matrix model.}

In the last step, we collect the gauge-invariant actions 
$\Gamma^{(n,m)}[A_\mu,\varphi^i]$ on $\R^4_\th$ and 
 translate the result into the form of a generalized matrix model,
\begin{align}
\Gamma_L = \sum_{n,m} \L^n \LNC^{-m} \int d^4 x\, \Gamma^{(n,m)}[A_\mu,\varphi^i] 
= \Tr \cL(X^a) \,,
\label{re-assemble}
\end{align}
which manifestly respects gauge invariance as well as the $SO(D)$ symmetry. 
This is a highly non-trivial (and at this point non-systematic)
step, which will be carried out explicitly in \secref{sec:effMM}. 
We obtain the following effective matrix model action \eq{full-action-MM}
\be
\Gamma_L = -\frac 14\Tr \frac{L^4}{\sqrt{\inv2 H^2-H^{ab}H_{ab} + \frac{c_1}{L^2}[X^c,H^{ab}][X_c, H_{ab}]
 + \frac{c_2}{L^2}H^{cd}[X_c,\Theta^{ab}] [X_d,\Theta_{ab}]  +\ldots }} 
\label{eq:Gamma-L}
\ee
which constitutes the main result of this paper.
This action has a manifest global $SO(D)$ symmetry,
which is hidden in the gauge theory point of view.
The dependence on the specific background $\R^4_\th$ has disappeared 
(except for its 4-dimensional nature),
which allows to translate the result into the geometric language of emergent gravity
where the $X^a$ are interpreted as embedding of $\cM^4 \subset \R^D$.
The higher-order terms then
correspond e.g. to curvature terms as shown in \cite{Blaschke:2010rg,Blaschke:2010qj}.
The ellipses in \eqnref{eq:Gamma-L} stand for terms which contain more derivatives resp. commutators,
corresponding to higher-order curvature terms, etc. 

The same type of effective action should be expected from the bosonic part of the 
matrix model, integrating out the $X^a$ at one loop or beyond. We plan to report on this 
elsewhere.
In this way, the quantization of NC gauge theory allows to obtain the quantization of 
(emergent) gravity.

In general this resulting expansion depends
on the background, in particular it is different e.g. for the non-Abelian 
modes. Thus it is not ``the complete'' effective matrix model, and there should be some universal
form which applies to all backgrounds. 
Note also that  the dependence on $\L$ characterizes some kind of 
``matrix'' renormalization group, which should be studied in detail.

\section{Details of the heat kernel expansion}
\label{sec:heatkernel-details}

\subsection{Square of the Dirac operator}

We choose coordinates such that 
\[g_{\mu\nu} = \del_\mu x^a \del_\nu x^b g_{ab} = \diag(1,1,1,1)\,,\] 
and using a suitable $SO(4)$ rotation we can assume that $\bar\theta^{\mu\nu}$
has the standard form \eq{theta-standard-general-E}.
According to \eqref{eq:def-Gges}, the effective metric on  $\R^4_\th$ is
\begin{align}
\bar G^{\mu\nu} &= \LNC^4\bar\theta^{\mu\mu'}\bar\theta^{\nu\nu'} g_{\mu'\nu'} \,.
\label{eq:def-Gges-moyal}
\end{align}
We consider the square of the Dirac operator on $\R^4_\theta$, and treat $A_\mu$ as well as $\phi^i$ as perturbations.
This gives
\begin{align}
\sD^2\Psi&=\g_a\g_b\co{X^a}{\co{X^b}{\Psi}} 
= (\d_{ab} -2i \Sigma_{ab})[X^a,[X^b,\Psi]] \nn\\
&= \d_{ab} [X^a,[X^b,\Psi]] +\Sigma_{ab}[\Theta^{ab},\Psi]\,,
\end{align}
where we define
\begin{align}
\Sigma_{ab} &= \frac i4 [\gamma_a,\gamma_b]  \,, &
\Theta^{ab} &= -i[X^a,X^b] \,.
\end{align}
Furthermore, we split the square of the Dirac operator according to
\begin{align}
\sD^2\Psi&=(H_0 + V) \Psi\,, \nn\\
 H_0\Psi &:= \d_{\mu\nu} [\bar X^\mu,[\bar X^\nu,\Psi]] 
  = -\LNC^{-4}\bG^{\mu\nu}\del_\mu\del_\nu \Psi \nn\\
 &= \bar\Box \Psi 
\,. 
\end{align} 
Using
\be
X^\mu = \bar X^\mu + \cA^\mu
= \bar X^\mu - \thb^{\m\n} A_\nu 
\,,
\ee
one has
\begin{align}
[X^a,[X_a,\Psi]] 
&= \bar\Box \Psi + \d_{\mu\nu}([\bar X^\mu,[\cA^\nu,\Psi]] +[\cA^\mu,[\bar X^\nu,\Psi]] + [\cA^\mu,[\cA^\nu,\Psi]]) 
 + \d_{ij}[\phi^i,[\phi^j,\Psi]] \nn\\ 
&= \bar\Box \Psi - \frac{i\bG^{\m\n}}{\LNC^4}
\left(2[A_{\mu},\del_{\nu}\Psi] + [\del_{\mu}A_{\nu},\Psi] + i[A_{\mu},[A_{\nu},\Psi]]\right)
+ \LNC^{-4}[\varphi^i,[\varphi_i,\Psi]] 
, \nn\\ 
2\Sigma_{ab}[X^a,[X^b,\Psi]] &=  \Sigma_{ab}\left([X^a,[X^b,\Psi]] - [X^b,[X^a,\Psi]]\right)
\nn\\
&=  i\Sigma_{ab}[\Theta^{ab},\Psi] 
\,.
\end{align}
Note that the scalar fields have been rescaled above according to $\varphi_i=\LNC^2\phi_i$. 
The components of $\Theta^{ab}$ are given by 
\begin{align}
[X^\mu,X^\nu] &= i\bar\theta^{\mu\nu} 
+ i\bar\theta^{\mu\rho}\partial_\rho \cA^{\nu}
- i\bar\theta^{\nu\rho}\partial_\rho \cA^{\mu} + [\cA^{\mu},\cA^{\nu}] \nn\\
&=  i\bar\theta^{\mu\nu} +  i\cF^{\mu\nu} 
\,, \nn\\
[X^\mu,\phi^i] &= i \bar\theta^{\mu\nu} D_\nu\phi^i
\,, \label{XX-gauge}
\end{align}
where
\begin{align}
\cF^{\m\n} &= -\thb^{\m\r} \thb^{\n\s}F_{\r\s} \nn\\
&= -\thb^{\m\r} \thb^{\n\s}\left(\pa_\r A_\s-\pa_\s A_\r+i\co{A_\r}{A_\s}\right) \,, \nn\\
D_\a \phi &= \del_\a\phi + i [A_\a,\phi] 
\,.
\end{align}
Then the above gives
\begin{align}
\LNC^4 V \Psi &= -i \bG^{\mu\nu}
\Big(2[A_{\m},\pa_{\n}\Psi] +\co{\pa_\m A_\n}{\Psi} +i[A_{\m},[A_{\n},\Psi]]\Big)+ \d_{ij}[\varphi^i,[\varphi^j,\Psi]] \nn\\
&\quad + \LNC^4\,\Sigma_{ab}[\Theta^{ab},\Psi]\, \nn\\
&= -i \bG^{\mu\nu}
\Big(2[A_{\m},\pa_{\n}\Psi] +\co{\pa_\m A_\n}{\Psi} +i[A_{\m},[A_{\n},\Psi]]\Big)+ \d_{ij}[\varphi^i,[\varphi^j,\Psi]] \nn\\
&\quad + \LNC^4\(\Sigma_{\mu\nu}[\cF^{\mu\nu},\Psi]
 + 2\Sigma_{\mu i}\bar\theta^{\mu\nu}[\del_\nu\phi^i +i [A_\nu,\phi^i],\Psi]
 -i \Sigma_{ij}[[\phi^i,\phi^j],\Psi]\)
\,.
\end{align}
Note that $V$ contains both linear and quadratic terms in the fluctuations $\varphi^i$ resp. $\cA^\mu$.
All these formulas are exact on $\R^4_\theta$.

\subsection{Setup for the trace computations}

First, we must specify the Hilbert space under consideration. 
The algebra $\cA$ of ({\nc}) functions\footnote{We do not consider functional-analytic details here.
One way to make this more rigorous would be to use the 
fuzzy torus, which is compact while having a very similar non-commutative structure.} 
on $\R^4_\th$ can be identified with the Heisenberg algebra, and 
therefore is represented as (infinite-dimensional) matrix algebra $\cA \subset {\rm End}(\cH)$
acting on a separable Hilbert space $\cH$. We suppress the spinor indices for simplicity
in this discussion. As usual $\cA$ (resp. a suitable subspace of $\cA$) can be equipped
with an inner product structure
\be
\langle \Psi_1,\Psi_2\rangle = \Tr_{\cH} \Psi_1^\dagger \Psi_2 = 
\LNC^4\inttx \sqrt{g}  \Psi_1^\dagger  \Psi_2
\label{inner-product}
\ee
for $\Psi_i \in \cA \cong \textrm{End}(\cH)$,  considered as a (pre-)Hilbert space
of wave-functions on $\R^4_\theta$.
Here the integral over $x$ is understood as integral over $\R^4_\th$.
We will always assume coordinates $x^\mu$
with $g_{\mu\nu} = \d_{\mu\nu}$ and often drop the $\sqrt{g}$.
Then $\sD \in \textrm{End}(\cA)$ resp. $\sD^2$ are Hermitian operators on $\cA$,
\be
\Tr_{\cH} \Psi_1^\dagger  \sD^2 \Psi_2 = \LNC^4\inttx \sqrt{g}  \Psi_1^\dagger \sD^2 \Psi_2
 = \LNC^4 \inttx\sqrt{g} (\sD^2 \Psi_1)^\dagger \Psi_2 
\,. \label{inner-product-D}
\ee
$\cA$ can be identified via  Fourier transform 
with square-integrable functions on $\R^4$, so that
$\cA \cong L^2(\R^4_\th) \cong L^2(\R^4)$.
This identification (the Weyl quantization map) is defined by mapping plane waves $e^{i p_\mu x^\mu}$ 
to the ``generalized eigenfunctions'' 
\be
|p\rangle = e^{i p_\mu \bar X^\mu} \in \bar \cA \supset \cA
\ee
of NC plane waves, which satisfy
$\bar P_\mu |p\rangle = i p_\mu |p\rangle$ for $\bar P_\mu = -i\bar\theta^{-1}_{\mu\nu}[\bar X^\nu,.]$;
note that $[\bar P^\mu,\bar P^\nu] = 0$. We can compute their inner product formally 
\begin{align}
\langle q|p\rangle &= \Tr(|p\rangle\langle q|) = 
 \Tr_\cH (e^{-i q_\mu \bar X^\mu} e^{i p_\mu \bar X^\mu})
= (2\pi\LNC^2)^2   \d^4(p-q) 
\,.
\end{align}
(Note that $\dim\cH = \Tr_\cH\one = \Tr(|0\rangle\langle 0|) = (2\pi\LNC^2)^2   \d^4(0)$
so that $\d^4(0) \sim {\rm Vol}$ corresponds to a divergent volume factor).
Hence the trace $\Tr$ over operators on $\cA$
(not to be confused with $\Tr_\cH$)
can be computed using the following relations
\begin{subequations}
\begin{align}
\Tr\, \cO &= \Intt{p} \langle p | \cO | p\rangle \,, \label{trace-p}\\
\one &= \Intt{p} |p\rangle\langle p |
\,. \label{one-p}
\end{align}
\end{subequations}
Notice the presence of the NC scale. 
Analogous formulas can be justified rigorously on (compact) fuzzy spaces such as $S^2_N$ or $T^2_\theta$.
In particular, a field $\Psi \in \cA$ 
is conveniently written in momentum basis as 
\begin{align}
|\Psi\rangle &= \Intt{p}\,|p\rangle\langle p|\Psi\rangle\,
 = \Intt{p}\, \psi(p)\,e^{i p_\mu \bar X^\mu}  \,,  \nn\\
\psi(p) &= \langle p | \Psi\rangle 
= \Tr_\cH(e^{-i p_\mu \bar X^\mu} \Psi) 
= \LNC^4\,\inttx\, \psi(x)\,e^{-i p_\mu x^\mu} \,, \nn\\
 \qquad \psi(x) &= \langle x|\Psi\rangle
= \Intt{p} e^{i p_\mu x^\mu} \psi(p)
 =  \Intt{p} \Tr_\cH(e^{-i p_\mu (x^\mu -\bar X^\mu)} \Psi)
\,. 
\end{align}
For example, $|p\rangle$ 
corresponds to $\psi(p') = (2\pi\LNC^2)^2\d^{(4)}(p'-p)$.
Similarly, we consider the Fourier representation of the external fields
\begin{align}
\phi^i &= \int\frac{d^4 p}{(2\pi\LNC^2)^2} \,\phi^i(p) \,e^{i p_\m \bar X^\m} \,,  \nn\\
A_\m &= \int\frac{d^4 p}{(2\pi\LNC^2)^2} \,A_\m(p) \,e^{i p_\m \bar X^\m} 
\,.
\end{align}
We can now  start with the formula \eq{general-eff-action} of \appref{app:regularization} which expresses the effective action 
as a trace of certain operators acting on the spinor field on $\R^4_{\th}$. 
In subsequent computations we will need: 
\be
H_0 |p\rangle = \LNC^{-4}\, p\cdot p\, |p\rangle\,,
\ee
where 
\be
p\cdot p:= \bar G^{\m\n}p_\m p_\n \,,
\ee 
and
\be
e^{i k \bar X} e^{i l \bar X} = e^{-\frac i2 k \bar\theta l}\, e^{i (k+l) \bar X} 
\ee
on $\R^4_{\theta}$. It therefore follows that 
\begin{align}
\co{e^{ik\bar X}}{e^{i l \bar X}}&= -2i\Sin{k\thb l}e^{i (k+l) \bar X} \,, \nn\\
e^{ip\bar X}\co{e^{ik\bar X}}{e^{i l \bar X}}&= -2ie^{\frac i2(k+l)\thb p}\Sin{k\thb l} e^{i (k+l+p) \bar X} \,, \nn\\
\co{\co{e^{ik\bar X}}{e^{i l \bar X}}}{e^{ip\bar X}}&= (-2i)^2\Sin{k\thb l}\Sin{(k+l)\thb p} e^{i (k+l+p) \bar X} \,, 
\end{align}
and similarly, for anti-commutators $-2i\sin()$ is replaced by $2\cos()$ in the above formulas. 
Furthermore, we have the following spinorial trace
\begin{align}
\tr (\Sigma_{ab}\Sigma_{cd}) &= \frac{\tr\one}{4}(g_{ac}g_{bd} - g_{ad}g_{bc}) 
\,.
\end{align}
Having collected all basic ingredients, we may proceed with the explicit computations presented 
in the next section.
The reader not interested in the details  may jump to section \ref{sec:effgaugethy},
where the effective gauge theory action is presented.

\subsection{Order by order computations}
\label{sec:order-by-order}

To start the computation, consider first the general matrix element
{\allowdisplaybreaks
\begin{align}
\langle\Psi_\b'|V|\Psi_\a\rangle &= \int\frac{d^4x}{(2\pi)^2}\, \Psi_\b'^\dagger 
\Big([\varphi^i,[\varphi_i,\Psi_\a]]  +\LNC^4\Sigma_{ab}[\Th^{ab},\Psi_\a]\nn\\*
&\quad -i \bG^{\mu\nu}
\left(2[A_{\mu},\del_{\nu}\Psi_\a] +\co{\pa_\m A_\n}{\Psi_\a} + i[A_{\mu},[A_{\nu},\Psi_\a]]\right)\Big) \nn\\
&=\LNC^{-4}\Intt{p}\Intt{q}\, \psi_\b'(p)^*
 \Bigg( 2i\bG^{\mu\nu} (p+q)_\n A_\m(p-q)\Sin{q\thb p} \nn\\*
&\quad\qquad -4\bG^{\mu\nu}\!\Intt{l} A_\m(p-q-l)A_\n(l)\Sin{q\bar\th l}\Sin{(l+q)\thb p} \nn\\*
&\quad\qquad -4\d_{ij}\Intt{l} \varphi^i(p-q-l)\varphi^j(l)\Sin{q\thb l}\Sin{(q+l)\thb p} \nn\\*
&\quad\qquad +2i\LNC^4\Sigma_{ab}\Th^{ab}(p-q)\Sin{q\thb p} \Bigg)\psi_\a(q) 
\,, \label{eq:basic-V-simple}
\end{align}
where
}
\be
\Theta^{ab}(x) = \Intt{k}\, \Theta^{ab}(k)\,e^{i k_\m x^\m}
\,.
\ee 
We note that this interaction $V$ vanishes when any of the external 
fields $A$ or $\varphi$ has zero momentum. This is clear because 
they arise only from commutators, which vanish in the commutative case.
Hence all the non-trivial results
are due to {\nc} effects resp. UV/IR mixing.

\paragraph{First order. }

From \eqref{eq:basic-V-simple} it follows that
\begin{align}
&\Tr\left(Ve^{-\a H_0}\right)=\Intt{p} \tr V_{p,p} e^{-\bar\a p\cdot p} \nn\\
&=\frac{4\tr\one}{\LNC^4}\Intt{l} \left(\bG^{\m\n} A_\m(-l)A_\n(l)
+\varphi^i(-l)\varphi_i(l)\right) 
\!\Intt{p} \Sinn{l\thb p}{2} e^{-\bar\a p\cdot p} \nn\\
&=\frac{\tr\one}{2}\frac{1}{\L_{NC}^8}\Intt{l} \sqrt{g}\left(\bG^{\m\n} A_\m(-l)A_\n(l)
+\varphi^i(-l)\varphi_i(l)\right)
\inv{\bar\a^2}\left(1-e^{-\frac{\tilde l \cdot \tilde l}{4\bar\a}}\right)
\,, \label{first-order}
\end{align}
where we define
\eqn{
\bar\a &= \LNC^{-4} \a \,, && \L^2 = \LNC^{4} L^2\,,  \nn\\*
\tilde p_\mu &:= \bG_{\mu\nu} \tilde p^\nu := \bG_{\mu\nu} \bar\theta^{\nu\rho} p_\rho \,,
}
and note that
\begin{align}
l \thb p &= l_\mu \thb^{\m\n}p_\nu = - \tilde l \cdot p \,, \nn\\
\tilde l \cdot \tilde l &= \bar G_{\mu\nu} \thb^{\nu\rho} l_\rho \thb^{\mu\rho'} l_{\rho'}
=  \LNC^{-4} l_\rho l_{\rho'} g^{\rho\rho'} =: \LNC^{-4}\,l^2 
\,.
\label{lcdotl-identity}
\end{align}
This gives the following contribution to the effective action 
\begin{align}
\Gamma^{(1)}&= \inv2\int\limits_0^{\infty}\!d\a\Tr\left(Ve^{-\a H_0}\right)e^{\frac{-1}{\a L^2}}
 = \inv2\LNC^{4}\int\limits_0^{\infty}\!d\bar\a\Tr\left(Ve^{-\bar\a p\cdot p}\right)
e^{\frac{-1}{\bar\a\L^2}}\nn\\
 &= \frac{\tr\one}{4}\LNC^{-4}\Intt{l} \int_0^\infty\! d\bar\a\, \inv{\bar\a^2}\left(1-e^{-\frac{\tilde l \cdot \tilde l}{4\bar\a}}\right) 
e^{- \inv{\bar\a\L^2}} \,\left(\bG^{\m\n} A_\m(-l)A_\n(l)
+\varphi^i(-l)\varphi_i(l)\right) \nn\\
 &= \frac{\tr\one}{4}\LNC^{-4}\Intt{l}\sqrt{g}  \left(\L^2 -  \Leff^2(l)\right)
\left(\bG^{\m\n} A_\m(-l)A_\n(l)+\varphi^i(-l)\varphi_i(l)\right)
\,, \label{Gamma-eff-1}
\end{align}
which involves the famous ``effective cutoff'' 
\be
\Leff^2(l) = \frac{\L^2}{1 + \inv4\frac{\L^2}{\LNC^4}\, l^2} 
\,.
\ee
The point is now that gravity is an infrared phenomenon, so that we are interested in the regime where
\be
\epsilon_L(l) := \frac{\L^2}{\L_{NC}^4}\, l^2  =  L^2 l^2 \ll 1 
\,, 
\ee
given some finite cutoff $\L$.
Hence we can expand\footnote{This is the essential difference to the previous work 
\cite{Gayral:2006vd,Vassilevich:2005vk}.} 
\be
\L^2 -  \Leff^2(l) = \frac 14 \epsilon_L(l) + \cO(\epsilon_L(l)^2)
\,. \label{Leff-expand}
\ee
Keeping only the 
leading non-trivial term $\L^2 -  \Leff^2(l) \to \frac{\L^4}{4\L_{NC}^4}\, l^2 $, 
the above action reduces to 
\begin{align}
\Gamma^{(1)} &= \frac{\tr\one}{16}\,\frac{\L^4}{\L_{NC}^8}\Intt{l} \sqrt{g} l^2
\left(\bG^{\m\n} A_\m(-l)A_\n(l)+\varphi^i(-l)\varphi_i(l)\right) \quad + \cO(l^4)  \nn\\
&= \frac{\tr\one}{16}\,\frac{\L^4}{\LNC^{4}}\int \frac{d^4x}{(2\pi)^2}  \sqrt{g}
 \left(\bG^{\m\n} g^{\a\b} \del_\a A_\m(x)\del_\b A_\n(x)
+g^{\a\b} \del_\a \varphi^i(x)\del_\b \varphi_i(x)\right) + \textrm{h.o.}
\label{first-order-result}
\end{align}
Notice that this is already a ``non-planar'' (UV/IR mixing) contribution, as it involves $\epsilon_L(l)$.
It will be identified as part of the  ``cosmological constant'' term $\L^4\Int{x}  \sqrt{g}$, 
which depends on the embedding metric $g_{\mu\nu}$, cf. \cite{Steinacker:2008}. 
This is consistent with the semi-classical result in
\cite{Klammer:2009dj}, but the present derivation is exact to all orders in $\theta$.
Higher-order terms in the expansion \eq{Leff-expand} will contribute in particular 
to the curvature action.

\paragraph{Second order. }
The second order in the heat kernel expansion yields
{\allowdisplaybreaks
\begin{align}
&\quad \Tr\left(Ve^{- t H_0}Ve^{-( \a- t)H_0}\right)
=\Intt{p}\Intt{q} \,\tr \left(V_{p,q}e^{-\bar tq\cdot q}V_{q,p} e^{-(\bar \a-\bar t) p\cdot p}\right) \nn\\
&= \frac{16\tr\one}{\LNC^{8}}\int\!\!\frac{d^4p\,d^4q}{(2\pi\LNC^2)^4}e^{-\bar tq\cdot q-(\bar \a-\bar t) p\cdot p}\!\Bigg(\!\bG^{\m\n}\bG^{\r\s}\frac{(p\!+\!q)_\n (p\!+\!q)_\s}{4} A_\m(p\!-\!q)A_\r(q\!-\!p)\Sinn{q\thb p}{2} \nn\\*
&\quad +\frac 14\LNC^8\left(g_{ac}g_{bd}-g_{ad}g_{bc}\right)\Th^{ab}(p-q)\Th^{cd}(q-p)\Sinn{q\thb p}{2} \nn\\
&\quad +\!\Intt{l}\Bigg[i\bG^{\m\n}\bG^{\r\s}\frac{(p\!+\!q)_\n}{2} A_\m(p\!-\!q)A_\r(q\!-\!p\!-\!l)A_\s(l)\SiN{q\thb p}{2}{\!}\!\SiN{l\thb p}{2}{\!}\!\SiN{(l\!+\!p)\thb q}{2}{\!} \nn\\*
&\quad\quad +i\bG^{\m\n}\bG^{\r\s}\frac{(p\!+\!q)_\s}{2} A_\r(q\!-\!p) A_\m(p\!-\!q\!-\!l)A_\n(l)\Sin{q\thb p}\!\Sin{q\thb l}\!\Sin{(l\!+\!q)\thb p} \nn\\*
&\quad\quad +i\d_{ij}\bG^{\m\n}\frac{(p\!+\!q)_\n}{2} A_\m(p\!-\!q)\varphi^i(q\!-\!p\!-\!l)\varphi^j(l)\Sin{q\thb p}\!\Sin{l\thb p}\!\Sin{(l\!+\!p)\thb q} \nn\\*
&\quad\quad +i\d_{ij}\bG^{\m\n}\frac{(p\!+\!q)_\n}{2} A_\m(q\!-\!p)\varphi^i(p\!-\!q\!-\!l)\varphi^j(l)\Sin{q\thb p}\!\Sin{q\thb l}\!\Sin{(l\!+\!q)\thb p} \Bigg] \nn\\
&\quad +\Intt{l}\Intt{k}\Bigg[\Sin{q\thb l}\!\Sin{p\thb k}\!\Sin{(l+q)\thb p}\!\Sin{(k+p)\thb q}\times \nn\\*
&\quad\quad\times\!\left(\bG^{\m\n}\bG^{\r\s}A_\m(p\!-\!q\!-\!l)A_\r(q\!-\!p\!-\!k)A_\n(l)A_\s(k)
+\varphi^i(p\!-\!q\!-\!l)\varphi_i(l)\varphi^j(q\!-\!p\!-\!k)\varphi_j(k)\right) \nn\\*
&\quad\quad +\d_{ij}\bG^{\m\n}\Big(A_\m(p-q-l)A_\n(l)\varphi^i(q-p-k)\varphi^j(k)\Sin{q\thb l}\!\Sin{p\thb k}\times\nn\\*
&\hspace{3cm}\times\Sin{(l+q)\thb p}\!\Sin{(p+k)\thb q}
     +\left\{\!\begin{array}{c}
       q\leftrightarrow p \\
       l\leftrightarrow k
       \end{array}\!\right\}\Big) \Bigg]\Bigg) 
\,, \label{eq:second-order-terms-compact}
\end{align}
involving two, three and four field contributions. 
We will ultimately do the computations of this section up to third order, and hence neglect the four field 
contributions --- a sample expression is nonetheless given in \appref{app:heatkernel-four-field}. 
In order to compute at least one of the momentum integrals, we need to make clever variable substitutions where the fields are 
independent of one of the new integration variables. For example, for the terms which only involve integrals over $p$ and $q$, 
the substitution $P=p+q$ and $Q=p-q$ is favourable and leads to integrals of the type
\begin{subequations}\label{eq:p-gauss}
\begin{align}
& \Intt{P} \sin^2\!\left(\!\frac{Q\thb P}{4}\!\right)\!e^{-\frac{\bt}{4}(Q-P)\cdot(Q-P)-\inv{4}(\ba-\bt)(P+Q)\cdot(P+Q)} \nn\\*
&= \frac{2\sqrt{\bG}}{\LNC^4\ba ^2} \left(e^{\frac{\dott{\Qt}{\Qt}}{4 \ba }}-1\right)
   e^{-\frac{4 \bt (\ba -\bt)Q\cdot Q+\tilde{Q}\cdot \tilde{Q}}{4 \ba
   }}
\,,\label{eq:p-gauss-a}\\
& \Intt{P}P_\n P_\s \sin^2\!\left(\!\frac{Q\thb P}{4}\!\right)\!e^{-\frac{\bt}{4}(Q-P)\cdot(Q-P)-\inv{4}(\ba-\bt)(P+Q)\cdot(P+Q)} \nn\\*
&=\frac{2\sqrt{\bG}}{\LNC^4\ba ^4} e^{-\frac{4 \bt (\ba -\bt)\dott{Q}{Q}+\dott{\Qt}{\Qt}}{4 \ba }} \left(\tilde{Q}_{\nu } \tilde{Q}_{\sigma
   }+ \left(Q_{\nu } Q_{\sigma } (\ba -2 \bt)^2+ 2\a\bG_{\n\s}\right) \left(e^{\frac{\dott{\Qt}{\Qt}}{4 \ba }}-1\right)\right)
\,,\label{eq:p-gauss-b}\\
& \Intt{P}P_\n \sin\!\left(\!\tfrac{Q\thb P}{4}\!\right)\!\sin\!\left(\!\tfrac{l\thb (Q+P}{4}\!\right)\!\sin\!\left(\!\tfrac{(2l+P+Q)\thb (Q-P)}{8}\!\right)\!e^{-\tfrac{\bt}{4}\dott{(Q-P)}{(Q-P)}-\inv{4}(\ba-\bt)\dott{(P+Q)}{(P+Q)}} \nn\\*
&=\frac{\sqrt{\bG}}{2\LNC^4\ba ^3} \exp \left(-\frac{2
   \tilde{l}\cdot\tilde{Q}+\tilde{l}^2+\tilde{Q}\cdot\tilde{Q}+2 i (2 \bt+\ba )
   (l \theta  Q)+4 Q\cdot Q \bt (\ba -\bt)}{4 \ba }\right)\times\nn\\*
&\quad\times   \Bigg(\tilde{l}_{\nu } \left(e^{\frac{2
   \tilde{l}\cdot\tilde{Q}+\tilde{Q}\cdot\tilde{Q}}{4 \ba }}-1\right)
   \left(e^{\frac{2 i \bt (l \theta  Q)}{\ba }}+e^{i (l \theta 
   Q)}\right)\nn\\*
&\quad\quad +e^{\frac{2 \tilde{l}\cdot\tilde{Q}+\tilde{l}\cdot\tilde{l}+4 i \bt (l
   \theta  Q)}{4 \ba }} \left(\tilde{Q}_{\nu } \left(1+e^{i (l
   \theta  Q)}\right)-i Q_{\nu } (\ba -2 \bt)
   \left(e^{\frac{\tilde{Q}\cdot\tilde{Q}}{4 \ba }}-1\right) \left(-1+e^{i
   (l \theta  Q)}\right)\right)\nn\\*
&\quad\quad +i \left(Q_{\nu } (\ba -2 \bt)
   \left(e^{\frac{2 \tilde{l}\cdot\tilde{Q}+\tilde{Q}\cdot\tilde{Q}}{4 \ba
   }}-1\right) \left(e^{i (l \theta  Q)}-e^{\frac{2 i \bt (l
   \theta  Q)}{\ba }}\right)+i \tilde{Q}_{\nu }
   \left(e^{\frac{2 i \bt (l \theta  Q)}{\ba }}+e^{i (l \theta 
   Q)}\right)\right)\Bigg)
\,,\label{eq:p-gauss-c} 
\end{align}
\end{subequations}
for the two and three field contributions. 
In order to make sense of the UV/IR mixing terms, we consider $\L$ as a {\em finite} cutoff, 
and assume that $\epsilon_L(p)= p^2\L^2/\L_{NC}^4\ll 1$. 
We can then expand the ``UV/IR mixing terms'' e.g. as 
$e^{\tilde p \cdot \tilde p/\a} = 1+ \sum \inv{\a^n} \frac{p^{2n}}{\LNC^{4n}}$,
which amounts to
an expansion in the NC parameter $\theta$ {\em after} performing the loop integral.
}

Adopting the expansion in $\epsilon_L(p)$ as justified above, we only need
{\allowdisplaybreaks
\begin{subequations}\label{eq:p-para-expand}
\begin{align}
\int\limits_0^{\infty}\!\!d\ba\!\int\limits_0^{\ba}\!\!d\bt e^{\frac{-1}{\L^2}}\eqref{eq:p-gauss-a}
&= 
\frac{\sqrt{\bG}}{2\LNC^4} \Bigg( \L^2 \tilde Q\cdot\tilde Q+ \frac{Q\cdot Q \tilde{Q}\cdot \tilde{Q}}{6} \left( \ln \!\left(\!\frac{Q\cdot Q}{\L^2}\!\right)+2 \gamma_E -\frac{8}{3}\right)\!\!\Bigg) +\cO\left(\tfrac{Q^6}{\L^6}\right)
\!,\label{eq:p-para-a-expand}\\
\int\limits_0^{\infty}\!d\ba\!\int\limits_0^{\ba}\!d\bt e^{\frac{-1}{\L^2}}\eqref{eq:p-gauss-b}
&= \frac{2\sqrt{\bG}}{\LNC^4}\Bigg(\frac 1{2} \bar G_{\nu\s} \L^4\tilde Q\cdot\tilde Q
+\L^4\tilde Q_\nu \tilde Q_\s -\frac{\L ^6}{2}
    \tilde{Q}\cdot \tilde{Q} \left(\tilde{Q}_{\nu } \tilde{Q}_{\sigma } + \inv{4}\tilde{Q}\cdot \tilde{Q}\bG_{\n\s} \right) \nn\\*
&\qquad - \frac {\L^2}{12}(\bar G_{\nu\s} Q\cdot Q - Q_\n Q_\s)\tilde Q\cdot\tilde Q 
 - \frac{\L^2}{6} Q\cdot Q \Qt_\nu \Qt_\s   \Bigg) 
 + \cO(Q^6/\L^6) 
\,, \label{eq:p-para-b-expand} \\
\int\limits_0^{\infty}\!d\ba\!\int\limits_0^{\ba}\!d\bt e^{\frac{-1}{\L^2}}\eqref{eq:p-gauss-c}
&= \frac{\sqrt{\bG}}{\LNC^4}\Bigg(\frac{\L^6 Q_{\nu }}{8 Q\cdot Q} (2 \lt \cdot
   \Qt +\Qt \cdot \Qt ) \left(2
   \lt \cdot (\lt +\Qt )+\Qt \cdot
   \Qt \right) \sin\! \left(\tfrac{(l \th  Q)}{2} \right) \nn\\*
&\qquad +\frac{\L^4}{4 Q\cdot Q} \bigg(Q\cdot Q \left(\Qt \cdot \Qt  \left(2 \Qt _{\nu } 
   \sin^2\!\left(\tfrac{(l \th  Q)}{4} \right)+\lt _{\nu }\right)+\lt \cdot \lt  \Qt _{\nu }\right) \nn\\*
&\quad\qquad +Q_{\nu } (2\lt+\Qt) \cdot \Qt  \left(l \th  Q-2 \sin \!\left(\tfrac{(l \th Q)}{2} \right)\right) 
    +2 \lt \cdot \Qt Q\cdot Q (\lt _{\nu }+\Qt _{\nu })\bigg) \nn\\*
&\qquad +\L^2 \Qt _{\nu } \left(\cos \!\left(\tfrac{(l \th Q)}{2} \right)-1\right) \nn\\*
&\qquad +\inv{9} Q\cdot Q \Qt _{\nu } \sin^2\!\left(\tfrac{(l \th  Q)}{4} \right) 
  \!\! \left(\!3 \ln \!\left(\!\tfrac{\L^2}{Q\cdot Q} \!\right)-6 \gamma_E +8\right)\!\!\!\Bigg)
+ \cO\big(\tfrac{Q^6}{\L^6}\big) 
. \label{eq:p-para-c-expand} 
\end{align}
\end{subequations}
Note that every power of $Q$ is suppressed by either $\inv{\L}$ or $\inv{\LNC}$, along possibly
with factors $\frac{\L}{\LNC}$ which we assume to be finite.
}

Collecting all 2-field contributions in \eqref{eq:second-order-terms-compact} we hence find with \eqref{eq:p-gauss-a}, 
\eqref{eq:p-gauss-b}, \eqref{eq:p-para-a-expand} and \eqref{eq:p-para-b-expand} and 
$d^4p\, d^4q = \frac 1{16} d^4P d^4 Q$:
\begin{align}
&\int\limits_0^\infty\! d\a \int\limits_0^\a\! dt\,\,
\Tr\left(V e^{-tH_0}Ve^{-(\a-t)H_0}\right)e^{- \frac 1{\a L^2}}\Big|_{\textrm{2-fields}} \nn\\*
&=\frac{\tr\one\sqrt{\bG}}{2\LNC^4}
\Intt{Q}\Bigg\{ \bG^{\m\n}\bG^{\r\s}A_\m(Q)A_\r(-Q)\Bigg(\inv{2} 
\bG_{\nu\s} \L^4\tilde Q\cdot\tilde Q +\L^4\tilde Q_\nu \tilde Q_\s  \nn\\*
&\quad\qquad  -\frac{\Lambda ^6}{2} \tilde{Q}\cdot \tilde{Q} \Big(\tilde{Q}_{\nu } \tilde{Q}_{\sigma } 
   + \inv{4}\tilde{Q}\cdot \tilde{Q}\bG_{\n\s} \Big) 
 - \frac {\L^2}{12}(\bar G_{\nu\s} Q\cdot Q - Q_\n Q_\s)\tilde Q\cdot\tilde Q 
 - \frac{\L^2}{6} Q\cdot Q \Qt_\nu \Qt_\s   \Bigg) \nn\\*
&\quad +\frac{\LNC^8}2 \Th^{ab}(Q)\Th_{ab}(-Q)\Bigg(\! \L^2 \tilde Q\cdot\tilde Q
+ \tfrac{Q\cdot Q}{6} \tilde{Q}\cdot \tilde{Q} \left(\! \ln \!\left(\!\tfrac{Q\cdot Q}{\L^2}\!\right)+2 \gamma_E -\tfrac{8}{3}\right)\!\!\Bigg)
 + \cO\!\left(\tfrac{Q^6}{\L^6}\right)\!\! \Bigg\}
\!. \label{2-field-collect}
\end{align}
Similarly, for the 3-field contributions in \eqref{eq:second-order-terms-compact} using \eqref{eq:p-gauss-c} and \eqref{eq:p-para-c-expand} we have:
\begin{align}
&\int\limits_0^\infty\! d\a \int\limits_0^\a\! dt\,\,
\Tr\left(V e^{-tH_0}Ve^{-(\a-t)H_0}\right)e^{- \frac 1{\a L^2}}\Big|_{\textrm{3-fields}} \nn\\*
&\approx \frac{\tr\one\sqrt{\bG}}{2\LNC^4}\int\!\!\frac{d^4Qd^4l}{(2\pi\LNC^2)^4}i\bG^{\m\n}\Bigg\{
\!\!\left(\bG^{\r\s}\!A_\m(Q)A_\r(-Q-l)A_\s(l)
 +A_\m(Q)\vph^i(-Q-l)\vph_i(l) \right) \! \times  \nn\\*
&\quad \times  \Bigg(\frac{\L^6 Q_{\nu }}{8 Q\cdot Q} (2 \lt \cdot
   \Qt +\Qt \cdot \Qt ) \left(2
   \lt \cdot (\lt +\Qt )+\Qt \cdot
   \Qt \right) \sin\! \left(\tfrac{(l \th  Q)}{2} \right) \nn\\*
&\qquad +\frac{\L^4}{4 Q\cdot Q} \bigg(Q\cdot Q \left(\Qt \cdot \Qt  \left(2 \Qt _{\nu } 
   \sin^2\!\left(\tfrac{(l \th  Q)}{4} \right)+\lt _{\nu }\right)+\lt \cdot \lt  \Qt _{\nu }\right) 
   +2 \lt \cdot \Qt Q\cdot Q (\lt _{\nu }+\Qt _{\nu }) \nn\\*
&\quad\qquad +Q_{\nu } (2\lt+\Qt) \cdot \Qt  \left(l \th  Q-2 \sin \!\left(\tfrac{(l \th Q)}{2} \right)\!\right) 
  \!\!\bigg) +\L^2 \Qt _{\nu } \left(\cos \!\left(\tfrac{(l \th Q)}{2} \right)-1\right) \nn\\*
&\qquad +\inv{9} Q\cdot Q \Qt _{\nu } \sin^2\!\left(\tfrac{(l \th  Q)}{4} \right) 
   \left(3 \ln \!\left(\!\tfrac{\L^2}{Q\cdot Q} \!\right)-6 \gamma_E +8\right)\!\!\Bigg) 
 + \Big(Q\to-Q\Big) + \cO\big(\tfrac{Q^6}{\L^6}\big) \Bigg\} 
. \label{3-field-collect}
\end{align}

\paragraph{Third order.} 
The third order in the heat kernel expansion yields for the three field contributions:
{\allowdisplaybreaks
\begin{align}
&\quad \Tr\left(Ve^{-r H_0}Ve^{-(t-r)H_0}Ve^{-(\a-t)H_0}\right) \Big|_{\textrm{3fields}} \nn\\*
&=\Intt{p}\Intt{q}\Intt{l}\,\tr\left(V_{p,q}e^{-\bar rq\cdot q}V_{q,l}e^{-(\bar t-\bar r)l\cdot l}V_{l,p} e^{-(\bar \a-\bar t) p\cdot p}\right)\Big|_{3\textrm{fields}} \nn\\
&= -\frac{16\tr\one}{\LNC^{12}}\Intt{p}\Intt{q}\Intt{l} \,e^{-\bar rq\cdot q -(\bar t-\bar r)l\cdot l -(\bar \a-\bar t) p\cdot p}\times \nn\\*
&\; \times \!\!\Bigg(\!\tfrac{i}{2}\bG^{\m\n}\bG^{\r\s}\bG^{\t\e}(p\!+\!q)_\n(q\!+\!l)_\s(l\!+\!p)_\e A_\m(p\!-\!q)A_\r(q\!-\!l)A_\t(l\!-\!p) \Sin{q\th p} \!\Sin{l\th q} \!\Sin{p\th l}\nn\\*
&\quad\; +\frac{i}{4}\LNC^8\bG^{\m\n}\Sin{q\th p} \!\Sin{l\th q} \!\Sin{p\th l} \Big((p+q)_\n A_\m(p-q)\Th^{ab}(q-l)\Th_{ab}(l-p) \nn\\*
&\quad\qquad +\Th^{ab}(p-q)\Th_{ab}(l-p)(q+l)_\n A_\m(q-l)+\Th^{ab}(p-q)\Th_{ab}(q-l)(l+p)_\n A_\m(l-p)\Big) 
\!\Bigg).
\end{align}
In order to be able to perform these integrals explicitly, we need to find an appropriate 3-dimensional coordinate transformation so that all fields become independent of one of these variables. One possibility would be the following:
}
\begin{align}
\left(\begin{array}{c}
P \\
Q \\
L 
\end{array}\right)=\left(\begin{array}{ccc}
1 & 0 & 1\\
1 & -1 & 0\\
0 & 1 & -1
\end{array}\right)\left(\begin{array}{c}
p \\
q \\
l 
\end{array}\right)
. 
\end{align}
Then e.g. $A_\m(p-q)A_\r(q-l)A_\t(l-p) = A_\m(Q)A_\r(L)A_\t(-Q-L)$ is independent of $P$, 
allowing explicit integration over that variable.
We hence find
\begin{align}
&\quad \Tr\left(Ve^{-rH_0}Ve^{-(t-r)H_0}Ve^{-(\a-t)H_0}\right) \Big|_{\textrm{3fields}} \nn\\*
&= \frac{-\tr\one}{\LNC^{12}}\int\!\frac{d^4Pd^4Qd^4L}{(2\pi\LNC^2)^{6}} \,e^{-\frac{\bar r}{4}(P-Q+L)\cdot(P-Q+L) -\frac{\bar t-\bar r}{4}(P-Q-L)\cdot(P-Q-L) -\frac{\bar \a-\bar t}{4} (P+Q+L)\cdot(P+Q+L)}\times \nn\\*
&\; \times \!\Bigg(\!\frac{i}{2}\bG^{\m\n}\bG^{\r\s}\bG^{\t\e}(P+L)_\n(P-Q)_\s P_\e A_\m(Q)A_\r(L)A_\t(-Q-L) \SiN{(P+L)\th Q}{4}{\!}\times \nn\\*
&\quad\quad\times\SiN{(P-Q)\th L}{4}{\!} \!\SiN{(Q+L)\th P}{4}{\!}\nn\\
&\quad\; +\frac{i}{4}\LNC^8\bG^{\m\n}\SiN{(P+L)\th Q}{4}{\!} \SiN{(P-Q)\th L}{4}{\!} \!\SiN{(Q+L)\th P}{4}{\!} 
\Big(\!(P\!+\!L)_\n A_\m(Q)\Th^{ab}(L)\Th_{ab}(-Q\!-\!L) \nn\\*
&\quad\quad +\Th^{ab}(Q)(P-Q)_\n A_\m(L)\Th_{ab}(-Q-L)+\Th^{ab}(Q)\Th_{ab}(L) P_\n A_\m(-Q-L)\Big) 
\!\Bigg).
\end{align}
Integration over $P$ yields a rather lengthy expression, and in order to continue with the parameter integrals (once more regularized by a cutoff $\L$), we consider the following approximations \emph{after} the $P$ integration:
\begin{enumerate}
\item We replace all phases such as $e^{i Q\th L}$ by $1$, which amounts to dropping 
higher-order terms in the $\theta$-expanded action resp. commutators in the NC action.
\item Since the divergent contributions are due to the parameter region $t<r<\a\approx0$, 
we keep only the leading terms in the exponent of type $\propto\inv\a$ 
(i.e. dropping contributions such as $rt/\a$,
which would correspond to higher-order terms in the action).
\item Finally, an expansion of type $e^{\frac{\tilde{Q}\cdot\tilde{Q}}{\a}}\approx 
\left(1+\frac{\tilde{Q}\cdot\tilde{Q}}{\a}+\ldots\right)$ is made, as explained in \secref{sec:strategy}.
\end{enumerate}
Therefore, we get the following leading order contributions
\begin{subequations}
\begin{align}
&\quad\int\limits_0^{\infty}d\bar\a\int\limits_0^{\bar \a}d\bar t\int\limits_0^{\bar t}d\bar r\int d^4P 
e^{-\frac{\bar r}{4}(P-Q+L)\cdot(P-Q+L) -\frac{\bar t-\bar r}{4}(P-Q-L)\cdot(P-Q-L) -\frac{\bar \a-\bar t}{4} (P+Q+L)\cdot(P+Q+L)}
 \times \nn\\*
&\quad\qquad \times e^{-\frac{1}{\bar\a\L^2}} \frac{\bar r}{\bar\a} (P+L)_\n(P-Q)_\s P_\e \SiN{(P+L)\th Q}{4}{\!}\SiN{(P-Q)\th L}{4}{\!} \!\SiN{(Q+L)\th P}{4}{\!} \nn\\*
&\approx -\frac{\pi ^2 \L^4}{3} \Bigg(\tilde{Q}\cdot \tilde{Q}
   \left(\tilde{L}_{\e } \bar G_{\nu  \s}
   +\tilde{L}_{\nu } \bar G_{\e  \s }+\tilde{L}_{\s} \bar G_{\e  \nu }\right)
  +\tilde{L}\cdot \tilde{L} \left(\tilde{Q}_{\e } \bar G_{\nu  \s }
   +\tilde{Q}_{\nu } \bar G_{\e  \s} 
  + \bar G_{\e  \nu } \tilde{Q}_{\s } \right)  \nn\\*
&\quad +2 \tilde{L}\cdot \tilde{Q}
   \left(\bar G_{\nu  \s } (\tilde{L}
   +\tilde{Q})_{\e }+\bar G_{\e  \s }
   (\tilde{L}+\tilde{Q})_{\nu }+\bar G_{\e
    \nu } (\tilde{L}+\tilde{Q})_{\s}\right) \nn\\*
&\quad + 2\tilde{Q}_{\s } \tilde{L}_{\nu }
   (\tilde{L}+\tilde{Q})_{\e }
  +2\tilde{L}_{\e } \tilde{Q}_{\nu }(\tilde{Q}_{\s } +\tilde{L}_{\s})
  +2 \tilde{L}_{\s } \tilde{Q}_{\e }(\tilde{L}_{\nu } +\tilde{Q}_{\nu })
\Bigg)
, \label{V3-approx-1}
\end{align}
\begin{align}
&\quad\int\limits_0^{\infty}d\bar\a\int\limits_0^{\bar \a}d\bar t\int\limits_0^{\bar t}d\bar r\int d^4P 
e^{-\frac{\bar r}{4}(P-Q+L)\cdot(P-Q+L) -\frac{\bar t-\bar r}{4}(P-Q-L)\cdot(P-Q-L) -\frac{\bar \a-\bar t}{4} (P+Q+L)\cdot(P+Q+L)}
 \times \nn\\*
&\quad\qquad \times e^{-\inv{\bar\a\L^2}} \frac{\bar r}{\bar\a} (P+\textrm{shift})_\n\SiN{(P+L)\th Q}{4}{\!}\SiN{(P-Q)\th L}{4}{\!} \!\SiN{(Q+L)\th P}{4}{\!} \nn\\*
&\approx -\frac{1}{6} \pi ^2 \L^2 \left(\tilde{L}_{\nu }
   \tilde{Q}\cdot \tilde{Q}+\tilde{L}\cdot \tilde{L}
   \tilde{Q}_{\nu }+2 \left(\tilde{L}_{\nu }+\tilde{Q}_{\nu
   }\right) \tilde{L}\cdot \tilde{Q}\right)
. \label{V3-approx-2}
\end{align}
\end{subequations}
Putting everything together, \eq{V3-approx-1} and \eq{V3-approx-2} finally lead to the 3-field contribution 
\begin{align}
&\quad\int\limits_0^{\infty}\frac{d\a}{\a}\int\limits_0^{\a}dt\int\limits_0^{t}dr r
 \Tr\left(Ve^{-r H_0}Ve^{-(t-r)H_0}Ve^{-(\a-t)H_0}\right) 
e^{-\inv{\a L^2}} \Big|_{\textrm{3 fields}} \nn\\*
&= \frac{i\tr\one}{6}   \int\!\frac{d^4Qd^4L}{(2\pi\LNC^2)^{4}} \Bigg\{
 \frac{\L^4}{4\LNC^4}\bG^{\n\n'}\bG^{\s\s'}\bG^{\e\e'} A_{\n'} (Q)A_{\s'}(L)A_{\e'}(-Q-L)\times \nn\\
& \quad \times\!\Bigg(\!\tilde{Q}\cdot \tilde{Q}
   \big(\tilde{L}_{\e } \bar G_{\nu  \s}
   +\tilde{L}_{\nu } \bar G_{\e  \s }+\tilde{L}_{\s} \bar G_{\e  \nu }\big)\!
  +\tilde{L}\cdot \tilde{L} \big(\tilde{Q}_{\e } \bar G_{\nu  \s }
   +\tilde{Q}_{\nu } \bar G_{\e \s}  + \bar G_{\e  \nu } \tilde{Q}_{\s } \big)\! 
 + 2\tilde{Q}_{\s } \tilde{L}_{\nu } (\tilde{L}+\tilde{Q})_{\e } \nn\\*
&\quad 
  +2 \tilde{L}_{\s } \tilde{Q}_{\e }(\tilde{L}_{\nu } +\tilde{Q}_{\nu })  
 +2 \tilde{L}\cdot \tilde{Q}
   \left(\bar G_{\nu  \s } (\tilde{L} +\tilde{Q})_{\e }+\bar G_{\e  \s } (\tilde{L}+\tilde{Q})_{\nu }
  +\bar G_{\e\nu } (\tilde{L}+\tilde{Q})_{\s}\right) \nn\\*
&\quad  +2\tilde{L}_{\e } \tilde{Q}_{\nu }(\tilde{Q}_{\s }+\tilde{L}_{\s})\! \Bigg) 
 + \frac{\L^2\LNC^4}{16}\bG^{\m\n} \left(\tilde{L}_{\nu }
   \tilde{Q}\cdot \tilde{Q}+\tilde{L}\cdot \tilde{L}
   \tilde{Q}_{\nu }+2 \left(\tilde{L}_{\nu }+\tilde{Q}_{\nu
   }\right) \tilde{L}\cdot \tilde{Q}\right) \times \nn\\
&\quad \times\! \left(A_\m(Q)\Th^{ab}(L)\Th_{ab}(-Q\!-\!L)+\Th^{ab}(Q)A_\m(L)\Th_{ab}(-Q\!-\!L)+\Th^{ab}(Q)\Th_{ab}(L)A_\m(-Q\!-\!L)\right) \nn\\
&\quad +\mbox{higher orders}\Bigg\}
. \label{V3-A3-approx}
\end{align}
Note that the replacement $\a = \LNC^4\bar \a$ etc. provides a factor $\LNC^{12}$.

\section{Effective NC gauge theory action}
\label{sec:effgaugethy}

Now we can recast the above results into an effective gauge theory action,
organized in terms of engineering dimension. 
As discussed in \secref{sec:strategy}, this  arises systematically due to the expansion in three small parameters 
$\epsilon_L(p)$, $(p\th q) \sim \frac{p^2}{\LNC^2}$ and $\frac{p^2}{\L^2}$, imposing
the IR condition \eq{IR-regime}.
We will systematically compute all terms of operator dimension $\leq 6$.

It is important to note that since $V$ is either linear or quadratic in the fields $(A,\vph)$,
there is only a finite number of terms in the perturbation expansion
which can produce gauge theory terms involving $n$ fields (i.e. of order $V^n$ up to $V^{2n}$). 
Therefore these are completely determined by the perturbative expansion.

We will first consider the  quadratic terms in $A_\mu$ resp. $\phi^i$, which
arise from the first and second order terms
in $V$. They may contain arbitrarily high powers of momenta resp. 
derivatives. Due to translational invariance \eq{transl-inv},
the leading term in this expansion is quadratic in momenta
and quadratic in the fields, of type $\L^4 \int \cO(p^2(A,\vph)^2)$.
This is the usual vacuum energy contribution in 
quantum field theory, which diverges as $\L^4$. In the present context, we will denote it as 
``potential'' term since it governs the vacuum structure of the NC brane solution
in the flat case, in particular $\theta^{\mu\nu}$ and the dilaton. 
Gauge invariance then requires the presence of certain cubic terms in the 
fields, which 
will be verified in detail\footnote{The quartic terms are not verified here in order
to keep the paper within reasonable bounds.}.

Next, we will analyze the dimension 6 operators 
with structure $\L^2\! \int\! \cO(p^4 (A,\varphi)^2)$ in a similar way,
leading to curvature-type terms. Again, gauge invariance will be verified
up to the cubic terms in the fields. However, there will also be terms proportional 
$\L^6\!\int\! \cO((A,\vph)^2 p^4)$ due to UV/IR mixing resp. factors of $\epsilon_L(p)$, which 
also correspond to curvature contributions. Higher-order terms 
of dimension 8 and higher will not be analyzed further in this paper.

To make contact with the commutative case, one should consider the case 
$\frac \L\LNC \ll 1$. Then UV/IR mixing terms would 
give an expansion in this small parameter.

\subsection{\texorpdfstring{$\L^4$}{Lambda**4} potential terms}

\paragraph{Two field contributions.}

We have contributions from the first-order term \eq{first-order-result}
as well as from the second-order term \eq{2-field-collect}, where
both the $AA$ terms as well as the $\Th\Th$ terms contain $D\phi D\phi$. 
The complete action with engineering dimension 4
is as expected proportional to $\L^4$, given by 
{\allowdisplaybreaks
\begin{align}
\Gamma_{\L^4}^{(\textrm{2})}((A,\varphi)^2,p^2) 
&=\frac{\tr\one}{16}\,\frac{\L^4}{\LNC^8}\Intt{l} \sqrt{g}
\Big( l^2\varphi^i(-l)\varphi_i(l)
- 2\LNC^4\bar\th^{\m\n}l_\nu A_\m(l)\bar\th^{\r\s} l_\s A_\r(-l)\Big) \nn\\
&= \frac{\tr\one}{16}\,\frac{\L^4}{\LNC^4}\int\frac{d^4x}{(2\pi)^2} \sqrt{g}
\Big(g^{\a\b}\del_\a\varphi^i\del_\b\varphi_i
- \frac 12\LNC^4\bar\th^{\m\n} F_{\nu\m}\bar\th^{\r\s} F_{\s\r} \quad + \cO(A^3)\Big)
 \label{2-field-collect-b}
\end{align}
using \eq{lcdotl-identity}.
The result essentially gauge-invariant up to $\cO(A^3)$ and $\cO(A^4)$ terms, which will be recovered 
from higher-order terms. This is  
consistent with previous results \cite{Steinacker:2008a}, where the fermionic one-loop action 
was computed on $\R^4_\th$. 
Note that there is no renormalization of the bare Yang-Mills action \eq{YM-action}. 
}

\paragraph{Three field contributions.}

The dimension 6 contributions from $\cO(V^2)$ due to \eq{3-field-collect} proportional to 
$\L^4$ are given by
\begin{align}
\Gamma^{(2)}((A,\varphi)^3,p^3) 
&= -i \frac{\L^4\tr\one\sqrt{\bG}}{16\LNC^8} \int\!\!\frac{d^4qd^4l}{(2\pi\LNC^2)^{4}}\bigg\{
\!\! \left( q^2 \tilde l^{\mu} + l^2\tilde q^{\mu} + 2 q l (\tilde l^{\mu}+ \tilde q^{\mu})\right)\!\times \nn\\*
&\quad \times\! \left(\bG^{\r\s}\!A_\m(q)A_\r(-q\!-\!l)A_\s(l)
 +A_\m(q)\vph^i(-q\!-\!l)\vph_i(l) \right)   + \cO(p^4 A^3) \!\bigg\}  
\label{3-field-collect-V2}
\end{align}
using \eq{lcdotl-identity} and 
dropping higher-order terms arising e.g. from $(l\th q)^2$. 
The contributions from $\cO(V^3)$ due to \eq{V3-A3-approx} are
\begin{align}
\Gamma^{(3)}((A,\varphi)^3,p^3)
&=  \frac{i\L^4\tr\one}{16 \LNC^8}  \int\!\frac{d^4qd^4l}{(2\pi\LNC^2)^{4}} 
  A_\nu(q)A_\s(l)A_\e(-q-l) \times \nn\\
& \qquad \times\bigg(\bar G^{\e \s}
 \big(q^2 \tilde l^{\nu } + l^2 \tilde q^{\nu } + 2 q l (\tilde l^{\nu } + \tilde q^{\nu })\big)
+ 2 \LNC^4 \,\tilde{l}^{\s } \tilde{q}^{\e}(\tilde{l}^{\nu }  + \tilde{q}^{\nu }) \bigg)
\nn
\end{align}
(using $q \to -l-q$ at some point).
This cancels precisely the term involving  $G^{\mu\nu}A_\mu A_\nu$ in \eq{3-field-collect-V2}, 
and the combined 3-field contribution up to dimension 6 operators
proportional to $\L^4$ is 
\begin{align}
& \Gamma_{\L^4}((A,\varphi)^3,p^3)  \nn\\
&= -i \frac{\L^4\tr\one\sqrt{\bG}}{16\LNC^8} \int\!\frac{d^4qd^4l}{(2\pi\LNC^2)^{4}}
\bigg(A_\m(q)\vph^i(-q-l)\vph_i(l) 
 \bigg(2 q l (\tilde l^{\mu} + \tilde q^{\mu})  
 + q^2 \tilde l^{\mu} + l^2 \tilde q^{\mu} \bigg)  \nn\\
& \qquad -2 \LNC^4 \,(\tilde{l}^{\s }A_\s(l)) A_\nu(q) A_\e(-q-l) \tilde{q}^{\e}(\tilde{l}^{\nu }  + \tilde{q}^{\nu }) \bigg) \nn\\
&= -\frac{\L^4\tr\one\sqrt{\bG}}{16\LNC^4} \int\!\frac{d^4x}{(2\pi)^{2}}
\bigg(\bar\th^{\mu\nu}g^{\a\b} \big( 2\del_\a A_\m\del_\nu\vph^i\del_\b\vph_i
 - \del_\a\del_\b A_\m\vph^i\del_\nu\vph_i 
 - \del_\nu A_\m\vph^i\del_\a\del_\b\vph_i \big)  \nn\\
& \qquad -2 \LNC^4 \,(\bar\th^{\s\s'}\del_{\s'}A_{\s}) \bar\th^{\e\e'}\del_{\e'}A_{\nu}
\bar\th^{\nu\nu'}\del_{\nu' }A_{\e} \bigg) 
. \label{3-field-collect2-V2}
\end{align}
Now the $AAA$ terms can be simplified using
\eqn{
-2(\bar\th^{\s\s'}\del_{\s'}A_{\s}) \bar\th^{\e\e'}\del_{\e'}A_{\nu}\bar\th^{\nu\nu'}\del_{\nu' }A_{\e} 
&= (\bar\th^{\s\s'}F_{\s\s'}) \bar\th^{\e\e'}(F_{\e'\nu} + \del_{\nu}A_{\e'}) \bar\th^{\nu\nu'}\del_{\nu' }A_{\e} \nn\\
&= (\bar\th^{\s\s'}F_{\s\s'}) \bar\th^{\e\e'}
(\frac 12 F_{\e'\nu}\bar\th^{\nu\nu'}F_{\nu'\e}  - i[A_{\e'},A_{\e}]_\th)
}
up to quartic terms,
replacing the commutator with a Poisson bracket 
to leading order in $\th$. Here we used the identity
\be
F_{\e'\nu}\bar\th^{\nu\nu'}\del_{\nu' }A_{\e} = \frac 12 F_{\e'\nu}\bar\th^{\nu\nu'}F_{\nu'\e} 
\label{S3-gaugeinv}
\ee
(up to cubic terms),
which can be seen by renaming $\e'\leftrightarrow\nu, \nu'\leftrightarrow\e$.
Similarly, 
the $A\varphi\varphi$ terms can be simplified using partial integration as follows
\eqn{
&\quad \int\!\frac{d^4x}{(2\pi)^{2}} \bar\th^{\mu\nu}g^{\a\b} \big( 2\del_\a A_\m\del_\nu\vph^i\del_\b\vph_i
 - \del_\a\del_\b A_\m\vph^i\del_\nu\vph_i 
-  \del_\nu A_\m\vph^i\del_\a\del_\b\vph_i \big) \nn\\
&= \int\!\frac{d^4x}{(2\pi)^{2}}\bar\th^{\mu\nu}g^{\a\b} \big(3 \del_\a A_\m\del_\nu\vph^i\del_\b\vph_i
 + \del_\b A_\m\vph^i\del_\a\del_\nu\vph_i 
 - \frac 12 F_{\nu\mu}\vph^i\del_\a\del_\b \vph_i \big) + \textrm{h.o.} \nn\\
&= \int\!\frac{d^4x}{(2\pi)^{2}}\bar\th^{\mu\nu}g^{\a\b} \big(
2 (F_{\a\m} + \del_\mu A_\a)\del_\nu\vph^i\del_\b\vph_i
 + \frac 12  F_{\nu\m} \del_\b\vph^i\del_\a\vph_i \big) + \textrm{h.o.}
\label{Avpvp-term}
}
Thus we get
\begin{align}
 \Gamma_{\L^4}((A,\varphi)^3,p^3)  
&=  -\frac{\L^4\tr\one\sqrt{g}}{16\LNC^4} \int\!\frac{d^4x}{(2\pi)^{2}}
\bigg(g^{\a\b} \big(
2\bar\th^{\mu\nu}F_{\a\m}\del_\nu\vph^i\del_\b\vph_i 
- 2 i [A_\a,\vph^i]_\th\del_\b\vph_i\nn\\*
& \quad
 - \frac 12 (\bar\th^{\mu\nu}F_{\mu\nu}) \del_\b\vph^i\del_\a\vph_i
 + \LNC^4 \,(\bar\th^{\s\s'}F_{\s\s'}) \bar\th^{\e\e'}
(\frac 12 F_{\e'\nu}\bar\th^{\nu\nu'}F_{\nu'\e}  - i[A_{\e'},A_{\e}]_\th) \bigg)\! .\nn
\end{align}
The commutator terms provide precisely the missing cubic terms for the 
gauge-invariant completion of \eq{2-field-collect-b}, so that the complete induced potential 
 including all terms with dimension up to 6 is given by 
\begin{fboxalign}
 \Gamma_{\L^4}((A,\vp,p)^{4-6}) &= \frac{\tr\one}{16}\,\frac{\L^4}{\LNC^4}\int\frac{d^4x}{(2\pi)^2} \sqrt{g}
\Big(g^{\a\b}D_\a\varphi^i D_\b\varphi_i \nn\\
& \quad - \frac 12\LNC^4 \big( \bar\th^{\m\n} F_{\nu\m}\bar\th^{\r\s} F_{\s\r}
 + (\bar\th^{\s\s'}F_{\s\s'}) (F\bar\th F\bar\th)\big) \nn\\
& \quad - 2\bar\th^{\nu\mu}F_{\m\a} g^{\a\b}\del_\nu\vph^i\del_\b\vph_i 
 +\frac 12(\bar\th^{\mu\nu}F_{\mu\nu}) g^{\a\b}  \del_\b\vph^i\del_\a\vph_i  
\quad + \textrm{h.o.}\Big), 
 \label{potential-collect}
\end{fboxalign}
which is manifestly gauge invariant. Remarkably, this result 
will be precisely recovered from the simple matrix model effective action 
\eq{vacuuum-energy-matrix}. 
This is a strong confirmation of the $SO(D)$ symmetry,
demonstrating the power of the matrix model point of view.

\subsection{\texorpdfstring{$\cO(\L^2)$}{O(Lambda**2)} curvature terms}

Consider now the dimension 6 terms proportional to $\L^2$. 
They must be gauge-invariant by themselves. 
The terms quadratic in the fields are
{\allowdisplaybreaks
\begin{align}
&\Gamma_{\L^2}((A,\varphi)^2,p^4)  \nn\\*
&= -\frac 14\frac{\tr\one}{24}\,\frac{\L^2}{\LNC^4}\Intt{q} \sqrt{g}
   \Bigg( 6\LNC^8 \Th^{ab}(q)\Th_{ab}(-q)\tilde q\cdot\tilde q \nn\\*
 &\quad  -(\bar G_{\nu\s} q\cdot q - q_\n q_\s)\tilde q\cdot\tilde q   \bG^{\m\n}\bG^{\r\s}A_\m(q)A_\r(-q) 
- 2 q\cdot q \qt_\nu \tilde q_\s  \bG^{\m\n}\bG^{\r\s}A_\m(q)A_\r(-q) \Bigg)\nn\\
&= \frac 14\frac{\tr\one \L^2}{24}\int\!\!\frac{d^4x}{(2\pi)^2} \sqrt{g}
   \Bigg(- 6\LNC^{12}\Th^{ab}\bar\Box_g\Th_{ab}  \nn\\*
& \qquad  + \frac 12 \LNC^{4}  F_{\mu\nu} \bar\Box_g F_{\mu'\nu'} \bG^{\m\m'} \bG^{\n\n'}
+ \frac 12 \LNC^4 (\bar\theta^{\m\n} F_{\m\n}) \bar\Box_G(\bar\theta^{\r\s} F_{\r\s}) 
\quad + \cO(A^3) \Bigg)  \nn\\
&= \frac 14\frac{\tr\one \L^2}{24}\int\!\!\frac{d^4x}{(2\pi)^2} \sqrt{g}
   \Bigg(-\frac{11}2 \LNC^4 F_{\r\e}\bar\Box_g F_{\s\t} \bG^{\r\s}\bG^{\e\t}
 -12 \LNC^{8}\bar\Box_g\varphi^i \bar\Box\varphi_i    \nn\\
& \qquad  + \frac 12 \LNC^4 (\bar\theta^{\m\n} F_{\m\n}) \bar\Box_G(\bar\theta^{\r\s} F_{\r\s}) 
\quad + \cO(A^3) \Bigg)  
. \label{2-field-collect-curv1}
\end{align}
This is indeed gauge-invariant
up to $\cO(A^3)$ terms, which should be recovered later.
Note that there are different Laplacians \eq{Moyal-laplaceops} in this expression
such as $\bar\Box_g$, corresponding to the matrix operators 
but for the {\moyal} background. They contain powers of $\LNC$.
This will facilitate the comparison with the matrix model expressions.
}

\paragraph{Three field contributions.}

The 3-field contributions proportional to $\L^2$ from $\cO(V^2)$ start at $\cO(p^5)$
i.e. dimension 8, which is not considered here.
The contributions from $\cO(V^3)$ due to \eq{3-field-collect} are given by
\begin{align}
&\Gamma^{(3)}((A,\varphi)^3,p^3) \nn\\
&= \frac{i\tr\one}{12}   \int\!\frac{d^4Qd^4L}{(2\pi\LNC^2)^{4}} \Bigg(
  \frac{\L^2\LNC^4}{16}\bG^{\m\n} \left(\tilde{L}_{\nu }
   \tilde{Q}\cdot \tilde{Q}+\tilde{L}\cdot \tilde{L}
   \tilde{Q}_{\nu }+2 \left(\tilde{L}_{\nu }+\tilde{Q}_{\nu}\right) \tilde{L}\cdot \tilde{Q}\right) \times \nn\\
&\quad \times\!\! \left(\! A_\m(Q)\Th^{ab}(L)\Th_{ab}(-Q\!-\!L)
 +\Th^{ab}(Q)A_\m(L)\Th_{ab}(-Q\!-\!L)
 +\Th^{ab}(Q)\Th_{ab}(L)A_\m(-Q\!-\!L)\!\right) \!\!\! \Bigg)   \nn\\
&=  \frac{\tr\one}{4} \frac{\L^2 \LNC^4}{16}\int\!\!\frac{d^4 x}{(2\pi)^{2}} 
 \bar\theta^{\m\n} g^{\a\b} 
\left(\del_\a\del_\b A_\m \del_{\nu} \Th^{ab}\Th_{ab}
 + \del_{\nu} A_\m \del_\a\del_\b \Th^{ab}\Th_{ab} 
 - 2 \del_\a A_\m \del_\b \Th^{ab} \del_{\nu} \Th_{ab}
\right)  \nn\\
&= -\frac{\tr\one}{4} \frac{\L^2 \LNC^4}{16}\int\!\frac{d^4 x}{(2\pi)^{2}} 
 \bar\theta^{\m\n} g^{\a\b}
\left(2 (F_{\a\m} + \del_\mu A_\a)\del_\nu\Th^{ab}\del_\b\Th_{ab}
 + \frac 12  F_{\nu\m} \del_\b\Th^{ab}\del_\a\Th_{ab}
+\textrm{h.o.} \right)  \nn\\
&\sim \frac{\L^2 \LNC^4 \tr\one}{64}\!\int\!\!\frac{d^4 x}{(2\pi)^{2}} 
g^{\a\b} \!\left(\! 2 i [A_\a,\Th^{ab}]_\th \del_\b\Th_{ab} 
- 2 \bar\theta^{\m\n} F_{\a\m}\del_\nu\Th^{ab}\del_\b\Th_{ab} 
- \tinv2 \bar\theta^{\m\n} F_{\nu\m} \del_\b\Th^{ab}\del_\a\Th_{ab}
\!\right)  
\end{align}
using \eq{Avpvp-term}.
The middle term is clearly part of a covariant derivative 
$ D_\a \Th^{ab} D_\b\Th_{ab}$. However 
these are also dimension 8 operators which we will not consider any further in this paper.

\subsection{\texorpdfstring{$\cO(\L^6)$}{O(Lambda**6)} curvature terms}

Finally consider the dimension 6 terms proportional to $\L^6$, quadratic in the fields. 
There are also contributions from the $\cO(V)$ terms \eq{Gamma-eff-1}, due to 
\be
\L^2 -  \Leff^2(l) = \L^2 -\frac{\L^2}{1 + \inv4\frac{\L^2}{\LNC^4}\, l^2}  
= \inv4 \frac{\L^4 l^2}{\LNC^4} - \inv{16} \frac{\L^6 l^4}{\LNC^8} + \ldots 
\,. \label{Leff-expand-2}
\ee
This gives
\begin{align}
&\Gamma_{\L^6}((A,\varphi)^2,p^4) \nn\\
&= \frac{\tr\one}{16}\,\frac{\L^6}{\LNC^{12}}\Intt{l} \sqrt{g} 
\Bigg(-\frac 14 (l^2)^2 \varphi^i(-l)\varphi_i(l)
 + \LNC^4\bG^{\m\n}\bG^{\r\s}A_\m(q)A_\r(-q) q^2\tilde{q}_{\nu } \tilde{q}_{\sigma }  \Bigg)\nn\\
 &= 
\frac{\tr\one}{16}\,\frac{\L^6}{\LNC^{8}}\int\frac{d^4 x}{(2\pi)^2} \sqrt{g} 
\Bigg(-\frac 14 \LNC^{16}\bar\Box_g\varphi^i \bar\Box_g\varphi_i
 + \frac 14\LNC^{12} (\bar\th^{\m\n} F_{\nu\m})\bar\Box_g(\bar\th^{\r\s}F_{\s\r})
\quad + \cO(A^3)  \Bigg)
. \label{2-field-collect-curv2}
\end{align}
Again this is indeed gauge-invariant, which constitutes a non-trivial check of our 1-loop 
computations.
The 3-field contributions at $O(\L^6)$ due to \eq{3-field-collect}
as well as the
missing terms for the gauge-invariant completion of $F$ or $\bar\Box_g$
have dimension 8 or higher (upon expanding $[A,A]$), which we do not consider here. 

This $\L^6$ contribution is expected to arise from $\cO(X^{14})$ terms in the matrix model 
such as $\Box_g X^a \Box_g X^a$, which however will not be worked out in this paper. 
Note that this term is comparable with the $\L^2$ contribution if $\L \approx \LNC$,
but is negligible if $\L \ll \LNC$.
It is also worth pointing out that
this contribution may not be obtained in a semi-classical analysis
along the lines of~\cite{Steinacker:2008a,Klammer:2009dj}, because it involves a higher-order 
contribution in the UV/IR mixing term \eq{Leff-expand-2}.

\subsection{Free contribution}

The induced action resp. vacuum energy for 
the free case  (i.e. the constant term in \eq{engineering-expand}) is given by
\eqn{
\Gamma_L[\bar X]
&= -\inv{2}\Tr\int\limits_0^\infty\frac{d\a}{\a} e^{-\a\sD_0^2-\inv{\a L^2}}  
=  -\inv{2}\tr \one\, \int\limits_0^\infty\frac{d\bar\a}{\bar\a} 
 \int \frac{d^4 p}{(2\pi)^2} e^{-\bar\a p\cdot p -\inv{\a \L^2}} 
\int \frac{d^4 x}{(2\pi)^2}\,\sqrt{g}  \nn\\
&=  -\frac{\L^4 \,\tr \one}{8} \, \int \frac{d^4 x}{(2\pi)^2}\,\sqrt{g} 
\,. \label{vacuum-free}
}
Remarkably, this short computation --- along with general geometrical considerations 
--- suffices to predict all of the above loop computations (and beyond)
for the potential, as explained below.

\section{Effective matrix model action}
\label{sec:effMM}

In this section, we will determine an effective matrix model action  
which reproduces the induced gauge theory action obtained above.
The scaling law \eq{basic-scaling} combined with gauge invariance suggests to 
consider action functionals of the form
\be
\Gamma_L[X] = \Tr \cL\Big(\frac{X^a}{L}\Big) .
\label{eff-action}
\ee
Since we need to reproduce terms which diverge as $\sim \L^n = L^n \LNC^{2n}$ for $n>0$, 
it is clearly not sufficient to assume that $\cL(X)$ is a polynomial or a power series in $X^a$.
For example, the effective matrix model action corresponding to the 
vacuum energy contributions \eq{potential-collect} must be homogeneous functions of degree $-4$ in $X^a$.
This lack of ``analyticity'' in $X^a$ should not be surprising, since
the loop computation is based on the assumption of a
4-dimensional background, described by a deformation of $\R^4_\th$. On 
such 4-dimensional backgrounds,  one can generalize the polynomial functions to a certain
class of holomorphic functions in $X^a$ which are analytic {\em near such backgrounds}. This 
is indeed what happens, which implies the existence of singularities at other 
locations in the moduli space of Hermitian matrices. This should of course be expected,
since the same matrix model accommodates spaces of different dimensions, 
which surely lead to very different quantum effects.

Gauge invariance strongly restricts the possible form of the action. 
We restrict ourselves to single-trace terms, which is
sufficient for the present context without non-Abelian gauge fields.
Furthermore, the $SO(D)$ symmetry of the bare matrix model and the 
$SO(D)$-invariant regularization \eq{TrLog-id} strongly suggest that the effective action should also 
be manifestly invariant under $SO(D)$. However this is a non-trivial statement,
which might be spoiled by anomalies
due to the infinite-dimensional nature of the matrix model. 
Nevertheless, we will indeed find an effective matrix model
action with manifest $SO(D)$ symmetry, which reproduces 
all of the loop results as far as we can verify them. 
This is a highly non-trivial result, which predicts infinitely many higher-order terms
in the induced action of {\nc} gauge theory.  
All terms of dimension 6 are verified in detail, 
providing  very strong 
support for the $SO(D)$ symmetry and the effective action given below.

It should be pointed out that since the effective matrix model 
action holds for generic deformations of 
$\R^4_\th$, it provides a background-independent action for gravity 
(as well as non-Abelian gauge theory) for 
4-dimensional non-commutative branes $\cM^4_\th \subset \R^{10}$. 
This is a key issue of emergent gravity.

\subsection{Effective potential \texorpdfstring{$V(X)$}{V(X)}}
\label{sec:potential}

Due to translational invariance \eq{transl-inv},
the effective action $\Gamma_L = \Tr \cL(X)$ can be written in terms of commutators, and
organized in terms of the order of commutators. 
Commutators correspond to derivative operators for the gauge fields. They
may arise as simple commutators $[X^a,X^b]$, or multiple commutators such as
$[X^a,[X^b,X^c]]$ corresponding to higher derivatives. 
The leading term $\cL(X) = V(X) + \textrm{h.o.}$ in such a ``momentum expansion'' of the effective matrix action
will have only simple commutators, resp. 
first-order derivatives of the gauge theory language. We will denote such simple-commutator terms as 
{\em effective potential} $V(X) \equiv V([X,X])$. From the emergent gravity point of view, 
it depends only on the tensor fields $g_{\mu\nu}(x)$ and $\theta^{\mu\nu}(x)$ rather than their
derivatives (notably curvature). Hence $V(X)$ will govern the vacuum in the flat case, 
justifying the name potential. 
It corresponds to the 
vacuum energy, cf. \cite{Blaschke:2010rg} for a related discussion.

Taking into account (or assuming) also the $SO(D)$ symmetry, 
it follows that $V(X)$  can be written in terms of contractions of 
$\Theta^{ab}$ with $g_{ab} = \d_{ab}$. Since different orderings of products of $\Theta^{ab}$ 
differ by commutators, 
it is enough to consider only contractions of 
neighbouring indices, in the cyclic sense.
This means that $V(X)$ can be written in terms of products of\footnote{Note that $J$ defines an 
almost-complex structure under certain natural conditions \cite{Steinacker:2008ya}.} 
\be
J^a_b := i\Theta^{ac}g_{cb} \, = \, [X^a,X_b] , \qquad  \tr J \equiv J^a_a = 0 \,, 
\ee
where $\tr$ will denote the trace over the $SO(D)$ indices.
Since $\Theta^{ab}$ is anti-symmetric, 
$V(X)$ can be written in terms of products of traces of even powers of $J$.

By analyzing the possible effective potential terms in the semi-classical limit,
we can further narrow down the possible form of $V(X)$ by 
recalling the (semi-classical) characteristic equation which holds for generic 
4-dimensional branes $\cM^4 \subset \R^D$ \cite{Steinacker:2008ya,Blaschke:2010qj}:
\be
(J^4)^a_b -\frac 12 (\tr J^2)\, (J^2)^a_b \, \sim \, - \LNC^{-8}(x) (\cP_T)^a_b \,,  \qquad 
J^5 - \frac 12 (\tr J^2)\, \, J^3 \, \sim\,  - \LNC^{-8}(x)\, J \,, 
\label{char-4D}
\ee
where the semi-classical object $\cP_T^{ab}:=g^{\m\n}\pa_\m x^a\pa_\n x^b$ 
is the projector on the tangential bundle (cf. \eqref{eq:matrix-splitting}), 
and $\LNC^{-4}(x) = \sqrt{|\theta^{\mu\nu}(x)|}$ denotes the scale defined by the full 
NC structure $\theta^{-1}_{\mu\nu}(x) = \bar \theta^{-1}_{\mu\nu} + F_{\mu\nu}$. 
This follows from the fact that $J$ defines a 4-dimensional 
anti-symmetric tensor field in the semi-classical limit.
Furthermore, 
\be
\tr J^2 \, \sim \, \LNC^{-4}(x)\, G^{\mu\nu}(x) g_{\mu\nu}(x) \, \equiv  \LNC^{-4}\, (Gg) .
\ee
Due to these relations, any expression
\bea
\tr J^{2n} \,, \quad n\geq 6
\,, \label{Hab-chain}
\eea
can be reduced semi-classically\footnote{In the fully NC case, this argument strictly speaking
applies only up to higher-order corrections in $\theta^{\mu\nu}$. 
Nevertheless, it appears to work.} to a function of $\tr J^4$ and  $\tr J^2$. 
Therefore the most general (single-trace) form of the effective potential compatible with the 
scaling  has the form 
$V(X) = V\big(\frac{L^4}{\tr J^2},\frac{\tr J^4}{(\tr J^2)^2}\big)$, or equivalently
\be
V(X) = V\Big(-\frac{L^4}{\tr J^2},\frac{-\tr J^4 + \frac 12 (\tr J^2)^2}{(\tr J^2)^2}\Big) \,
\,\sim \,  V\Big(\frac{L^4}{\LNC^{-4}(x)\, (G g)},\frac{4}{(Gg)^2}\Big) .
\ee
Note that both arguments of $V(z_1,z_2)$ are Hermitian matrices
in the adjoint of $U(\infty)$,
which are invertible on 4-dimensional NC branes with $\LNC \neq 0$. Therefore 
the above $V(X)$ is an admissible candidate for the effective potential, and 
$\Tr V(X)$ is well-defined and gauge invariant
provided $V$ is analytic (and real-valued) if both $z_1$ and $z_2$ are on the positive real line. 

It only remains to determine the function $V(z_1,z_2)$.
Remarkably, this can be determined already from the vacuum energy in the free\footnote{Recall that all 
higher-order commutators vanish on $\R^4_\theta$.} 
case \eq{vacuum-free}, 
which is proportional to $\L^4$ (in agreement with \eq{potential-collect}) 
and does not depend on $(Gg)$ (which measures the deviation of $\theta^{\mu\nu}$ from 
the (anti-)selfdual case). It follows that $V \sim z_1/\sqrt{z_2}$, i.e.
\begin{align}
&\Gamma_L[X]  = \,  \Tr V(X) + \textrm{h.o.}, \nn\\
& \fbox{$ \Tr V(X) =  -\frac 14 \Tr \bigg(\frac{L^4}{\sqrt{-\tr J^4 + \frac 12 (\tr J^2)^2}}\bigg)  
\, \sim \,  -\frac 18\frac 1{(2\pi)^2} \int\! d^4 x\, \L^4(x) \sqrt{g} \,. 
$}
\label{vacuuum-energy-matrix}
\end{align}
The normalization factor is obtained from \eq{vacuum-free},
identifying the rhs with the semi-classical vacuum energy. 
``h.o.'' stands for higher-order commutator terms, 
i.e. curvature contributions etc. which will be considered below.
As a quick check observe that $\sqrt{g}\sim\sqrt{\bar g}(1+\inv2 g\del\phi\del \phi+\ldots)$
for a non-trivial background,
in agreement with \eq{potential-collect}.
We therefore conjecture that the potential term in the induced 
effective action for generic 4-dimensional backgrounds is given by \eq{vacuuum-energy-matrix}. 
We will verify below that this reproduces precisely the
above loop computations up to the order computed here. 
This demonstrates the power of the geometric view of the matrix model.

Note that the bare matrix model 
\eq{YM-action} could be viewed as a potential with 
$V(z_1,z_2) = \frac 1{z_1}$. However that term would be proportional to 
$\L^{-4}$ (rather than $\log\L$ as one might suspect), which is 
highly suppressed and not considered here.

It is remarkable that the vacuum energy \eq{vacuuum-energy-matrix} has a very non-trivial
structure from the matrix model point of view. This should have very interesting physical 
consequences in particular for the cosmological constant problem. 
However a more complete picture, notably including the 
contributions from the bosonic sector, is 
required before its physical consequences can be addressed. 

An analogous expression should work in other dimensions, which should be studied elsewhere.
In the presence of non-Abelian gauge fields, the action also will need to be generalized.

In order to verify \eq{vacuuum-energy-matrix}, we simply have to include fluctuations 
to the matrices $X^a$ around $\R^4_\th$, expand it  to any desired order, 
and compare the result with the induced gauge theory action 
computed above. This will be done in detail below. 
However a partial comparison can be made easily using the geometrical point of view.

\paragraph{Semi-classical analysis}

The gauge sector of the above terms can be obtained quickly 
by setting  $\del_\mu \phi^i = 0$, resp. more generally by 
going to  normal embedding coordinates \cite{Steinacker:2010rh}. Then
\eq{char-4D} gives for the semi-classical limit
\be
\tr J^4  - \inv2 (\tr J^2)^2 \,\sim\, -\LNC^{-8}(x)\, \tr \cP_T
\,=\, -4 ({\rm Pfaff}(\th(x)))^2  
\,=\, - \frac 1{16} \(\vare_{\mu\nu\a\b} \th^{\mu\nu}(x) \th^{\a\b}(x)\)^2
\ee
where $\th_{\mu\nu}^{-1}(x) = \bar\th^{-1}_{\mu\nu} + F_{\mu\nu}(x)$ is the full 
symplectic structure, and $\cP_T$ is the projector on the tangential bundle. 
The Pfaffian of an anti-symmetric $4\times 4$ matrix is defined as  
\bea
\textrm{Pfaff}(F_{\m\n}) = \inv8 \vare^{\m\n\r\eta}F_{\m\n}F_{\r\eta} 
\, \label{pfaff-explicit}
\eea
and coincides with $\pm \sqrt{|F|}$.
This yields
\eqn{
\sqrt{-\tr J^4  + \frac 12 (\tr J^2)^2} \,\, &\sim \,  
\frac 14\varepsilon_{\mu\nu\a\b} \th^{\mu\nu} \th^{\a\b} =
\frac 14\det\bar\th\, \varepsilon^{\mu\nu\a\b} (\bar\th^{-1}_{\mu\nu} + F_{\mu\nu})
 (\bar\th^{-1}_{\a\b} + F_{\a\b})\nn\\
&= 2\LNC^{-4} \, \(1 - \frac 12 \bar\th^{\mu\nu} F_{\mu\nu} +  \LNC^{-4}{\rm Pfaff} (F) \)
}
which is actually correct to all orders in $F$.
Here we use
\bea
\frac 18\varepsilon^{\mu\nu\a\b} \bar\th^{-1}_{\mu\nu} = -\frac 14\rm{Pfaff}\bar\th^{-1} \bar \th^{\a\b} 
= -\frac 14\LNC^4 \bar \th^{\a\b} ,
\eea
which can be seen e.g. 
using the standard form \eq{theta-standard-general-E} for $\bar\th$.
Using \eq{FF-theta} this gives
\eqn{
\inv{\sqrt{-\tr J^4  + \frac 12 (\tr J^2)^2}}  &= 
\frac 12\LNC^{4} \Big(1 + \frac 12 \bar\th^{\mu\nu} F_{\mu\nu}  + \frac 14 (\bar\th F)^2 + \frac 18 (F^{\mu\nu}\th_{\mu\nu})^3
-   \bar\th^{\mu\nu} F_{\mu\nu} \LNC^{-4}{\rm Pfaff} (F) \nn\\
&\quad - \LNC^{-4}{\rm Pfaff} (F) + \cO(F^4)   \Big) \nn\\
&= \frac{\LNC^{4}}2 \left(\! 1 + \inv2 \bar\th^{\m\n} F_{\m\n}  + \inv4 (\bar\th F)^2 
+  \inv4  \bar\th^{\mu\nu} F_{\mu\nu} (F\bar\theta F \bar\theta) + \cO(F^4) \!\right) 
\!, \label{invrootaction-o3}
}
which is in agreement with the pure gauge sector of  \eq{potential-collect}.

\subsection{Building blocks: the matrix tensors \texorpdfstring{$H^{ab}$}{H**(ab)}}
\label{sec:blocks}

Now consider the general expansion of the effective matrix model action for fluctuating matrices.
Besides $\Theta^{ab} = -i[X^a,X^b]$, the following 
``matrix tensors'' \cite{Blaschke:2010qj} play an important role
\begin{align}
H^{ab} &= \inv{2}\aco{\co{X^a}{X^c}}{\co{X^b}{X_c}} 
\,, \nn\\
H &=  H^{ab} g_{ab} =  \co{X_c}{X^d}\co{X^{c}}{X_{d}}  = -\tr J^2 
\,, \label{eq:def-H}
\end{align}
where we use coordinates such that $g_{ab} = \d_{ab}$. 
On the {\moyal} vacuum $\R^4_\th \subset \R^D$, they reduce to
$H^{ab} \to \bar H^{ab}$ where
\begin{align}
\bar H^{\mu\nu} &= -\LNC^{-4}\bG^{\mu\nu} \,, && \bar H^{\mu i} = \bar H^{ij} = 0
\,, \nn\\
\bar H &=  \bar H^{ab} g_{ab} = -\LNC^{-4}\bG^{\mu\nu} g_{\mu\nu} 
\,. \label{H-explicit}
\end{align}
Using the following characteristic relation for 4-dimensional Moyal space \cite{Blaschke:2010qj}
\be
(\bG g\bG)^{\m\n} = -\inv2 \LNC^{4}\bar H \bG^{\m\n} - g^{\m\n}
= \inv{2}(\bar Gg)\bar G^{\mu\nu} - g^{\mu\nu} 
\,, \label{4D-id-2}
\ee
(which follows from \eq{char-4D})
as well as
\begin{align}
\Theta^{\mu\nu} &=  \bar\theta^{\mu\nu} +  \cF^{\mu\nu} 
= -\thb^{\m\r} \thb^{\n\s}(\bar\theta^{-1}_{\r\s} + F_{\r\s}) 
\,, \nn\\
\Theta^{\mu i} &=  \bar\theta^{\mu\nu} D_\nu\phi^i
\,, \label{XX-gauge-b}
\end{align}
this gives
\begin{align}
(\bar H^{ab} - \frac 12\bar H g^{ab}) [\bar X_a,\Phi][\bar X_b,\Psi]   
&= (\LNC^{-4}\bG^{\m\n} + \inv2 \bar H g^{\m\n})
g_{\mu\mu'} g_{\nu\nu'} \bar\theta^{\mu'\rho}\bar \theta^{\nu'\eta} \del_\rho \Phi \del_\eta \Psi  \nn\\
&= -\LNC^{-8} g^{\rho\eta}\del_\rho \Phi \del_\eta \Psi
\,. 
\end{align}
We hence define
\eqn{
\Box_g \Phi &:= [X_a,(H^{ab} - \inv2 H g^{ab}) [ X_b,\Phi]]  \,, \nn\\
\bar\Box_g \Phi &:= [\bar X_a,(\bar H^{ab} - \inv2 \bar H g^{ab}) [\bar X_b,\Phi]]   
= -\LNC^{-8} g^{\r\eta}\del_\r\del_\eta \Phi \,, \nn\\
\Box \Phi &:= [X^a,[X^b,\Phi]] g_{ab} \,, \nn\\
\bar\Box \Phi &:= [\bar X^a,[\bar X^b,\Phi]] g_{ab}  
=  -\LNC^{-4} \bar G^{\r\eta}\del_\r\del_\eta \Phi \,.
\label{Moyal-laplaceops} 
}
(It was shown in \cite{Blaschke:2010qj} that they reduce to the appropriate Laplace operators
in the semi-classical limit for general 
4-dimensional branes.) 
In particular, this yields
\be
(\bar H^{ab} - \frac 12\bar H g^{ab}) [\bar X_a,e^{i p x}][\bar X_b,e^{i q x}]   
= \LNC^{-8}g^{\r\eta} p_\r q_\eta e^{i p x}e^{i q x} \,.
\ee
Including fluctuations around $\R^4_\th\subset \R^D$ as in \eq{eq:general-X}, we have
{\allowdisplaybreaks
\begin{align}
H^{\mu\nu} &= -\frac 12 [\Theta^{\mu c},\Theta^{\nu d}]_+ g_{cd} 
= - \frac 12 [\bar\theta^{\mu\a} + \cF^{\mu \a},\bar\theta^{\nu \b} + \cF^{\nu\b}]_+ g_{\a\b} 
 - \frac 12 \bar\theta^{\mu\a}\bar\theta^{\nu \b}[D_\a\phi^i,D_\b\phi^j]_+ g_{ij} \nn\\*
&= -  \frac{\bG^{\mu\nu}}{\LNC^4}
- \frac{\bar\theta^{\mu\mu'}\bar\theta^{\nu\nu'}}{\LNC^4} \(\!
 \bG^{\a\b}(\bar\theta^{-1}_{\mu'\a} F_{\nu' \b} 
+ \bar\theta^{-1}_{\nu' \b} F_{\mu'\a} 
+\frac 12 [F_{\mu'\a},F_{\nu' \b}]_+ )
 + \inv2 \aco{D_{\mu'}\varphi^i}{D_{\nu'}\varphi_i} \!\) \!, \nn\\
H^{\mu i} &= -\frac 12 [\Theta^{\mu c},\Theta^{i d}]_+ g_{cd} 
 =  - \frac 12 [\Theta^{\mu k},\Theta^{i j}]_+ g_{kj} 
 -   \frac 12 [\Theta^{\mu \a},\Theta^{i \b}]_+ g_{\a\b}    \nn\\*
&= \frac 12 \LNC^{-6}\bar\theta^{\mu \mu'} \([D_{\mu'}\varphi^k,i[\varphi^i,\varphi^j]]_+ g_{kj} 
 - [(\bar\theta^{-1} + F)_{\mu'\a'},D_\k\varphi^i]_+\bG^{\a'\k}\)  \,, \nn\\
 H^{ij} &= -\frac 12 [\Theta^{i c},\Theta^{j d}]_+ g_{cd} 
 =  - \frac 12 [\Theta^{i k},\Theta^{jl}]_+ g_{kl} 
 -  \frac 12 [\Theta^{i \a},\Theta^{j \b}]_+ g_{\a\b}    \nn\\*
&=  \frac 12 \LNC^{-8}[[\varphi^i,\varphi^k],[\varphi^j,\varphi^l]]_+ g_{kl} 
 - \frac 12 \LNC^{-8}\bG^{\a\b} [D_{\a} \varphi^i,D_{\b} \varphi^j]_+   \,, \nn\\
H &= H^{\mu\nu} g_{\mu\nu} + H^{ij} g_{ij} \nn\\*
&= -  \LNC^{-4}\bG^{\mu\nu}g_{\mu\nu} - \LNC^{-8}
\( -2 \LNC^{4}\hat\theta^{\nu\b} F_{\nu \b} 
+ \bG^{\mu\nu} \bG^{\a\b}F_{\mu\a} F_{\nu \b} 
 + 2\bG^{\mu\nu} D_{\mu}\varphi^i D_{\nu}\varphi^j g_{ij}\) \nn\\*
&\quad + \LNC^{-8}[\varphi^i,\varphi^k][\varphi^j,\varphi^l] g_{kl} g_{ij} 
\,, \label{H-explicit-fluct}
\end{align}
where we defined
}
\begin{align}
D_\a \phi &= \del_\a\phi + i [A_\a,\phi] \,,  &
\textrm{and} \qquad
\hat\theta^{\mu\nu} &= (\bG g \bar\theta)^{\mu\nu} = - \hat\theta^{\nu\mu} 
\,.
\end{align}
Note that the expressions \eqref{H-explicit-fluct} are exact.

\subsection{Expansion of the effective potential}

Using the above results, we can expand the various terms in the effective potential
$V(X)$ in terms of gauge and scalar fields. 
The contributions quadratic in the fields $(A,\varphi)$ are 
{\allowdisplaybreaks
\begin{align}
H^{ab}H_{ab} &= H^{\mu\nu}H^{\m'\n'}g_{\m\m'}g_{\n\n'} + H^{ij}H^{i'j'}g_{ii'}g_{jj'}  + 2 H^{\mu i}H^{\m' i'}g_{\m\m'}g_{ii'} \nn\\*
&= \LNC^{-8} (\bG g \bG)^{\k\k'} g_{\k\k'} 
+ 4\LNC^{-12}\bG^{\a\b} (\bG g \bG)^{\mu\nu} F_{\mu\a}F_{\nu \b}   \nn\\*
&\quad  +  2\LNC^{-8} \hat\theta^{\mu'\b}\hat\theta^{\nu\a'} F_{\nu \b} F_{\mu'\a'} 
 + 4 \LNC^{-8} F_{\mu\a}(\frac 12 (\bar G g) \hat\th - \bar\th)^{\a\mu}   \nn\\*
&\quad + 4\LNC^{-12}(\bG g \bG)^{\a\b} D_{\a}\varphi^i D_{\b}\varphi^j g_{ij} 
+ \mbox{h.o.} \,, \nn\\
H^2 &=  \LNC^{-8}(\bG^{\mu\nu}g_{\mu\nu})^2 
+ 4 \LNC^{-12}(\bG g) \bG^{\mu\nu}  D_{\mu}\varphi^i D_{\nu}\varphi^j g_{ij}
+ 4 \LNC^{-8}(\hat\th^{\mu\nu} F_{\mu\nu})^2 \nn\\* 
&\quad -4 \LNC^{-8}(\bar G g)(\hat\th^{\mu\nu} F_{\mu\nu})
+ 2 \LNC^{-12}(\bG g) \bG^{\mu\nu} \bG^{\a\b} F_{\mu\a} F_{\nu \b}
+ \mbox{h.o.}  
\end{align}
The $\cO(A\vph\vph)$ terms are found to be
\begin{align}
H^{ab}H_{ab}|_{A\vph\vph} 
&= 8 \LNC^{-12}\, \hat\theta^{\rho\b} F_{\b\nu}\bar G^{\nu\eta}
 D_{\rho}\varphi^i D_{\eta}\varphi_i  \,, \nn\\
H^2_{A\vph\vph} &= -8 \LNC^{-12} \LNC^{4}\hat\theta^{\nu\b} F_{\nu \b} 
  \bG^{\mu\nu} D_{\mu}\varphi^i D_{\nu}\varphi^j g_{ij}  \,, \nn\\
(H^{ab}H_{ab} - \frac 12 H^2)|_{A\vph\vph} 
&= 8 \LNC^{-12}\, \Big(\hat\theta^{\rho\b} F_{\b\nu}\bar G^{\nu\eta} D_{\rho}\varphi^i D_{\eta}\varphi_i 
 + \frac 12 \hat\theta^{\a\b} F_{\a\b}\bar G^{\mu\nu} D_{\mu}\varphi^i D_{\nu}\varphi_i \Big)
, 
\end{align}
noting that
}
\be
(\bG g\bG g\bar\th)^{\m\n} = \inv2 (\bG g) \hat\th^{\m\n} - \bar\th^{\m\n}
\,.
\ee
We can rewrite $\hat\theta^{\mu\b}\hat\theta^{\nu\a}F_{\nu \b} F_{\mu\a}$
using the identities in Lemma \ref{lemma-1}, giving
{\allowdisplaybreaks
\begin{align}
H^{ab}H_{ab} - \frac 12 H^2 &= 
\LNC^{-8}(\bG g \bG - \frac 12 (\bG g) \bG)^{\k\k'} g_{\k\k'} 
+ 4\LNC^{-12}\bG^{\a\b} (\bG g \bG - \frac 14 (\bG g)\bG)^{\mu\nu} F_{\mu\a}F_{\nu \b}  \nn\\*
&\quad  + 4\LNC^{-12}(\bG g \bG - \frac 12  (\bG g) \bG)^{\a\b} D_{\a}\varphi^i D_{\b}\varphi^j g_{ij}
 -\LNC^{-8}(\hat\th^{\mu\nu} F_{\mu\nu})^2 \nn\\*
&\quad - 4 \LNC^{-8} F_{\mu\a}\bar\th^{\a\mu}
- 8\LNC^{-8}\textrm{Pfaff}(F_{\mu\nu}) \textrm{Pfaff}(\hat\theta^{\mu\nu}) \nn\\*
&\quad + 8 \LNC^{-12}\, \Big(\hat\theta^{\rho\b} F_{\b\nu}\bar G^{\nu\eta} D_{\rho}\varphi^i D_{\eta}\varphi_i 
 + \tfrac 12 \hat\theta^{\a\b} F_{\a\b}\bar G^{\mu\nu} D_{\mu}\varphi^i D_{\nu}\varphi_i \Big)
\! + \mbox{h.o.} \nn\\
&= -\LNC^{-8} g^{\k\k'} g_{\k\k'} 
 - 4\LNC^{-12} g^{\a\b} D_{\a}\varphi^i D_{\b}\varphi^j g_{ij}
 -\LNC^{-8}(\bar\th^{\mu\nu} F_{\mu\nu})^2 \nn\\*
&\quad + 4 \LNC^{-8} F_{\mu\a}\bar\th^{\mu\a}
- 8\LNC^{-12}\textrm{Pfaff}(F_{\mu\nu}) \nn\\*
&\quad +8 \LNC^{-12}\, \Big(\bar\theta^{\rho\b} F_{\b\nu} g^{\nu\eta} D_{\rho}\varphi^i D_{\eta}\varphi_i 
 + \frac 12 \bar\theta^{\a\b} F_{\a\b} g^{\mu\nu} D_{\mu}\varphi^i D_{\nu}\varphi_i \Big)
+ \mbox{h.o.} 
\end{align}
Here $\textrm{Pfaff}(F_{\m\n})$ is a purely topological surface term i.e. a total derivative.

\begin{lemma}
For anti-symmetric matrices $F_{\mu\nu}$ and $\hat\theta^{\mu\nu}$
defined as above, the following identities hold:
\label{lemma-1}
\begin{subequations}
\begin{align}
\frac 18(F\hat\theta) (F\hat\theta) - \frac 14 (F\hat\theta F \hat\theta) 
&= \textrm{Pfaff}(F_{\mu\nu}) \textrm{Pfaff}(\hat\theta^{\mu\nu})
\,, \label{FF-theta} \\
\bG^{\a\b} (\bG g \bG - \frac 14 (\bG g)\bG)^{\mu\nu} F_{\m\a}F_{\n\b} 
 &= \frac 14\LNC^4\((\hat\th^{\mu\nu} F_{\mu\nu})^2 - (\bar\th^{\mu\nu} F_{\mu\nu})^2 \)
\,, \label{strange-id} \\
 (\hat\theta^{\mu\a} F_{\a\b}\bar G^{\b\nu} - \bar\theta^{\mu\a} F_{\a\b} g^{\b\nu})
  \del_{\mu} \del_{\nu}  &= 
\frac 12 (\bar\theta^{\a\b} F_{\a\b} g^{\mu\nu} -  \hat\theta^{\a\b} F_{\a\b}\bar G^{\mu\nu}) 
\del_{\mu} \del_{\nu}
\,. \label{strange-id-2}
\end{align}
\end{subequations}
\end{lemma}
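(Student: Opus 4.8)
The plan is to treat all three identities for what they are: exact algebraic relations among antisymmetric $4\times 4$ matrices, valid \emph{without} any higher-order truncation. Throughout I work in coordinates with $g_{ab}=\d_{ab}$ and regard $\bar\theta$ and $F$ as ordinary antisymmetric matrices; then, by \eqref{eq:def-Gges-moyal} and the definition $\hat\theta^{\m\n}=(\bG g\bar\theta)^{\m\n}$, one has the matrix relations $\bG=-\LNC^4\bar\theta^2$ and $\hat\theta=\bG\bar\theta=-\LNC^4\bar\theta^3$, and $\bar\theta$ obeys its Cayley--Hamilton equation $\bar\theta^4=\tfrac12(\tr\bar\theta^2)\,\bar\theta^2-\LNC^{-8}\one$ (using $\tr\bar\theta=\tr\bar\theta^3=0$ and $\det\bar\theta=\textrm{Pfaff}(\bar\theta)^2=\LNC^{-8}$, cf. \eqref{theta-standard-general-E}). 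Hence each of \eqref{FF-theta}--\eqref{strange-id-2} reduces to a polynomial identity in the entries of $F$ and $\bar\theta$, which can either be proved invariantly as below, or simply checked by a direct component computation in the standard basis \eqref{theta-standard-general-E}, where $\bar\theta$ is block-diagonal and $F$ has six free entries.

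For \eqref{FF-theta}: this relation involves no Moyal structure at all, being the statement for two \emph{arbitrary} antisymmetric $4\times 4$ tensors $A,B$ that $\tfrac18(AB)^2-\tfrac14(ABAB)=\textrm{Pfaff}(A)\,\textrm{Pfaff}(B)$, with $(AB):=A_{\m\n}B^{\m\n}$ and $(ABAB):=A_{\m\n}B^{\n\r}A_{\r\s}B^{\s\m}$. I would obtain it by polarization of the elementary relation $\tr M^4=\tfrac12(\tr M^2)^2-2\,\textrm{Pfaff}(M)^2$, valid for any antisymmetric $M$ (Newton's identities for the characteristic polynomial together with $\det M=\textrm{Pfaff}(M)^2$): setting $M=A+tB$ and writing $\textrm{Pfaff}(A+tB)=\textrm{Pfaff}(A)+t\,P(A,B)+t^2\,\textrm{Pfaff}(B)$ with the polar form $P(A,B)=\tfrac14\vare^{\m\n\r\s}A_{\m\n}B_{\r\s}$, the coefficient of $t^2$ on the two sides reproduces \eqref{FF-theta} after the contributions $P(A,B)^2$ (which occur on both sides) cancel. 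Equivalently one expands the single contraction $\vare^{\m\n\r\s}\vare_{\a\b\c\d}$ into a determinant of Kronecker deltas and collects terms.

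For \eqref{strange-id} and \eqref{strange-id-2}: I would first use the $4$-dimensional characteristic relation \eqref{4D-id-2} to replace $(\bG g\bG-\tfrac14(\bG g)\bG)^{\m\n}$ by the simpler combination $\tfrac14(\bG g)\,\bG^{\m\n}-g^{\m\n}$, thereby eliminating one power of $\bG$. Substituting this into \eqref{strange-id} and contracting leaves only the two scalar invariants $(\bG g)\,\bG^{\a\b}\bG^{\m\n}F_{\m\a}F_{\n\b}$ and $\bG^{\a\b}g^{\m\n}F_{\m\a}F_{\n\b}$; using $\bG=-\LNC^4\bar\theta^2$ these become polynomials in $F$ and $\bar\theta$, which are then reduced to $(\hat\theta^{\m\n}F_{\m\n})^2$ and $(\bar\theta^{\m\n}F_{\m\n})^2$ by the Cayley--Hamilton relation for $\bar\theta$ together with \eqref{FF-theta} applied to the pair $(F,\hat\theta)$ --- precisely the combination of ingredients already used to derive \eqref{invrootaction-o3}. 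Identity \eqref{strange-id-2} is handled the same way, with the simplification that both sides are contracted into the symmetric operator $\del_\m\del_\n$, so that only the symmetric parts of the coefficient tensors matter: writing $\hat\theta^{\m\a}F_{\a\b}\bG^{\b\n}=-\LNC^8(\bar\theta^3 F\bar\theta^2)^{\m\n}$ and $\bar\theta^{\m\a}F_{\a\b}g^{\b\n}=(\bar\theta F)^{\m\n}$ as matrices and symmetrising (using $\textrm{Sym}(\bar\theta F)=\tfrac12\{\bar\theta,F\}$ for antisymmetric $\bar\theta,F$), one again collapses the high powers of $\bar\theta$ by Cayley--Hamilton and matches the result to $\tfrac12(\bar\theta^{\a\b}F_{\a\b})\,g^{\m\n}-\tfrac12(\hat\theta^{\a\b}F_{\a\b})\,\bG^{\m\n}$.

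The main obstacle is the combinatorial bookkeeping in \eqref{FF-theta}: the polarization (or the raw $\vare\vare$ expansion) generates a list of degree-$(2,2)$ $SO(4)$-invariants --- $(AB)^2$, $(ABAB)$, $P(A,B)^2$, $(AA)(BB)$, and the like --- and one must invoke the $4$-dimensional trace identities (equivalently Cayley--Hamilton, which by themselves already relate these invariants) to see that the whole combination collapses to $\textrm{Pfaff}(A)\,\textrm{Pfaff}(B)$ with coefficients \emph{exactly} $\tfrac18$ and $-\tfrac14$. Once \eqref{FF-theta} is established, \eqref{strange-id} and \eqref{strange-id-2} are essentially mechanical, and in any case all three can, if preferred, be confirmed by the finite (if tedious) direct computation in the basis \eqref{theta-standard-general-E}.
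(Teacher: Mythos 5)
Your overall strategy is sound, and for \eqref{strange-id} and \eqref{strange-id-2} it essentially coincides with what the paper does: both identities are verified there by brute force in the canonical basis \eqref{theta-standard-general-E}, where $(\bG g)=\diag(\a^2,\a^2,\a^{-2},\a^{-2})$ and everything reduces to a two-parameter check on $F_{12}$ and $F_{34}$ (your proposed shortcut via \eqref{4D-id-2} and Cayley--Hamilton is a reasonable alternative but is not carried out, and the fallback you name is exactly the paper's method). For \eqref{FF-theta} the paper's route is different and shorter than yours: it uses that the Pl\"ucker-type combination $\hat\th^{ij}\hat\th^{kl}-\hat\th^{il}\hat\th^{kj}-\hat\th^{lj}\hat\th^{ki}$ is totally antisymmetric, hence in four dimensions proportional to $\textrm{Pfaff}(\hat\th)\,\vare^{ijkl}$; contracting with $F_{ij}F_{kl}$ gives the identity in one line.

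Your polarization argument for \eqref{FF-theta}, as written, has two concrete defects. First, the seed identity is misquoted: for an antisymmetric $4\times4$ matrix $M$ with eigenvalues $\pm i\l_1,\pm i\l_2$ one has $\tr M^4=2(\l_1^4+\l_2^4)$ and $\tfrac12(\tr M^2)^2=2(\l_1^2+\l_2^2)^2$, so the correct relation is $\tr M^4=\tfrac12(\tr M^2)^2-4\,\textrm{Pfaff}(M)^2$, not $-2\,\textrm{Pfaff}(M)^2$; with your coefficient the diagonal case $A=B$ of \eqref{FF-theta} already comes out wrong by a factor of $2$. Second, the order-$t^2$ coefficient of $\tr(A+tB)^4$ is $4\tr(A^2B^2)+2\tr(ABAB)$, while that of $\tfrac12(\tr M^2)^2-4\,\textrm{Pfaff}(M)^2$ is $\tr A^2\,\tr B^2+2(\tr AB)^2-8\,\textrm{Pfaff}(A)\textrm{Pfaff}(B)-4P(A,B)^2$; the extra invariants $\tr(A^2B^2)$, $\tr A^2\tr B^2$ and $P(A,B)^2$ do \emph{not} ``occur on both sides and cancel'' --- $P(A,B)^2$ appears on one side only --- and eliminating them requires the further independent identity $4P(A,B)^2+4\tr(A^2B^2)=(\tr AB)^2+\tr A^2\,\tr B^2$, which your sketch neither states nor proves. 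Since you also offer the direct component computation in the basis \eqref{theta-standard-general-E} as an alternative, the lemma is still provable along your lines, but the polarization argument for \eqref{FF-theta} needs the corrected coefficient and the missing auxiliary identity before it closes.
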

\begin{proof}
See \appref{app:proof-of-lemma-1}.
\end{proof}
This gives
\begin{align}
&\quad \frac 1{\sqrt{-H^{ab}H_{ab} + \frac 12 H^2}} \nn\\
&\sim \frac{\LNC^4}2 \Bigg(1+\LNC^{-4} g^{\a\b} D_{\a}\varphi^i D_{\b}\varphi_i
 +\frac 14 (\bar\th^{\mu\nu} F_{\mu\nu})^2  - F_{\mu\a}\bar\th^{\mu\a}
+ 2\LNC^{-4}\textrm{Pfaff}(F_{\mu\nu}) \nn\\
&\qquad -2 \LNC^{-12}\, \Big(\bar\theta^{\rho\b} F_{\b\nu} g^{\nu\eta} D_{\rho}\varphi^i D_{\eta}\varphi_i 
 + \frac 12 \bar\theta^{\a\b} F_{\a\b} g^{\mu\nu} D_{\mu}\varphi^i D_{\nu}\varphi_i \Big)
+ \cO(FFF)  + \textrm{h.o.}\Bigg)^{-1/2} \nn\\
&= \frac 12 \Bigg(\LNC^4-\frac 12 g^{\a\b} D_{\a}\varphi^i D_{\b}\varphi_i
 +\frac 14 \LNC^4(\bar\th^{\mu\nu} F_{\mu\nu})^2  +\frac 12 \LNC^4 F_{\mu\a}\bar\th^{\mu\a}
 - \textrm{Pfaff}(F_{\mu\nu}) \nn\\*
&\qquad  + \LNC^{-8}\, \Big(\bar\theta^{\rho\b} F_{\b\nu} g^{\nu\eta} D_{\rho}\varphi^i D_{\eta}\varphi_i 
 - \frac 14 \bar\theta^{\a\b} F_{\a\b} g^{\mu\nu} D_{\mu}\varphi^i D_{\nu}\varphi_i \Big) 
+ \cO(FFF) + \textrm{h.o.}\Bigg) 
\end{align}
consistent with \eq{invrootaction-o3}, from which we will take  the $\cO(A^3)$ terms for simplicity.
This yields
\begin{align}
\Tr V(X) &= -\frac 14\Tr \frac{L^4}{\sqrt{\frac 12 H^2-H^{ab}H_{ab} }} \nn\\
&= -\frac 1{8}\L^4 \int\!\! \frac{d^4 x}{(2\pi)^2}\sqrt{g} \Bigg(
  1- \frac{1}{2\LNC^4} g^{\a\b} D_{\a}\varphi^i D_{\b}\varphi_i
 +\frac {1}{4} \Big((\bar\th^{\mu\nu} F_{\mu\nu})^2  
 +  \bar\th^{\mu\nu} F_{\mu\nu} (F\bar\theta F \bar\theta)\Big) \nn\\*
&\qquad + \LNC^{-12}\, \Big(\bar\theta^{\rho\b} F_{\b\nu} g^{\nu\eta} D_{\rho}\varphi^i D_{\eta}\varphi_i 
 - \frac 14 \bar\theta^{\a\b} F_{\a\b} g^{\mu\nu} D_{\mu}\varphi^i D_{\nu}\varphi_i \Big) 
\, + \mbox{h.o.} \Bigg)  
\label{pot-matrix-action}
\end{align}
dropping surface terms. 
This agrees precisely with the induced action \eq{potential-collect}.
In particular, the $SO(D)$ symmetry is indeed preserved.
}

Note that the effective matrix model contains much more information than what was put in.
It predicts an infinite series of higher-order terms in $F$ and $\varphi$
proportional to $\L^4$. Since the above expression was uniquely determined
by very simple arguments in \secref{sec:potential}, this represents a strong prediction 
of the matrix-model framework,
based on the $SO(D)$ symmetry and the scaling law \eq{basic-scaling} combined with 
a very basic one-line loop computation.

It is interesting that these induced ``vacuum energy'' 
terms are distinct from the bare matrix model \eq{YM-action}.
This means that the physics of vacuum energy is different from GR, 
which may be very relevant to the cosmological constant problem.

\subsection{Curvature terms}

Having understood the single-commutator matrix model terms, we now 
consider contributions which contain double commutators, more precisely those which 
can be written such that there are two double commutators. 
It turns out that these lead to curvature terms from the gravity point of view, 
which usually diverge as $\L^2$. 
They correspond to gauge theory terms which have at least four derivatives 
such as  $\del\del\phi \del\del\phi$, of dimension $\geq 6$.
We first discuss the structure of such terms in the matrix model, and then compute their 
gauge theory content.

\subsubsection{On the structure of curvature terms}

We first want to understand what kind of terms in the matrix model 
can be written in terms of two double commutators\footnote{It is clear 
from $SO(D)$ invariance that there is no term
which has only one double commutator
$[X,[X,X]]$ except for $D=3$, which is not considered here.}. 
We consider only single-trace terms here, and restrict ourselves to the case of $\cO(X^6)$
and $\cO(X^{10})$ contributions. Eventually, a more systematic classification should be given.

While simple commutators $[X,X]$
correspond to constants $\bar\theta^{\mu\nu}$ or 
first-order derivatives $F_{\mu\nu}$ and $\del_\mu \phi^i$ (resp. $\del\phi\del\phi$), 
double commutators such as $[X,[X,X]]$ or $[[X,X],[X,X]]$ correspond to 
terms with second-order derivatives. 
Note that all terms which contain $[V(X),V(X)]$
can be reduced to terms of the structure $V(X) [[X,X],[X,X]]$ (up to higher commutators),
which using the Jacobi identity can be rewritten as
\be
V(X) [[X,X],[X,X]] = - V(X) [X,[X,[X,X]]]  - V(X) [X,[X,[X,X]]] 
\,.  
\ee
Here $V(X)$ is a potential term, i.e. involves only single commutators.
Under the trace, these can be rewritten as $\Tr [X,V(X)] [X,[X,X]]$.
It remains to characterize the most general term of the structure
$\Tr V_1(X) [X,V_2(X)] [X,V_3(X)]$. Using again the Leibnitz rule, this can be reduced to
\be
\Tr \cL_{\textrm{curv}}[X]
= \Tr V(X) [X,[X,X]] [X,[X,X]] 
\label{curv-MM-form}
\ee
up to additional or higher-order commutators. 
Here the indices can be contracted in any possible way.

We note the following relation
\be
\Theta^{dc} [X^a,\Theta^{bc'}] g_{cc'} = [X^a,\Theta^{dc}\Theta^{bc'}g_{cc'}] 
 - \Theta^{bc} [X^a,\Theta^{dc'}] g_{cc'}\quad + \textrm{h.o.},
\label{matrix-PI}
\ee
which implies 
\eqn{
\,[X_a,H^{ab}] &= \inv{2} \(\aco{\Box X_c}{ [X^b,X^c]}  
+ \frac 12 [X^b,H]\)  \nn\\
&\sim \Box X_c[X^b,X^c] + \frac 14 [X^b,H] \quad + \textrm{h.o.} 
\label{H-cons} }

\paragraph{$\cO(X^{6})$ curvature terms.}
\label{sec:O10curv}

It is easy to see \cite{Blaschke:2010rg} that there are only two independent terms of order $\cO(X^6)$, given by
\begin{align}
S_{6,A} &= \Tr \Box X^a \Box X_a \,, \nn\\
S_{6,B} &= \Tr [X^c,[X^a,X^b]] [X_c,[X_a,X_b]]
\,. 
\end{align}
Notice that there is no potential term at order $\cO(X^6)$.
From the geometrical point of view they contain curvature 
contributions. Although they are not induced in the effective action, they will occur
as part of the $\cO(X^{10})$ terms.

\paragraph{$\cO(X^{10})$ curvature terms.}

At order $X^{10}$, we have to characterize the most general term of structure 
$\Tr J J [X,J] [X,J]$. There are several cases:
\bit
\item 
Assume that the indices of the $X$ are contracted as in
$\Tr J J [X^a,J] [X_a,J]$. 
Now consider the possible contractions of the indices of the $J$'s.
The indices of the $J$'s can form two disjoint loops, or a single loop. 

Consider first the case of {\em two $J$ loops}. 
If the  indices  of the internal $J$ (resp. the external $J$) are contracted among themselves, 
this gives $\Tr H [X^a,J] [X_a,J]$ i.e. $\Tr H S_{6,\,\textrm{curv}}$.
Otherwise, each $J$ loop can be written using \eq{matrix-PI} as 
$[X^a,H]$, so that the action is $\Tr [X^a,H] [X_a,H]$. 

Now consider the case of a single $J$ loop. Using again \eq{matrix-PI}, these can be
reduced to the form $\Tr J [X^a,J] J [X_a,J]$ and  $\Tr [X^a,J^2] [X_a,J^2]$
up to higher-order terms. These turn out to be independent.

\item
Now assume that the indices of the $X$ are not contracted among themselves.
Then consider the two triple commutators $[X,\Theta]$, which altogether 
have 6 indices. Since there are only two external $\Theta$, 
at least two of the 
indices of the two double commutators must be contracted among themselves.
Therefore either two indices of the same $[X,[X,X]]$ are contracted among themselves
which gives $\Box X^a$, or otherwise they must be contracted as in
 $[X^a,\Theta_{cd}][X_b,\Theta_{ae}] \Theta \Theta$. 
The first case will be discussed below. In the second case, 
the double commutator has 4 remaining indices 
$b,c,d,e$. It is antisymmetric in $(cd)$, and 
we can assume that it is symmetric in $(be)$ because the anti-symmetric component in $(be)$
reduces to $[X^a,\Theta_{cd}][X_a,\Theta_{be}]\Theta \Theta$ using the Jacobi identity, 
which has been covered above. Hence there is no (independent) way to contract them with the 
4 free indices of $\Theta\Theta$. Therefore two of these must be contracted with $g^{ab}$. This
leads to
 $[X^a,\Theta_{cd}][X_b,J^c_{a}] H^{db} \sim [X^a,\Theta_{cd}][X^c,\Theta_{ab}] H^{db}$ or to  
 $[X^a,J^b_{c}][X_b,\Theta_{ae}] H^{ce}$ (which is the same) or to 
$[X^a,\Theta_{cd}]\Box X_a H^{cd} \sim 0$
(apart from the case $S_6 H$ which has been covered before).
Hence the only new term is
\eqn{
& 2[X_a,\Theta_{cd}] H^{db} [X_b,\Theta^{ac}]  
=  - [X_d,\Theta_{ac}] H^{db} [X_b,\Theta^{ac}] 
\,. }
It remains to classify the terms of the form $\Box X^a [X_b,\Theta_{cd}] \Theta \Theta$.
Using the Jacobi identity, we can assume that 
the index $a$ is contracted either with an external $\Theta$, or with $\Theta_{cd}$.
In the first case we get  $\Theta_{ae} \Box X^a [X_b,\Theta_{cd}] \Theta$,
which gives either $H_{ac} \Box X^a \Box X^c$ or 
\eqn{
\Theta_{ae} \Box X^a [X_b,\Theta^{ef}] \Theta^{fb} 
&\sim 
\Theta_{ae} \Box X^a [X_b,H^{eb}]  - H_{fa} \Box X^a [X_b,\Theta^{fb}] \nn\\
 &= \Theta_{ae} \Box X^a (\Box X_c[X^e,X^c] + \frac 14 [X^e,H]) - i H_{af} \Box X^a \Box X^f \nn\\
 &= \frac 14 \Theta_{ae} \Box X^a [X^e,H] 
\,. }
In the second case, we can assume (using the Jacobi identity) that the term has the form
$\Box X^a [X_b,\Theta_{ad}] \Theta \Theta$. This leads either to 
$\Box X^a \Box X_a H \in S_6 H$, or using \eq{H-cons} to 
\eqn{
\Box X^a [X_b,\Theta_{ad}] H^{bd}
&=  - \Box X^a [X_d,\Theta_{ba}] H^{bd}  \nn\\
&= - \Box X^a [X_d,\Theta_{ba}H^{bd}] + \Box X^a \Theta_{ba}[X_d,H^{bd}]  \nn\\
&= - \Box X^a [X_d,(J^3)^d_{a}] 
 - \frac 14 \Box X^a \Theta_{ba} [X^b,H] - \Box X^a H_{ae} \Box X^e 
\,. }
\eit
Hence, we have the following complete list of $\cO(X^{10})$ curvature
terms:
\eqn{
S_{10,\,\textrm{curv}} =
&\Tr \Big(H S_{6} + c_1 [X^a,H] [X_a,H]+ c_2\Theta_{ae} \Box X^a [X^e,H]  \nn\\ 
 &\qquad + c_3 \tr J [X^a,J] J [X_a,J] + c_4 [X^a,H^{cd}] [X_a,H_{cd}] \nn\\
& \qquad + c_5 [X_d,\Theta_{ac}] H^{db} [X_b,\Theta^{ac}]
 + c_6 \Box X^a [X_d,(J^3)^d_{a}] 
\Big) , }
which turn out to be independent.
The last four can be replaced by 
\bea
\Tr \Big(c_3 J [X^a,J] J [X_a,J] + c_4 H^{cd} \Box H_{cd}
 + c_5 \Theta_{ac} \Box_g \Theta^{ac}
 + c_6 \Box X^a \Box_g X_a \Big) .
\eea
There are also boundary terms which vanish under the trace, which will be 
supplemented below.

\subsubsection{Expansion of \texorpdfstring{$\cO(X^6)$}{O(X**6)} curvature terms}

As a warm-up, consider the two independent $\cO(X^6)$ terms to quadratic order
\begin{align}
S_{6,A} &= \Tr \Box X^a \Box X_a \nn\\
 &\sim \int\!\!\frac{d^4x}{(2\pi)^2} \sqrt{\bG} \Big(\frac 12 \LNC^{-4} \bar\Box F_{\mu\n}  F_{\a\b} \bG^{\a\mu} \bG^{\b\n} 
+ \LNC^{4}\bar\Box \phi^i\bar\Box \phi_i  + \textrm{h.o.} \Big) \,, \nn\\
S_{6,B} &= \Tr [X^c,[X^a,X^b]] [X_c,[X_a,X_b]] \nn\\
 &\sim \int\!\!\frac{d^4x}{(2\pi)^2} \sqrt{\bG} \Big(\LNC^{-4}  \bar\Box  F_{\m\n}  F_{\a\b} \bG^{\a\m}\bG^{\b\n}
 + 2 \LNC^{4}\bar\Box \phi^i \bar\Box \phi_i + \textrm{h.o.} \Big)\,.
\label{O6-quadratic} 
\end{align} 
Notice that the ``intrinsic combination''  
\begin{align}
\cL_{6,\,\textrm{cubic}}[X] &= \Box X^a \Box X_a - \frac 12 [X^c,[X^a,X^b]] [X_c,[X_a,X_b]]
\label{L6-cubic}
\end{align}
has no quadratic contributions, and turns out to gives a tensorial term involving the 
Riemann tensor \cite{Blaschke:2010rg}.
To obtain the above form, we need the following identities
\begin{lemma}
For an Abelian field strength tensor $F_{\mu\nu}$ and $\hat\theta^{\mu\nu}$ etc.
defined as before, the following identities hold:
\label{lemma-2}
\begin{subequations}
\begin{align}
 g^{\mu\nu} \del_{\mu} F_{\nu\g} \bG^{\rho\b} \del_\rho F_{\b\m'}\bG^{\g\m'}  
 &= - \frac 12 (g^{\mu\nu} \del_\nu \del_{\mu}) F_{\rho\g}  F_{\b\m'} \bG^{\rho\b} \bG^{\g\m'} \quad + \del() 
 \,, \label{FFbox-id-1}\\
\bG^{\mu\nu}\del_{\mu} F_{\nu\g} \bG^{\rho\b}\del_\rho F_{\b\mu'}  g^{\g\mu'}
 &= \tinv2 (g^{\mu\nu} \del_\nu \del_{\mu}) F_{\rho\g}  F_{\b\m'} \bG^{\rho\b} \bG^{\g\m'} 
 - (\bG^{\mu\nu} \del_\nu \del_{\mu}) F_{\rho\g}  F_{\b\m'} g^{\rho\b} \bG^{\g\m'}  + \del()
, \label{FFbox-id-2}
\end{align}
\end{subequations}
\end{lemma}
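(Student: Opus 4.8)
The plan is to prove both identities of Lemma~\ref{lemma-2} from the same two ingredients: integration by parts and the Bianchi identity. On $\R^4_\th$ the background metrics $g_{\mu\nu}$ and $\bG^{\mu\nu}$ are constant, hence commute with $\del_\mu$, so all integrations by parts are legitimate and the symbol $\del()$ on the right-hand sides is precisely the surface term one discards. The only properties of $F_{\mu\nu}$ that enter are its antisymmetry, the vanishing contractions $g^{\mu\nu}F_{\mu\nu}=\bG^{\mu\nu}F_{\mu\nu}=0$, and --- since $F$ is Abelian --- the Bianchi identity $\del_\lambda F_{\mu\nu}+\del_\mu F_{\nu\lambda}+\del_\nu F_{\lambda\mu}=0$.

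The cleanest implementation I would use is to eliminate $F$ in favour of the potential, $F_{\mu\nu}=\del_\mu A_\nu-\del_\nu A_\mu$ (so that Bianchi becomes automatic), or equivalently to pass to momentum space, where $F_{\mu\nu}$ is replaced by $f_{\mu\nu}(k)=i\,(k_\mu a_\nu-k_\nu a_\mu)$ and $\del_\mu\to ik_\mu$. In either language one first integrates by parts the $g^{\mu\nu}\del_\mu\del_\nu$ and $\bG^{\mu\nu}\del_\mu\del_\nu$ factors appearing on the right-hand sides back into $(\del F)(\del F)$ form, so that both sides of each identity become homogeneous quartics in $(k,a)$ contracted with $g$ and $\bG$. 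Expanding, and dropping the terms in which a single $f_{\mu\nu}$ has both slots contracted with the same metric, I expect each side of \eqref{FFbox-id-1} to collapse to $k_g^2\big(k_{\bG}^2\,(a\!\cdot\! a)_{\bG}-(k\!\cdot\! a)_{\bG}^2\big)$ and each side of \eqref{FFbox-id-2} to collapse to $(k_{\bG}^2)^2(a\!\cdot\! a)_g-2\,k_{\bG}^2\,(k\!\cdot\! a)_{\bG}\,(k\!\cdot\! a)_g+k_g^2\,(k\!\cdot\! a)_{\bG}^2$, where $k_g^2:=g^{\mu\nu}k_\mu k_\nu$, $k_{\bG}^2:=\bG^{\mu\nu}k_\mu k_\nu$, $(a\!\cdot\! a)_g:=g^{\mu\nu}a_\mu a_\nu$, and so on. Thus each identity reduces to a short algebraic match, and in particular I do not expect to need the four-dimensional relation \eqref{4D-id-2} here (that relation enters elsewhere, when these second-order operators are later converted into the $g$-Laplacian form in which the effective action is written).

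If one prefers to stay in position space, the same proof works, but the order of operations must be chosen with care: applying the Bianchi identity to a factor $\del_\mu F_{\nu\g}$ whose derivative index $\mu$ and slot index $\nu$ are contracted with the same metric merely relabels indices and returns the original expression, so the naive integration-by-parts/Bianchi cycle is circular. The non-circular route is to integrate by parts first, producing a factor $\del_\mu\del_\rho F_{\b\m'}$, then apply Bianchi to one of the $F$-slot indices, $\del_\rho F_{\b\m'}=\del_\b F_{\rho\m'}-\del_{\m'}F_{\rho\b}$ (the second term dying by $\bG^{\rho\b}F_{\rho\b}=0$), and finally integrate by parts once more; for \eqref{FFbox-id-1} this closes after one recognizes the $-\tfrac12(g^{\mu\nu}\del_\mu\del_\nu)F\,F$ structure, while \eqref{FFbox-id-2} needs one further such cycle before the $g$- and $\bG$-Laplacian pieces separate out with coefficients $\tfrac12$ and $-1$. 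The main obstacle is the bookkeeping: there are roughly a dozen distinct index-contracted monomials of schematic type $(\del\del F)\,F$ carrying one $g$ and two $\bG$, and one must track all signs and the $\tfrac12$ factors without error --- which is exactly why the momentum-space reformulation is the safest way to carry out the computation.
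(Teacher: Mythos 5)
Your proposal is correct. I checked your momentum-space reduction explicitly: writing $F_{\mu\nu}=\del_\mu A_\nu-\del_\nu A_\mu$ and pairing modes of momenta $k$ and $-k$ (so that integration by parts becomes momentum conservation), both sides of \eqref{FFbox-id-1} do collapse to $k_g^2\big(k_{\bG}^2(a\cdot a)_{\bG}-(k\cdot a)_{\bG}^2\big)$ and both sides of \eqref{FFbox-id-2} to $(k_{\bG}^2)^2(a\cdot a)_g-2k_{\bG}^2(k\cdot a)_{\bG}(k\cdot a)_g+k_g^2(k\cdot a)_{\bG}^2$, exactly as you state (your parenthetical about "dropping terms where both slots of one $f$ hit the same metric" is slightly loose --- what actually happens is that the two cross terms cancel pairwise --- but the end results are right). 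This is a genuinely different route from the paper's: the paper stays in position space and derives, for an \emph{arbitrary} symmetric matrix $H^{\g\m'}$, the single relation $\bG^{\mu\nu}\del_{\mu} F_{\nu\g}\, g^{\rho\b} \del_\rho F_{\b\m'} H^{\g\m'} = F_{\g\rho} g^{\rho\b} (\bG^{\mu\nu} \del_\nu\del_{\mu}) F_{\b\m'} H^{\g\m'} - \bG^{\mu\nu} \del_{\mu} F_{\nu\g} H^{\rho\b}\del_\rho F_{\b\mu'} g^{\g\mu'}+\del()$ via one IBP-swap followed by one application of Bianchi; specializing $H$ and interchanging the roles of $g$ and $\bG$ then yields \eqref{FFbox-id-1} directly (the "factor $2$" identity) and \eqref{FFbox-id-2} by combination. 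Your secondary position-space sketch is essentially this argument, and your warning about the circular IBP/Bianchi cycle is well taken --- the paper avoids it precisely by first swapping the two derivative indices between the factors. What your primary approach buys is safety and automatic incorporation of Bianchi at the cost of being tied to the Abelian/exact case; what the paper's buys is a reusable identity valid for any symmetric $H$ and a derivation that manifestly uses only antisymmetry and Bianchi.
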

where $\del()$ denotes surface terms.
The proof is given in \appref{app:proof-of-lemma-2}, based on the Bianchi identity.

For completeness, we give here the full expansion of these terms around $\R^4_\theta$.
Consider first
\begin{align}
\Box \phi^k &=  \co{X^\mu}{\co{X_\mu}{\phi^k}} + \co{\phi^i}{\co{\phi_i}{\phi^k}}  \nn\\
&= -\LNC^{-4}\bG^{\mu\nu} D_\mu D_{\nu'} \phi^k  +\co{\phi^i}{\co{\phi_i}{\phi^k}} 
\,, 
\end{align}
and similarly
\begin{align}
\Box X^\eta &=  \co{X_\mu}{\co{X^\mu}{X^\eta}} + \co{\phi_i}{\co{\phi^i}{X^\eta}}  \nn\\
&= \LNC^{-4}\bar\th^{\eta\b} \bG^{\a\rho} D_\rho F_{\a\b} 
- i\bar\theta^{\eta\b} \co{\phi_i}{D_\b\phi^i} 
\,.
\end{align}
This gives
\begin{align}
S_{6,A} &= \Tr \Box X^a \Box X_a  \nn\\ 
&= \int\!\!\frac{d^4x}{(2\pi)^2} \sqrt{\bG}\Bigg(\LNC^{-8}\bG^{\b\b'} \bG^{\a\rho} 
 \bar G^{\a'\rho'} D_\rho F_{\a\b}  D_{\rho'} F_{\a'\b'} 
+ \LNC^{-4}\bG^{\mu\nu} D_\mu D_{\nu'} \phi^k\bar G^{\mu'\nu'} D_{\mu'} D_{\nu'} \phi_k \nn\\
&\quad \qquad -2i \LNC^{-4}\bG^{\b\b'} \bG^{\a\rho} D_\rho F_{\a\b} \co{\phi_i}{D_{\b'}\phi^i} 
 - 2\bG^{\mu\nu} D_\mu D_{\nu'} \phi^k\co{\phi^i}{\co{\phi_i}{\phi_k}}  \nn\\  
&\quad \qquad -\bG^{\b\b'} \co{\phi_i}{D_\b\phi^i} \co{\phi_j}{D_{\b'}\phi^j} 
+\co{\phi^i}{\co{\phi_i}{\phi^k}}  \co{\phi^j}{\co{\phi_j}{\phi_k}}  
 \Bigg) ,
\end{align}
and a similar computation leads to
\begin{align}
S_{6,B} &= \Tr [X^c,[X^a,X^b]] [X_c,[X_a,X_b]] \nn\\
&= \int\!\!\frac{d^4x}{(2\pi)^2} \frac{\sqrt{\bG}}{\LNC^{4}} \bigg(
\LNC^{-8}\bG^{\a\m}\bG^{\b\n} \bG^{\g\s} D_\s F_{\m\n} D_\g F_{\a\b} 
 + 2 \LNC^{-4} \bG^{\s\l} \bG^{\e\varphi} D_\e D_\s \phi^i D_\varphi D_\l \phi_i  \nn\\
&\quad\qquad -  \bG^{\a\r} D_\a \co{\phi^i}{\phi^j} D_\r \co{\phi_i}{\phi_j}
 - \LNC^{-4} \bG^{\r\l} \bG^{\s\t} \co{\phi^i}{F_{\r\s}}\co{\phi_i}{F_{\l\t}}  \nn\\
&\quad\qquad - 2 \bG^{\s\t} \co{\phi^i}{D_\s\phi^j}\co{\phi_i}{D_\t\phi_j}
\bigg) .
\end{align}
It is remarkable that this contains no explicit $\bar\theta^{\mu\nu}$.

\subsubsection{Expansion of \texorpdfstring{$\cO(X^{10})$}{O(X**10)} terms}
\label{sec:O-10-expansion}

We need to systematically compute the field theory content of 
all the above $\cO(X^{10})$ terms, which  start with $\cO(p^4 (\vp,A)^2)$.
The results are as follows (see \appref{app:S-10-curv}):
{\allowdisplaybreaks
\begin{align}
\cL_{10,A} &= [X_f,H^{ab}][X^f,H_{ab}] \nn\\
&=  \LNC^{-8}\Big(\bar H ( \frac 12 F_{\m\n} \bar\Box F_{\r\s} \bG^{\m\r}\bG^{\n\s}  
  + \LNC^4\bar\Box\varphi^k \bar\Box\varphi_k) \nn\\*
&\qquad\quad  -\frac 12 \big(3(\hat\th F)\bar\Box(\hat\th F) - (\bar\th F)\bar\Box(\bar\th F)\big) 
 + 2\LNC^4\bar\Box\varphi^k \bar\Box_g\varphi_k \Big)  +\textrm{h.o.} \,,  \nn\\
\cL_{10,B} &= - (H^{cd} - \frac 12 H g^{cd}) [X_c, \Theta^{ab}] [X_d, \Theta_{ab}]  \nn\\
&=  \LNC^{-8}  \Big(\bG^{\r\s}\bG^{\e\t} F_{\r\e} \bar\Box_g F_{\s\t} 
 + 2\LNC^{4}\bar\Box_g\varphi^i \bar\Box\varphi_i \Big) + \textrm{h.o.} \,,  \nn\\
\cL_{10,C} &= \Box X^a \Box_g X_a \nn\\*
&=  \LNC^{-8} \Big(\bar\Box_g F_{\rho\g}  F_{\b\m} \bG^{\rho\b} \bG^{\g\m} 
 + \frac 12(\hat\theta F) \bar\Box (\hat\theta F) \nn\\*
 &\quad \qquad -  \frac 14(\bar\theta F)\bar\Box(\bar\theta F) 
 + \LNC^4 \bar\Box \vp^i \bar\Box_g \vp_i \Big)  + \del()  + \textrm{h.o.} \,, \nn\\ 
\cL_{10,D} &=  \tr J [X^a,J] J [X_a,J] 
= -\frac 12 \LNC^{-8}  (\hat\theta F) \bar\Box(\hat\theta F ) 
    + \textrm{h.o.} \,, \nn\\
\cL_{10,E} &= [X_f,H][X^f,H] =  -4\LNC^{-8} (\hat\theta F) \bar\Box (\hat\theta F) 
      + \textrm{h.o.} \,, \nn\\
\cL_{10,F} &= \Box X^a \Theta_{ab} [X^b,H] = -\LNC^{-8}(\hat\theta F) \bar\Box(\hat\theta F) 
 + \textrm{h.o.}
\label{S-10-curv}
\end{align}
Recall that $\bar\Box$ and $\bar \Box_g$ have different powers of $\LNC$, see \eqref{Moyal-laplaceops}.
We also note the following contributions from the $\cO(X^{10})$ ``boundary terms'':
}
\begin{align}
\Box H^2 &= -4\LNC^{-8} (\bar G g) \Box (\hat\theta^{\mu\nu}F_{\mu\nu})  + \Box (\textrm{h.o.})\,,  \nn\\
\Box (H^{ab}H_{ab} - \frac 12 H^2) &= 4\LNC^{-8} \Box (\bar\theta^{\mu\nu}F_{\mu\nu})  +  \Box (\textrm{h.o.})\,,   \nn\\
\Box_g H &= 2 \LNC^{-4}  \Box_g (\hat\theta^{\mu\nu}F_{\mu\nu})  +  \Box_g (\textrm{h.o.})\,.
\label{boundary-linear-terms}
\end{align}

Remarkably, the following terms have no quadratic contributions:
\begin{align}
\cL_{10,\,\textrm{cubic}}[X] &= q_1\Big(\frac 12 [X_f,H^{ab}][X^f,H_{ab}] 
+ (H^{cd} - \frac 12 H g^{cd}) [X_c, \Theta^{ab}] [X_d, \Theta_{ab}]
 + \Box X^a \Box_g X^a \Big) \nn\\
& \qquad + q_2 \Big( 2\tr J [X^a,J] J [X_a,J] -  \Box X^a \Theta_{ab} [X^b,H]  \Big) \nn\\
 & \qquad + q_3  \Big( [X_f,H][X^f,H] -4\Box X^a \Theta_{ab} [X^b,H]  \Big) \,.
\label{L10-cubic}
\end{align}
It would be desirable to include these ``cubic'' terms into the above computations,
which is however beyond the scope of this paper.

\subsubsection{\texorpdfstring{$\cO(X^{14})$}{O(X**14)} term.}

A complete analysis for the $\cO(X^{14})$ curvature terms is beyond the scope of this paper. 
We only consider the following term which is clearly generated in \eq{2-field-collect-curv2}:
\begin{align}
& \Box_g X^a \Box_g X^a = \nn\\
&= \LNC^{-20} \Bigg( \bG^{\a\a'} g^{\eta\b} \del_\eta F_{\b\a} g^{\eta'\b'} \del_{\eta'} F_{\b\a'} 
+ \(\!\big(\tfrac{(\bG g)^2}4  -1\big) \bG^{\a\a'} - \tfrac{(\bG g)}2  g^{\a\a'}\) \bG^{\eta\b} \del_\eta F_{\b\a} \bG^{\eta'\b'} \del_{\eta'} F_{\b'\a'} \nn\\
&\quad -2 \bG^{\a\a'} g^{\eta\b} \del_\eta F_{\b\a}  \bG^{\eta'\b'}\del_{\eta'} F_{\b'\a'} 
 +\frac{\LNC^4}{4} (\bar G g \bG)^{\mu\nu} \del_{\mu} (\hat\th F) \del_{\nu} (\hat\th F) \nn\\
&\quad  + \LNC^{4}  \del_{\nu'} (\hat\th F)
(-g^{\eta\b} \del_\eta F_{\b\a} \hat\th^{\nu'\a} + \bG^{\eta\b} \del_\eta F_{\b\a} (\frac 12 (\bG g) \hat\th^{\n'\a} -\bar\theta^{\n'\a})) \Bigg) 
+ \bar\Box_g \phi^i \bar\Box_g \phi_i 
\end{align}
using 
\begin{align}
\Box_g X^\mu &= [X_c,[(H^{cb} - \frac 12 H g^{cb}) [X_b,X^\mu]] 
= [X_\g,[(H^{\g\b} - \frac 12 H g^{\g\b}) [X_\b,X^\m]] + \cO(\vp,A)^2  \nn\\
& = \LNC^{-8} g^{\rho\b} \del_\rho F_{\b\m'} \bar\th^{\m\m'}
 + \LNC^{-8} \hat\theta^{\a\m} \bG^{\eta\mu'} \del_\eta F_{\mu'\a} 
+\frac 12\LNC^{-8} \bar G^{\m\nu'} \del_{\nu'} (\hat\th^{\a\eta} F_{\a\eta}) + \textrm{h.o.} \,,
\label{Box-X}
\end{align} 
and $\bG g\bG g\bG = (\frac 14 (\bG g)^2 -1) \bG - \frac 12 (\bG g) g$.
Note that for $\bG = g$, the YM-type terms cancel, as in \eq{2-field-collect-curv2}.

\section{Effective matrix model including curvature contributions}

We  now look for a generalization of the matrix model action \eq{vacuuum-energy-matrix} 
involving curvature terms as above, such that 
the dimension 6 operators in \eq{2-field-collect-curv1} proportional to $\L^2 = L^2\LNC^4$
 are reproduced. 
In order to guess the appropriate form, observe first 
that the curvature terms vanish on a flat background $\R^4_\theta$. 
Assuming analyticity in the $X^a$ near $\R^4_\theta$, 
this suggests the following form for the effective matrix model:
\be
\Gamma_L[X] =  -\frac 14 \Tr \Bigg(\frac{L^4}{\sqrt{-\tr J^4 + \frac 12 (\tr J^2)^2 
 + \frac 1{L^2} \cL_{10,\,\textrm{curv}}[X] + \ldots }}\Bigg) 
. \label{full-action-MM}
\ee
This should be expanded on $\R^4_\theta$ up to the required order in $A_\mu, \varphi^i$ and compared with the 
induced gauge theory action. 
Recalling \eq{char-4D} we denote 
\eqn{
4\LNC^{-8}[X] := - \tr J^4 + \frac 12 (\tr J^2)^2  \,\, &\sim \,\, 4\LNC^{-8}(x) 
 = 4\LNC^{-8} - 4\LNC^{-4} \bar\theta^{\mu\nu} F_{\mu\nu} + \ldots 
\,. }
Then
\begin{align}
\Gamma_L[X] 
&= -\frac 14 \Tr L^4 
\Big( 4 \LNC^{-8}[X] + \frac 1{L^2} \cL_{10,\,\textrm{curv}} + \ldots \Big)^{-1/2}   \nn\\
&=  -\frac 18 \Tr
\Big( L^4\LNC^{4}[X] -\frac 18  L^2  \LNC^{12}[X] \cL_{10,\,\textrm{curv}}[X] + \ldots \Big) \nn\\
&=  -\frac 18 \frac{\LNC^4}{(2\pi)^2}\intg
\Big( L^4\LNC^{4}(x) -\frac 18  L^2  \LNC^{12}(x) \cL_{10,\,\textrm{curv}}(x) + \ldots  \Big) 
. \label{curv-action-expand}
\end{align}
Observe that the scaling with $\L^2 = L^2\LNC^4$ necessarily corresponds to a term $\cL_{10,\,\textrm{curv}}[X]$
of order $X^{10}$.
As shown above, the most general order $10$ term in the matrix model has the form
\eqn{
\cL_{10,\,\textrm{curv}} &= c_1 [X_f,H^{ab}][X^f,H_{ab}]
 + c_2  \Box X^a \Box_g X^a 
 + c_3 [X^f,H][X_f,H] \nn\\
 &\quad 
 + \cL_{10,\,\textrm{boundary}} + \cL_{10,\,\textrm{cubic}} 
 + d\, H \cL_{6,A} + H \cL_{6,\,\textrm{cubic}}
\,, }
where $\cL_{10,\,\textrm{cubic}}$ are given in \eq{L10-cubic},
and $\cL_{10,\,\textrm{boundary}}$ is a combination of the total derivative terms listed in \eq{boundary-linear-terms}.
To determine the  coefficients, 
we use the expansion of this action to quadratic order in the fields as given 
in section \ref{sec:O-10-expansion},
and match it with the induced gauge theory action proportional 
to $\L^2$. 
Comparing with \eq{2-field-collect-curv1}, we indeed obtain terms with the desired structure. 
Note that the total derivatives in $\cL_{10,\,\textrm{boundary}}$ may lead to non-trivial 
quadratic contributions, due to the multiplication with $\LNC^{-12}(X)$.
Comparing their contributions \eq{boundary-linear-terms} with
the induced action, we see that $\Box (H^{ab}H_{ab} - \frac 12 H^2)$ leads to $(\bar\theta F)\Box (\bar\theta F) $
which does indeed occur, while the other boundary terms are not induced. Therefore
we set 
\be
\cL_{10,\,\textrm{boundary}} = b\, \Box \LNC^{-8}[X] = - 4b\, \LNC^{-8}\bar\Box (\bar\theta^{\mu\nu}F_{\mu\nu}) 
+ \textrm{h.o.} \,, 
\ee
so that
\eqn{
\Tr \LNC^{12}(X) \cL_{10,\,\textrm{boundary}}(X) &= 
  -4 b\, \Tr (\LNC^{12} + \frac 32 \LNC^{16}(\bar\theta^{\mu\nu}F_{\mu\nu}) + \ldots)
 \LNC^{-8}\bar\Box (\bar\theta^{\mu\nu}F_{\mu\nu})  + \textrm{h.o.} \nn\\
&= -6b\, \LNC^8 \Tr (\bar\theta F)\bar\Box (\bar\theta F)  + \textrm{h.o.}
}
Then the quadratic gauge theory action resulting from \eq{curv-action-expand} is given by
\begin{align}
\Gamma_L[X] 
&= \frac 1{64}\L^2\frac{\LNC^4}{(2\pi)^2}\intg
\Big((\bar\th F)\bar\Box(\bar\th F)(\frac 12 c_1 - \frac 14 c_2 -6b)
 + \LNC^4\bar\Box\varphi^k \bar\Box_g\varphi_k (2 c_1  + c_2) \nn\\
 & \qquad  + c_2 \bG^{\r\s}\bG^{\e\t} F_{\r\e} \bar\Box_g F_{\s\t} 
 +(-\frac 32 c_1 + \frac 12 c_2 -4 c_3) (\hat\th F)\bar\Box(\hat\th F) \nn\\
& \qquad + (\frac{c_1} 2+d)\bar H \big( \frac 12 F_{\m\n} \bar\Box F_{\r\s} \bG^{\m\r}\bG^{\n\s}  
  + \LNC^4\bar\Box\varphi^k \bar\Box\varphi_k\big)\Big) 
 , \label{effective-matrix-gauge}
\end{align}
which must coincide with $\Gamma_{\L^2}((A,\varphi)^2,p^4)$ given in 
\eq{2-field-collect-curv1}.
This yields 
\eqn{
\frac 12 c_1 - \frac 14 c_2 -6b &= \frac 13 \tr\one \,, &
2 c_1  + c_2 &= -8 \tr\one \,,  &
c_2  &= -\frac{11}3 \tr\one \,, \nn\\
-\frac 32 c_1 + \frac 12 c_2 -4 c_3 &= 0 \,, & d &= -\frac{c_1}{2}
\,, }
which has the particular solution 
\be
 c_1  = -\frac{13}6 \tr\one  \,,  \qquad c_2 = -\frac{11}3 \tr\one \,, 
\qquad  b = -\frac 1{12} \tr\one  \,,  \qquad c_3 = \frac{17}{48}\tr\one \,, 
\qquad d= \frac{13}{12} \tr\one \,, 
\ee
plus an undetermined contribution of $\cL_{10,\,\textrm{cubic}}$ and $H \cL_{6,\,\textrm{cubic}}$.

We emphasize again the non-trivial analytic structure of 
the effective matrix model \eq{full-action-MM}, which is {\em not}
adequately described by a power series in $X^a$. 
This is to be expected, since the bare matrix model \eq{YM-action} describes a very rich
spectrum of backgrounds with various dimensions. 

We have thus successfully reproduced the induced gauge theory action 
\eq{2-field-collect-curv1} proportional
to $\L^2$, using the above generalized matrix model \eq{full-action-MM}. Even though the quadratic terms are
insufficient to fully determine the action and there are less non-trivial checks compared with 
the potential term, it is non-trivial that the correct terms are 
indeed reproduced. 
It would of course be desirable to develop more efficient methods to compute the effective generalized
matrix model.
A more detailed comparison and confirmation should eventually be given, 
allowing to determine also the contributions due to $\cL_{\textrm{cubic}}$. These are in fact the 
most interesting terms form the point of view of emergent gravity, as we recall below.

\paragraph{Geometric action.}

Finally, the geometrical meaning of the action \eq{full-action-MM}
is obtained  by expanding it not around $\R^4_\theta$ but around a generic 4-dimensional
NC brane $\cM^4 \subset \R^D$ generated by $X^\mu = \bar X^\mu + \cA^\mu$, cf. \cite{Steinacker:2010rh,Steinacker:2008ri}.
Using \eq{char-4D} and the semi-classical relation
\eqn{
\Tr \cL(X) &\sim \int\!\!\frac{d^4 x}{(2\pi)^2} \sqrt{G} \LNC(x)^4 \cL(x) 
\,, }
which holds on generic 4-dimensional branes $\cM^4$,
we can write  this action in the semi-classical limit as
\begin{align}
\Gamma_L[X] 
&\sim -\frac 14 \frac{1}{(2\pi)^2}\intG L^4 \LNC^4(x)
\Big( 4 \LNC^{-8}(x) + \frac 1{L^2} \cL_{10,\,\textrm{curv}} + \ldots \Big)^{-1/2}   \nn\\
&=  -\frac 18 \frac{1}{(2\pi)^2}\intG
\Big(\L^4(x) -\frac 18  \L^2(x)  \LNC^{12}(x) \cL_{10,\,\textrm{curv}} + \ldots \Big) 
. \label{effective-matrix}
\end{align}
Here we define the effective cutoff on a curved brane as
\be
\L(x) := L \LNC^2(x), 
\ee
in analogy to \eqnref{eq:def-cutoff}.
Now we recall the results of \cite{Blaschke:2010rg,Blaschke:2010qj} 
for the special case $G_{\mu\nu} = g_{\mu\nu}$,
which arises for (anti-)selfdual $\theta^{\mu\nu}$ 
(resp. an almost-K\"ahler structure) on $\cM^4$. 
Generalized to the present case, we note that $\cL_{6,\,\textrm{cubic}}$ and 
the first term in $\cL_{10,\,\textrm{cubic}}$ \eq{L10-cubic} 
\begin{align}
\cL_{10,\,\textrm{cubic}} &\ni \frac 12 [X_f,H^{ab}][X^f,H_{ab}] 
+ (H^{cd} - \frac 12 H g^{cd}) [X_c, \Theta^{ab}] [X_d, \Theta_{ab}]
 + \Box X^a \Box_g X^a \nn\\
&\sim (-\frac 12 H^{ab} \Box H_{ab} - \Box X^a \Box_g X^a)
 + (- \Theta^{ab} \Box_g \Theta_{ab}+2\Box X^a \Box_g X^a )
\end{align}
(up to boundary terms)
  incorporate the Riemannian curvature. Indeed
the first term $\frac 12 H^{ab} \Box H_{ab} + \Box X^a \Box_g X^a$
has been shown in \cite{Blaschke:2010rg,Blaschke:2010qj} to give essentially the curvature scalar $R$,
up to contributions from the dilaton\footnote{The name dilaton is chosen because it couples to the
curvature. 
It has a specific geometrical meaning in terms of $\LNC^4(x)$ here.} $\bar\L_{\rm NC}^4 e^{-\sigma}:= \LNC^4(x)$ and boundary terms.
Similarly, 
\bea
- \Theta^{ab} \Box_g \Theta_{ab}+2\Box X^a \Box_g X^a 
&\stackrel{g=G}{\sim}& 2 \LNC^{-4} \cL_{6,{\rm cubic}} \nn\\
&\sim& \bar\L_{\rm NC}^8 e^{-2\sigma}\theta^{\mu\rho} \theta^{\eta\a} R_{\mu \rho\eta\a} - 4 \bar\L_{\rm NC}^4 e^{-\sigma} R
\quad (+ {\rm dilaton} + \del()) \nn
\eea
has only cubic or higher-order contributions in the gauge theory point of view. 
Extrapolating these results to the present case (which involves different scalar factors),
we can anticipate 
\begin{align}
\Tr\, \LNC^{12}[X]H\cL_{6,\,\textrm{cubic}} 
 &\sim  - \int\!\!\frac{d^4x}{(2\pi)^2}\sqrt{g}\, \L(x)^2 \bigg(
\frac 12\bar\L_{\rm NC}^4 e^{-\s}\theta^{\mu\rho} \theta^{\eta\a} R_{\mu \rho\eta\a} - 2  R 
+ c \del^{\mu}\s \del_\mu\s \!\bigg) , \nn\\
\Tr\, \LNC^{12}[X]\cL_{10,\,\textrm{cubic}} 
&\sim \int\!\!\frac{d^4x}{(2\pi)^2}\sqrt{g}\, \L(x)^2 \bigg(R  
+ (\bar\L_{\rm NC}^4 e^{-\s}\theta^{\mu\rho} \theta^{\eta\a} R_{\mu \rho\eta\a} - 4 R )
+ c' \del^{\mu}\s \del_\mu\s \!\bigg) ,
\label{S6-S10-geom} 
\end{align}
for $G_{\mu\nu} = g_{\mu\nu}$,
where the overall coefficients and 
the scalar field contributions $c$, $c'$ are not determined here.
These are clearly the analogs of the classical Seeley-de Witt coefficients 
corresponding to induced gravity. 
However, there are also terms such as 
$\Tr \cL_{10,C} \sim \frac{1}{(2\pi)^2}\intg \Box X^a \Box_g X_a$ which correspond to extrinsic curvature
on $\cM \subset \R^D$. 
The structure is consistent with the semi-classical results obtained in 
\cite{Klammer:2009dj}, and a more precise result should be given
once the remaining coefficients for $\cL_{\rm cubic}$ are known.

To summarize, we found that
the induced action due to fermions induces not only intrinsic curvature terms on $\cM^4$
including a generalized Einstein-Hilbert action with a dilaton,
but also terms which correspond to extrinsic curvature of the embedding 
$\cM^4 \subset \R^D$.
These might be cancelled by the contributions from the bosonic fields,
but some extrinsic terms are expected to survive, in particular because
the $\cN=4$ SUSY must be broken at low energies. Such terms would definitely 
go beyond general relativity, preferring flat vacuum geometries. 
This might be very interesting e.g. in the context of 
cosmology, however their physical relevance remains to be clarified.

\section{Remarks and conclusion}

In this paper we have computed the effective action in the Yang-Mills type matrix model 
induced upon integrating out the fermions, using a heat-kernel expansion.
There are two important new results: First, we 
point out that one should \emph{not} simply take the {\naiv} UV limit,
in contrast to (most of the) previous work on the heat-kernel expansion on NC spaces.
Rather, a specific IR condition \eq{IR-condition}  on the external momentum scales and the cutoff
should be imposed, cf.~\cite{Steinacker:2008}. Only then a non-trivial 
and robust induced effective action is obtained. The reason is that the induced action is 
entirely due to non-commutativity (for the $U(1)$ sector under consideration here) 
resp. UV/IR mixing, which becomes ill-defined
in the strict UV limit. Such a finite cutoff should be realized notably in 
maximally supersymmetric models (i.e. the IKKT model) and close relatives,
via some SUSY breaking scale. This result makes perfect sense from the 
emergent gravity point of view, which is relevant for the IR.

The second important result is that the effective action can be 
written as a generalized \emph{matrix model}, with manifest $SO(D)$ symmetry. 
We obtain an explicit form for this matrix model, which is conjectured 
to capture the full contribution from simple-commutator (potential) terms, and the 
leading contribution for the double commutator (curvature) terms. 
In particular, the geometrical insights gained from the geometrical point of view of
emergent gravity allow to correctly predict the potential part of this effective action,
thus predicting a series of highly non-trivial loop computations. 
This is very remarkable from the gauge theory point of view. It suggests that 
the effective non-Abelian gauge theory action induced on coinciding branes can also be computed 
efficiently by taking advantage of the full $SO(D)$ symmetry in the matrix model. 
Thus generalized effective matrix models should provide a powerful new tool
in the context of matrix models, gauge theory and gravity. 

Finally, the effective $SO(D)$ symmetric matrix model can be translated into the geometrical 
action for emergent gravity. This is an essential step towards a physical understanding of the
emergent gravity which arises on generic 4D branes in this model. Supplemented with 
the contributions induced by the bosonic modes, a systematic analysis of the resulting 
physics should become possible.

The results presented here clearly are incomplete, and a much more systematic study of the 
induced effective matrix model should be carried out. While the potential 
terms have passed highly non-trivial checks, the curvature terms were only marginally
resp. incompletely determined, 
and should be determined and verified in a more complete computation. 
Many aspects require a more detailed 
understanding. For example,
the analysis of the potential \eq{vacuuum-energy-matrix} in the 
effective matrix model
was based on the 4D relation \eq{char-4D}. This relation was derived only in the 
semi-classical limit, and is not expected to hold in the fully NC case. 
A related point is that the matrix model also admits backgrounds with different dimensions, 
around which the effective action would look very different. Therefore, the 
effective matrix model \eq{full-action-MM} should be seen as ``leading part'' of some more
complete, ``universal'' effective action which holds for any background. These are only some 
issues in an interesting new line of research.

\subsection*{Acknowledgements}

We would like to thank H. Grosse, F. Lizzi,  N. Sasakura, A. Schenkel and R. Szabo for useful discussions. 
This work was supported by the Austrian Science Fund (FWF) under contracts 
P21610-N16, 
P20507-N16 and P20017-N16. 

\startappendix
\Appendix{Perturbative expansion of the heat kernel}
\label{app:regularization}

Consider the one-loop effective action in terms of 
the gauge field $A$ (cf~\cite{Steinacker:2008}, App. B):
\[
\Gamma = 
\frac 12 \Tr \Big(\log\frac 12\Delta_{A}  - \log\frac 12\Delta_0\Big)
\equiv\,\, -\frac 12\Tr\int_{0}^{\infty} \frac{d\a}{\a}\,
\Big(e^{-\a\frac 12\Delta_{A}} - e^{-\a\frac 12\Delta_0 }\Big)\, 
e^{- \frac 1{\a L^2}} 
\]
where the small $\a$ divergence is regularized as in \eq{TrLog-id}
using a UV cutoff $ L$.
To obtain the expansion in $A$ we use the Duhamel formula 
(cf. \cite{Grosse:2007})
\begin{align}
& \left(e^{-\a H}  - e^{-\a H_0} \right) = \nn\\*
&= -  \int_0^\a d t_1 e^{-t_1 H_0} V e^{-(\a-t_1) H_0} 
+ \int_0^\a d t_1 \int_0^{t_1} d t_2 
e^{-t_2 H_0} V e^{-(t_1-t_2) H_0} V e^{-(\a-t_1) H_0} \nn\\
&\quad - \int_0^\a d t_1 \int_0^{t_1} d t_2 \int_0^{t_2}dt_3 e^{-t_3 H_0} V e^{-(t_2-t_3) H_0} V e^{-(t_1-t_2) H_0} V e^{-(\a-t_1) H_0} \nn\\
&\quad + \int\limits_0^\a\!dt_1 \int\limits_0^{t_1}\!dt_2 \int\limits_0^{t_2}\!dt_3\int\limits_0^{t_3}\!dt_4 e^{-t_4 H_0} V e^{-(t_3-t_4) H_0} V e^{-(t_2-t_3) H_0} V e^{-(t_1-t_2) H_0} V e^{-(\a-t_1) H_0} + \dots
\end{align}
where $H = H_0 + V$. 
The second-order term can be written as 
\begin{align}
& \int_0^\infty \frac{d\a}{\a}\,\int_0^\a d t_1 \int_0^{t_1} d t_2 
\Tr\Big(e^{-t_2 H_0} V e^{-(t_1-t_2) H_0} V e^{-(\a-t_1) H_0}\Big)\, 
e^{- \frac 1{\a L^2}} \nn\\
& = \int_0^\infty \frac{d\a}{\a}\, \int_0^{\a} d t'\,\int_{t'}^\a dt_1 
\Tr\Big( V e^{-t' H_0} V e^{-(\a-t') H_0}\Big)\, 
e^{- \frac 1{\a L^2}} \nn\\
& = \int_0^\infty \frac{d\a}{\a}\, \int_0^{\a} d t''\,t''
\Tr\Big(V e^{-t'' H_0} V e^{-(\a-t'') H_0} \Big)\, 
e^{- \inv{\a L^2}} \,,
\end{align}
where $t' = t_1-t_2$ and $t'' = \a-t'$. Combining the two last 
lines we obtain
\begin{align}
& \int_0^\infty \frac{d\a}{\a}\,\int_0^\a d t_1 \int_0^{t_1} d t_2 
\Tr\Big(e^{-t_2 H_0} V e^{-(t_1-t_2) H_0} V e^{-(\a-t_1) H_0}\Big)\, 
e^{- \frac 1{\a L^2}} \nn\\
& = \frac {1}2\,\int_0^\infty d\a\, \int_0^{\a} d t''
\Tr\Big(V e^{-t'' H_0} V e^{-(\a-t'') H_0} \Big)\, 
e^{- \frac 1{\a L^2}} \,.
\end{align}
Hence one finds
\begin{align}
\Gamma  &= 
\inv2\, \int_0^\infty d\a\, \Tr (V e^{-\a H_0}) e^{- \frac 1{\a L^2}}
- \inv4 \int_0^\infty d\a \int_0^\a dt'\,\,
    \Tr \Big(V e^{-t'H_0} V e^{-(\a-t')H_0}\Big) e^{- \frac 1{\a L^2}} \nn\\
 &\quad + \inv2  \int\limits_0^\infty\frac{d\a}{\a}\int\limits_0^\a dt'\int\limits_0^{t'}dt'' t'' \,\Tr \Big(V e^{-t''H_0} V e^{-(t'-t'')H_0} V e^{-(\a-t')H_0}\Big) e^{- \inv{\a L^2}} \nn\\
 &\quad - \inv2 \int\limits_0^\infty\frac{d\a}{\a}\int\limits_0^\a\!dt'\int\limits_0^{t'}\!dt'' \int\limits_0^{t''}\!dt''' t''' \Tr \Big(V e^{-t'''H_0} V e^{-(t''-t''')H_0} V e^{-(t'-t'')H_0} V e^{-(\a-t')H_0}\Big) e^{\frac{-1}{\a L^2}} \nn\\
 &\quad + \ldots  \label{general-eff-action}
\end{align}
While gauge invariance is typically not preserved by this expansion, 
it must be recovered upon collecting all contributions at any given order in  $\L$. 
It may also be convenient to introduce a test function $f$, cf. ~\cite{Vassilevich:2003xt}.

\Appendix{Heat kernel expansion: four field contributions}
\label{app:heatkernel-four-field}
In order to compute the four field contributions, a heat kernel expansion up to fourth order is required. Nonetheless, we would like to state here a partial result coming from the second order in the expansion in order to demonstrate which type of terms are likely to appear. 
Following the lines of \secref{sec:order-by-order} the four field contributions of \eqref{eq:second-order-terms-compact} compute to:
\begin{align}
&\int\limits_0^\infty\! d\a \int\limits_0^\a\! dt\,\,
\Tr\left(V e^{-tH_0}Ve^{-(\a-t)H_0}\right)e^{- \frac 1{\a L^2}}\Big|_{\textrm{4-fields}} \nn\\*
&\approx \frac{\tr\one\sqrt{\bG}}{4\LNC^4}\int\!\frac{d^4Q\,d^4l\,d^4k}{(2\pi\LNC^2)^6}\Bigg\{ 
  \! \Bigg(\! \bG^{\m\n}\bG^{\r\s}A_\m(Q-l)A_\r(-Q-k)A_\n(l)A_\s(k) \nn\\*
&\quad \qquad +\vph^i(Q-l)\vph^j(-Q-k)\vph_i(l)\vph_j(k)+\bG^{\m\n}A_\m(Q-l)A_\n(l)\vph^i(-Q-k)\vph_i(k) \Bigg)\!\times \nn\\*
&\quad \times\! \Bigg(\! \L^2 \bigg( \lt \cdot (\lt +\Qt ) 
  \!\left(\cos \!\left(\tfrac{(k \th  Q)}{2} \right)\!-\!1\!\right)
   -\Qt \!\cdot\! \Qt \sin \!\left(\tfrac{ (2 k-l) \th  Q}{4}\right)\!\sin \!\left(\tfrac{(l \th  Q)}{4} \right) \nn\\*
&\qquad\qquad - \left(2 \kt \cdot (\kt +\Qt )+\Qt \cdot
   \Qt \right) \sin^2\! \left(\tfrac{(l \th Q)}{4} \right)\bigg) \nn\\*
&\quad\qquad +16 \sin^2\!\left(\tfrac{(k \th  Q)}{4} \right) \sin^2\!\left(\tfrac{(l \th  Q)}{4} \right) 
   \left(\ln \!\left(\!\frac{\L^2}{Q\cdot Q}\!\right)-2 \gamma_E +2\right)\!\! \Bigg)  
  + \left(\!\!\begin{array}{c} Q\to-Q \\ l\leftrightarrow k \end{array}\!\!\right) + \cO\!\left(\!\frac{Q^4}{\L^4}\!\right) \Bigg\}
. \label{4-field-collect}
\end{align}

\Appendix{Supplemental computations for the effective matrix model action}
\label{app:supl-calc}
\SubAppendix{Proof of Lemma \ref{lemma-1}}{app:proof-of-lemma-1}

\eqnref{FF-theta} can be shown using the fact that 
$\hat\th^{ij} \hat\th^{kl} - \hat\th^{il} \hat\th^{kj} - \hat\th^{lj} \hat\th^{ki}$ 
is totally anti-symmetric  (cf. \cite{Steinacker:2007dq}).
We verify the tensor identity \eq{strange-id} by elaborating both sides in a
local basis where $\bar\th^{\mu\nu}$ has canonical form 
\eq{theta-standard-general-E} and $g_{\mu\nu} = \d_{\mu\nu}$. Then
\eqn{
(\bG g) & = \diag(\a^2,\a^2,\a^{-2},\a^{-2}) \,, \nn\\
\bG g  - \frac 14 (\bG g)\d &= \frac 12\diag(\e,\e,-\e,-\e)\,, \qquad 
\epsilon = \a^2-\a^{-2}
\,. \label{Gg-1}
}
Now the lhs of \eq{strange-id} is
\begin{align}
(\textrm{lhs}) &= \frac \e2(\a^4 F_{12} F_{12} + \a^4 F_{21} F_{21}) -
\frac \e2 (\a^{-4} F_{34} F_{34} + \a^{-4} F_{43} F_{43})
- \frac \e2( F_{13} F_{13} - F_{31} F_{31}) \nn\\
& \quad
- \frac \e2( F_{14} F_{14} - F_{41} F_{41})
- \frac \e2( F_{23} F_{23} - F_{32} F_{32})
- \frac \e2( F_{24} F_{24} - F_{42} F_{42})
\nn\\
&= \e (\a^4 F_{12} F_{12} - \a^{-4} F_{34} F_{34})
\,. 
\end{align}
This agrees with the rhs of \eq{strange-id}, since
\eqn{
(\bar\th F)^2   &= 4 \LNC^{-4} (\a F_{12} \pm \a^{-1} F_{34})^2  \,,\nn\\
(\hat\th F)^2   &= 4 \LNC^{-4} (\a^3 F_{12} \pm \a^{-3} F_{34})^2  \,,\nn\\
(\hat\th F)^2 - (\bar\th F)^2  &= - 4 \LNC^{-4} (\a F_{12} \pm \a^{-1} F_{34})^2  
 + 4 \LNC^{-4} (\a^3 F_{12} \pm \a^{-3} F_{34})^2  \nn\\
&= 4 \LNC^{-4} \e (\a^4 F_{12} F_{12} - \a^{-4} F_{34} F_{34}) 
\,.
}
Now consider \eq{strange-id-2}. It is easy to see using \eq{Gg-1} that the
lhs vanishes (assuming again that $\bar\theta^{\mu\nu}$ has canonical form)
if $\mu$ and $\nu$ live in different blocks $(1,2)$ resp. $(3,2)$.
Furthermore, $\th^{\mu\mu'} F_{\mu'\nu}$ vanishes if $\mu\neq\nu$ live in the same 
blocks $(1,2)$ resp. $(3,2)$. Hence it is sufficient to check the relation for 
the diagonal elements $\mu=\nu$, which can be checked explicitly using
e.g. 
\eqn{
(G F \hat \th)^{11} &= (G F \hat \th)^{22} =  (\a^5-\a)F_{12} \,,  \nn\\
\qquad (G F \hat \th)^{11} &= (G F \hat \th)^{22} = \pm (\a^{-5}-\a^{-1})F_{34}
\,, 
}
which agrees with the rhs of \eq{strange-id-2}.

\SubAppendix{Proof of Lemma \ref{lemma-2}}{app:proof-of-lemma-2}

Using the Bianchi identity and partial integration, we have
\begin{align}
&  \bG^{\mu\nu}\del_{\mu} F_{\nu\g} g^{\rho\b} \del_\rho F_{\b\m'} H^{\g\m'}  
=  \bG^{\mu\nu} \del_\rho F_{\nu\g} g^{\rho\b} \del_{\mu} F_{\b\m'} H^{\g\m'}  \quad + \del()    \nn\\
&= - \bG^{\mu\nu} (\del_\nu F_{\g\rho} +  \del_\g F_{\rho\nu}) g^{\rho\b} \del_{\mu} F_{\b\m'} H^{\g\m'}   \quad + \del()   \nn\\
&=   \( F_{\g\rho} g^{\rho\b} (\bG^{\mu\nu} \del_\nu\del_{\mu}) F_{\b\m'} H^{\g\m'}   
 - \bG^{\mu\nu} \del_{\mu} F_{\rho\nu} g^{\rho\b} \del_\g F_{\b\m'} H^{\g\m'}   \)  \quad + \del()  \nn\\
&= \( F_{\g\rho} g^{\rho\b} (\bG^{\mu\nu} \del_\nu\del_{\mu}) F_{\b\m'} H^{\g\m'}   
 - \bG^{\mu\nu} \del_{\mu} F_{\nu\g} H^{\rho\b}\del_\rho F_{\b\mu'}  g^{\g\mu'}   \) \quad + \del()  
\end{align}
for any symmetric matrix $H^{\mu\nu}$. 
Hence we get the identities
\eqn{
 \bG^{\mu\nu}\del_{\mu} F_{\nu\g} (g^{\rho\b} \del_\rho F_{\b\m'}\bG^{\g\m'}  + \bG^{\rho\b}\del_\rho F_{\b\mu'}  g^{\g\mu'})
 &= -   (\bG^{\mu\nu} \del_\nu\del_{\mu}) F_{\rho\g}  F_{\b\m'} g^{\rho\b} \bG^{\g\m'} \quad + \del() \,, \nn\\
 2 g^{\mu\nu}\del_{\mu} F_{\nu\g} \bG^{\rho\b} \del_\rho F_{\b\m'}\bG^{\g\m'}  
 &= -  (g^{\mu\nu} \del_\nu\del_{\mu}) F_{\rho\g}  F_{\b\m'} \bG^{\rho\b} \bG^{\g\m'}  \quad + \del()  
 \,. \nn }
Combining the first two gives \eq{FFbox-id-2}.

\SubAppendix{Computation for \eqnref{S-10-curv}}{app:S-10-curv}
{\allowdisplaybreaks
Using \eqref{H-explicit-fluct}, we find
\begin{align}
& \co{X_f}{H^{ab}}\co{X^f}{H_{ab}}
=\LNC^{-4}\left(\co{\varphi^i}{H^{ab}}\co{\varphi_i}{H_{ab}}-\bG^{\m\n}D_\m H^{ab}D_\n H_{ab}\right) \nn\\*
&= \LNC^{-12}g_{\m\r}g_{\n\s}\co{\varphi^i}{\bG^{\m\a}\bar\th^{\n\n'}F_{\n'\a}+\m\leftrightarrow\n}\co{\varphi_i}{\bG^{\r\b}\bar\th^{\s\s'}F_{\s'\b}+\r\leftrightarrow\s} \nn\\
&\quad - \LNC^{-12}g_{\m\r}g_{\n\s}\bG^{\a\b}D_\a \left(\bG^{\m\a}\bar\th^{\n\n'}F_{\n'\a}+\m\leftrightarrow\n\right)
 D_\b\left(\bG^{\r\b}\bar\th^{\s\s'}F_{\s'\b}+\r\leftrightarrow\s\right) \nn\\
&\quad + 2\LNC^{-16}(\bG g\bG)^{\e\t}\left(\co{\varphi^i}{D_\e\varphi^k}\co{\varphi_i}{D_\t\varphi_k}-\bG^{\m\n}D_\m D_\e\varphi^kD_\n D_\t\varphi_k\right) +\textrm{h.o.} \nn\\
&= 2\LNC^{-12}\left(\co{\varphi^i}{F_{\m\n}}\co{\varphi_i}{F_{\r\s}}-\bG^{\a\b}D_\a F_{\m\n}D_\b F_{\r\s}\right)\left(\LNC^{-4}(\bG g\bG)^{\m\r}\bG^{\n\s}+\hat\th^{\m\r}\hat\th^{\n\s}\right) \nn\\*
&\quad + 2\LNC^{-16}(\bG g\bG)^{\e\t}\left(\co{\varphi^i}{D_\e\varphi^k}\co{\varphi_i}{D_\t\varphi_k}
  -\bG^{\m\n}D_\m D_\e\varphi^kD_\n D_\t\varphi_k\right) +\textrm{h.o.} 
\,.
\end{align}
Thus
\begin{align}
& \Tr \co{X_f}{H^{ab}}\co{X^f}{H_{ab}} \nn\\*
&= \frac2{\LNC^{8}}\int\!\!\frac{d^4x}{(2\pi)^2} \sqrt{\bG} 
\left(\bar\Box\varphi^k (\bG g\bG)^{\e\t} D_\e D_\t\varphi_k
 - F_{\m\n} \bar\Box F_{\r\s} (\bG g\bG)^{\m\r}\bG^{\n\s}
 - \LNC^{4}  F_{\m\n} \hat\th^{\n\s}\bar\Box F_{\s\r}\hat\th^{\r\m}\!\right) 
  \nn\\* & \quad\qquad +\textrm{h.o.} \,,  \nn\\
&= 2\LNC^{-8}\int\!\!\frac{d^4x}{(2\pi)^2} \sqrt{\bG} 
\Big(\frac 14 \LNC^4 \bar H  F_{\m\n} \bar\Box F_{\r\s} \bG^{\m\r}\bG^{\n\s}
 - \frac 12 \LNC^{4} (\hat\th F) \bar\Box(\hat\th F) \nn\\*
& \quad   -\frac{\LNC^4}4 \left(\!(\hat\th F)\bar\Box(\hat\th F) 
- (\bar\th F)\bar\Box(\bar\th F)\!\right)
 + \bar\Box\varphi^k \big(\LNC^8\bar\Box_g\varphi_k-\frac 12 \LNC^4 (\bG g) \bar\Box\varphi_k \big)  \Big)  +\textrm{h.o.} ,  
\end{align}
using \eq{FF-theta} and \eq{strange-id} (multiplied by $q\cdot q$ in momentum representation) 
where $\bar H\sim -(\bG g)$. 
}

{\allowdisplaybreaks
Similarly, using \eqref{H-explicit-fluct} and $\bG\bar\th^{-1}\bG=\LNC^4\hat\th$, we find
\begin{align}
& H^{cd}\co{X_c}{J^a_{\ b}}\co{X_d}{J^b_{\ a}}=H^{cd}\co{X_c}{\co{X^a}{X_b}}\co{X_d}{\co{X^b}{X_a}} 
=H^{cd}\co{X_c}{\Th^{ab}}\co{X_d}{\Th_{ab}} \nn\\*
&= H^{\m\n}\co{X_\m}{\Th^{ab}}\co{X_\n}{\Th_{ab}} + 2H^{\m i}\co{X_\m}{\Th^{ab}}\co{\phi_i}{\Th_{ab}} + H^{ij}\co{\phi_i}{\Th^{ab}}\co{\phi_j}{\Th_{ab}} \nn\\
&= \LNC^{-16}(\bG g\bG)^{\m\n}\bG^{\r\s}\left(\bG^{\e\t}D_\m F_{\r\e}D_\n F_{\s\t} 
 + 2D_\m D_\r\varphi^i D_\n D_\s\varphi_i \right) + \textrm{h.o.}
\,. 
\end{align}
In particular, 
}
\begin{align}
& (H^{cd}- \inv2 H g^{cd})\co{X_c}{\Th^{ab}}\co{X_d}{\Th_{ab}} \nn\\
&\sim \LNC^{-16}((\bG g\bG) - \frac 12 (\bG g)G)^{\m\n}\bG^{\r\s}\left(\bG^{\e\t}D_\m F_{\r\e}D_\n F_{\s\t} 
 + 2D_\m D_\r\varphi^i D_\n D_\s\varphi_i \right) + \textrm{h.o.}     \nn\\
&= -\LNC^{-16} g^{\m\n}\bG^{\r\s}\left(\bG^{\e\t}D_\m F_{\r\e}D_\n F_{\s\t} 
 + 2D_\m D_\r\varphi^i D_\n D_\s\varphi_i \right) + \textrm{h.o.}    
\label{thetaboxtheta-explicit}
\end{align}
Thus
\begin{align}
& \Tr (H^{cd}- \inv2 H g^{cd})\co{X_c}{\Th^{ab}}\co{X_d}{\Th_{ab}} \nn\\
& = \LNC^{-12}\int\!\!\frac{d^4x}{(2\pi)^2} \sqrt{\bG} \Big(\bG^{\r\s}\bG^{\e\t} F_{\r\e} (g^{\m\n}D_\m D_\n F_{\s\t}) 
 + 2 \LNC^{4} g^{\m\n} D_\m D_\n\varphi^i  \bar\Box\varphi_i \Big) + \textrm{h.o.}   \nn\\
&= \LNC^{-12}\int\!\!\frac{d^4x}{(2\pi)^2} \sqrt{\bG} 
  \Big(-\LNC^8 \bG^{\r\s}\bG^{\e\t} F_{\r\e} \bar\Box_g F_{\s\t} 
 - 2\LNC^{12}\bar\Box_g\varphi^i \bar\Box\varphi_i \Big) + \textrm{h.o.}  
\end{align}
which is consistent with the $\cO(X^6)$ results \eq{O6-quadratic}.


Furthermore, a straightforward computation gives
\begin{align}
\Box X^a \Box_g X^a 
&=  \LNC^{-16} \bG^{\mu\nu}\del_{\mu} F_{\nu\g}
 \Big( g^{\rho\b} \del_\rho F_{\b\m'} \bG^{\g\m'}
 + (GgG)^{\g\b} \bG^{\eta\mu'} \del_\eta F_{\mu'\b} 
 -\frac{\LNC^4}2 \hat\theta^{\g\nu'} \del_{\nu'} (\hat\th^{\b\eta} F_{\b\eta}) \! \Big)  \nn\\
&\quad + \Box_g \phi^i \bar\Box_g \phi_i 
\,. 
\end{align}
Under the integral resp. trace this can be rewritten as
\begin{align}
\Tr \Box X^a \Box_g X^a 
&= \Tr \LNC^{-16} \bG^{\mu\nu}\del_{\mu} F_{\nu\g}
 \Big( g^{\rho\b} \del_\rho F_{\b\m'} \bG^{\g\m'}
 +  \bG^{\rho\b} \del_\rho F_{\b\mu'} (GgG)^{\g\mu'} \Big) \nn\\
 &\quad  +\frac 14 \LNC^{-12}  \bG^{\mu\nu} \del_{\nu} (\hat\theta F)\del_{\mu}(\hat\th F) 
   + \Box_g \phi^i \bar\Box_g \phi_i 
\,, 
\end{align}
noting that the Bianchi identity gives
\be
\hat\theta^{\g\nu'} \del_{\nu'}F_{\nu\g} = -\frac 12  \del_{\nu} (\hat\theta^{\g\nu'}F_{\g\nu'}) 
\,. 
\ee
Now we can use the second relation \eq{FFbox-id-2} of Lemma \ref{lemma-2}, 
replacing $g$ by $(GgG)$. 
This gives after some (by now standard) manipulations
{\allowdisplaybreaks
\begin{align}
& \Tr \Box X^a \Box_g X^a \nn\\*
&=  \Tr \LNC^{-8} 
 \Big( \bar\Box_g F_{\rho\g}  F_{\b\m'} \bG^{\rho\b} \bG^{\g\m'} 
 + \frac 12(\hat\theta F) \bar\Box (\hat\theta F) 
 -  \frac 14(\bar\theta F)\bar\Box(\bar\theta F)
  + \LNC^4 \Box_g \varphi^i \bar\Box_g \varphi_i  \Big) .
\end{align}
The other terms of \eqref{S-10-curv} can be computed along the same lines.
}


\bibliographystyle{../custom1.bst}
\bibliography{../articles.bib,../books.bib}

\end{document}